\newcommand{\revs}[1]{{#1}}
\def\be{\mathbf e}
\def\CC{{\mathbb C}}
\def\LL{{\mathbb L}}
\def\RR{{\mathbb R}}
\def\PP{{\mathbb P}}
\def\be{{\mathbf e}}
\def\ee{{\mathrm e}}
\def\ii{{\sqrt{-1}}}
\def\cC{{\mathcal C}}
\def\cD{{\mathcal D}}
\def\cG{{\mathcal G}}
\def\cK{{\mathcal K}}
\def\cL{{\mathcal L}}
\def\cM{{\mathcal M}}
\def\cO{{\mathcal O}}
\def\cV{{\mathcal V}}
\def\fL{{\mathfrak L}}
\def\fX{{\mathfrak X}}
\def\fa{{\mathfrak a}}
\def\fb{{\mathfrak b}}
\def\fc{{\mathfrak c}}
\def\ft{{\mathfrak t}}
\def\ri{{\mathrm i}}
\def\rr{{\mathrm r}}
\def\tA{{\widetilde A}}
\def\tE{{\widetilde E}}
\def\tK{{\widetilde K}}
\def\tX{{\widetilde X}}
\def\tS{{\widetilde S}}
\def\tX{{\widetilde X}}
\def\tv{{\widetilde v}}
\def\tx{{\widetilde x}}
\def\tgamma{{\widetilde \gamma}}
\def\tvarphi{{\widetilde \varphi}}
\def\tfS{{\widetilde{\mathfrak S}}}
\def\tcV{{\widetilde{\mathcal V}}}
\def\tSS{{{\widetilde{\mathbb S}}}}
\def\hv{{\widehat v}}
\def\hB{{\widehat B}}
\def\hJ{{\widehat J}}
\def\hX{{\widehat X}}
\def\hkappa{{\widehat \kappa}}
\def\hnu{{\widehat \nu}}
\def\hvarpi{{\widehat \varpi}}
\def\hGamma{{\widehat \Gamma}}
\def\cn{{\mathrm {cn}}}
\def\sn{{\mathrm {sn}}}
\def\dn{{\mathrm {dn}}}
\def\al{{\mathrm {al}}}
\def\qed{\hbox{\vrule height6pt width3pt depth0pt}}
\def\nuI#1{{\nu_{#1}}}
\def\trp{{\, {}^t\negthinspace}}
\def\book#1{\rm{#1}, }
\def\paper#1{\textit{#1}, }
\def\jour#1{\rm{#1}, }
\def\yr#1{({\rm{#1}) }}
\def\vol#1{\textbf{#1}}
\def\pages#1{\rm{#1}}
\def\publaddr#1{\rm{#1}, }
\def\publ#1{\rm{#1}, }
\def\by#1{{\rm{#1}, }}
\newtheorem{theorem}{Theorem}[section]
\newtheorem{proposition}[theorem]{Proposition}
\newtheorem{remark}[theorem]{Remark}
\newtheorem{lemma}[theorem]{Lemma}
\newtheorem{assumption}[theorem]{Assumption}
\newtheorem{condition}[theorem]{Condition}
\def\book#1{\rm{#1}, }
\def\paper#1{\textit{#1}, }
\def\jour#1{\rm{#1}, }
\def\yr#1{({\rm{#1}) }}
\def\vol#1{\textbf{#1}}
\def\pages#1{\rm{#1}}
\def\publaddr#1{\rm{#1}, }
\def\publ#1{\rm{#1}, }
\def\by#1{{\rm{#1}, }}
\begin{document}


\title[On real hyperelliptic solutions of focusing modified KdV equation]{On real hyperelliptic solutions of focusing modified KdV equation}

\author{Shigeki Matsutani}
%

\date{\today}

\begin{abstract}
We study the real hyperelliptic solutions of the focusing modified KdV (MKdV) equation of the genus three.
Since the complex hyperelliptic solutions of the focusing MKdV equation over $\CC$ are associated with the real gauged MKdV equation, we present a novel construction of the real hyperelliptic solutions of the gauged MKdV equation.
When the gauge field is constant, it can be regarded as the real solution of the focusing MKdV equation, and thus we also discuss the behavior of the gauge field numerically.
\end{abstract}

\keywords{modified KdV equation, real hyperelliptic solutions, hyperelliptic curves, focusing MKdV equation}



\maketitle
\section{Introduction}\label{sec:1}

\revs{
It is known that the modified Korteweg-de Vries (MKdV) equation is given by
\begin{equation}
\partial_t q \pm 6q^2 \partial_s q +\partial_s^3 q =0,
\label{eq:MKdV1}
\end{equation}
where $\partial_t := \partial/\partial t$, $\partial_s := \partial/\partial s$, and $t$ and $s$ are the real axes \cite{AS}.
Here, the $+$ case in (\ref{eq:MKdV1}) is called the focusing MKdV equation, while $-$ is called the defocusing MKdV equation \cite{ZakharovShabat}.
Although the MKdV equation is connected with geometry as a helical curve by G. Lamb Jr \cite{Lamb} related to the nonlinear Schr\"odinger equation, the study directly related to this paper was opened by Konno, Ichikawa and Wadati \cite{KIW}.
Konno, Ichikawa and Wadati found a novel geometrical integrable differential equation by studying the inverse scattering method \cite{GGKM}, whose stationary solution reproduced the shape of Euler's elastica \cite{Euler44}.
They called its soliton solution \emph{loop soliton}, and thus the equation is  also referred to \emph{loop soliton equation}.
Following their study, Ishimori revealed the connection between the loop soliton equation and the focusing MKdV equation \cite{Ishimori, Ishimori2}.
The loop soliton equation is regarded as the equation for the deformation of a real curve whose tangential angle $\theta$ obeys the equation,
\begin{equation}
\partial_{t}\theta
           +\frac{1}{2}
\left(\partial_{s} \theta \right)^3
+\partial_{s}^3 \theta=0.
\label{4eq:rMKdV2_Ishi}
\end{equation}
Ishimori referred to (\ref{4eq:rMKdV2_Ishi}) as the MKdV equation because $q=\partial_s \theta/2$ obviously obeys the MKdV equation (\ref{eq:MKdV1}).
}

\revs{
It is obvious that for the solution $\theta(s,t)$ of (\ref{4eq:rMKdV2_Ishi}), $\phi(s,t):=\theta(s-at, t)$ obeys 
\begin{equation}
(\partial_{t}+\alpha \partial_s)\phi
           +\frac{1}{8}
\left(\partial_{s} \phi \right)^3
+\frac{1}{4}\partial_{s}^3 \phi=0,
\label{4eq:rMKdV2}
\end{equation}
where $\alpha$ is a real parameter.
(\ref{4eq:rMKdV2}) is essentially the same as (\ref{4eq:rMKdV2_Ishi}).
In this paper we refer to (\ref{4eq:rMKdV2}) as the focusing pre-MKdV equation.
}

\revs{
Independently, Goldstein and Petrich found that the iso-metric deformation of a real curve on a plane is connected with the recursion relations of the focusing MKdV hierarchy, and the curve whose curvature obeys the focusing MKdV hierarchy is an iso-metric deformation \cite{GoldsteinPetrich1}.
In other words, the loop soliton equation was rediscovered as an iso-metric deformation of curves on a plane, although it could be seen as a revival of the study \cite{Lamb}.
Previato investigated the iso-metric deformation from points of view of the study of the integrable system \cite{P}, and the author found that the Goldstein-Petrich scheme \cite{GoldsteinPetrich1} plays a crucial role in the statistical mechanics of the elastiae \cite{Mat97}.
Since the plane curve obeying (\ref{4eq:rMKdV2_Ishi}) is regarded as a generalization of Euler's elastica and the partition function in the statistical mechanics is interpreted as the functional integral in the quantum field theory, we sometimes called it quantized elastica \cite{Mat10, M24a}.
(The origin of the deformation was in a crucial problem of the compact Riemann surface \cite{Tj}  and a deformation quantization \cite{Brylinski} as mentioned in \cite{MP16}.)
} 

\revs{
As in \cite{MP22, M24a, Mat97, Mat02b}, to find the hyperelliptic solutions of the focusing pre-MKdV equation is crucial to reproduce the shapes of supercoiled DNA in observed in laboratories.
It means a fascinating and beautiful relationship between modern mathematics and life sciences.
The author and Emma Previato decided to solve this problem (the generalized elastica problem) in 2004 based on the papers \cite{Mat02b, P}.
The author has developed the tools together with co-authors \cite{Mat02b, MP15, MP22, EEMOP08, EHKKLS, EMO08, KMP22} based on Baker's approach and a similar movement \cite{Baker97, BEL97b, BEL20, BuMi2}.
To solve the problem, they considered that a novel approach was needed that directly connects the algebraic curves, rather than the theta-function approach \cite{P0}.
They have spent two decades refining and reconstructing Abelian function theory, including hyperelliptic function theory, as problems in algebraic geometry \cite{Mat10, MP22, M24}.
}

In the previous paper \cite{MP22}, the author with Emma Previato investigated hyperelliptic representations of loop solitons, i.e., the real transcendental curves on a plane whose tangential angles obey the focusing \revs{pre-}MKdV equation over the real number field $\RR$.
The purpose of this paper is to proceed the project.

For a hyperelliptic curve $X_g$ given by $y^2 \revs{+} (x-b_0)(x-b_1)\cdots(x-b_{2g})=0$ for $b_i \in \CC$, due to Baker \cite{Baker03, BEL97b, BuMi2, Mat01a}, we find that the hyperelliptic solution of KdV equation as $\wp_{gg}(u)=x_1+ \cdots +x_g$ for $(x_1, \ldots, x_g) \in S^g X_g$ ($g$-th symmetric product of $X_g$) as a function of $u \in \CC^g$ by the \revs{Abel-Jacobi map $v: S^g X_g \to J_X$ for the Jacobi variety $J_X$}, $u=v((x_1,y_2), \ldots, (x_g, y_g))$, i.e., 
\begin{equation}
   4\lambda_{2g} \partial_{u_g} \wp_{gg} + 4 \partial_{u_{g-1}} \wp_{gg}
\revs{+}\partial_{u_g}^3\wp_{gg}\revs{-}12\wp_{gg}\partial_{u_g}\wp_{gg}\revs{=0},
\label{eq:KdV1}
\end{equation}
where $\lambda_{2g}=-\displaystyle{\sum_{i=0}^{2g} b_i}$ is a constant, and $\partial_{u_i}:=\partial/\partial u_i$.
\revs{
Here we have used the curve $y^2=-f(x)$ instead of $z^2=f(x)$ for later convenience, but the two are simply connected by $y=\ii z$ and thus the parameter in the Jacobian is also given by the multiplication of $\ii$.
}
Using the similar consideration, we can also find the hyperelliptic solution of the MKdV equation over $\CC$ \cite{Mat02b}.

The real hyperelliptic solutions of KdV equation has been studied well \cite{BBEIM}.
Roughly speaking, we consider a compact Riemann surface \revs{$Y_g$} of genus $g$ with an anti-holomorphic involution \revs{$\tau_A: Y_g\to Y_g$}, $\tau_A^2 =1$, such that the fixed point of $\tau_A$ forms $g+1$ ovals; 
such \revs{$Y_g$} is called an M-curve.
Since the KdV equation is connected with hyperelliptic curves as in (\ref{eq:KdV1}), we deal with the M-curve with hyperelliptic involution, which is given by
\begin{equation}
\revs{Y_g^\RR}: y^2 \revs{+} (x-a_0)(x-a_1)\cdots(x-a_{2g})\revs{=0}, 
\label{eq:XgR}
\end{equation}
where $a_i$ are mutually disjoint real values, i.e. $a_j \in \RR$.
By restricting the points $((x_1,y_2), \ldots, (x_g, y_g))$ in $S^g \revs{Y_g^\RR}$ such that $x_i \in \RR$ and $u\in \RR^g$, it is easily to find the real hyperelliptic solutions of the KdV equation due to (\ref{eq:KdV1})

With the help of the Miura map it is not difficult to find the real hyperelliptic solutions of the defocusing MKdV equation \cite{Mat02c},\revs{
\begin{equation}
\partial_{u_{g-1}}\mu
\revs{+}\frac{1}{2}(\lambda_{2g}+3b_0) \partial_{u_g}\mu
-\frac{1}{2}
\left(
\mu^2 \partial_{u_g} \mu\right)^3
+\frac{1}{4}\partial_{u_g}^3 \mu = 0,
\label{1eq:drMKdV_}
\end{equation}
and the defocusing pre-MKdV equation
\begin{equation}
\partial_{u_{g-1}}\xi
\revs{+}\frac{1}{2}(\lambda_{2g}+3b_0) \partial_{u_g}\xi
-\frac{1}{8}
\left(
\partial_{u_g} \xi\right)^3
+\frac{1}{4}\partial_{u_g}^3 \xi = 0,
\label{1eq:dprMKdVxi}
\end{equation}
where $\mu := \partial_{u_g} \xi/2$ and $\xi:=\log( (b_0-x_1)\cdots(b_0-x_g))$.
In this case, $\xi$ and $\mu$ are real valued for $(x_1, \ldots, x_g) \in S^g \RR$ and $u\in \RR^g$.
}

\revs{
However, (\ref{1eq:dprMKdVxi}) is not a real solutions of the loop soliton equation (\ref{4eq:rMKdV2}).}
It roughly means that \revs{$\xi$} in (\ref{1eq:dprMKdVxi}) must be purely imaginary and $u \in \RR^g$.
The conditions that \revs{$\xi$}$\in \ii \RR$ and $u\in \RR^g$ cannot be handled in the above framework.
The former condition is satisfied by the condition that $(b_0-x_1), \ldots, (b_0-x_g)$ draws an arc of a unit circle in a hyperelliptic curve $X_g$ whose center is one of the branch points in $X_g$.
The reality condition that $\phi$ in (\ref{4eq:rMKdV2}) is real causes the transcendental transformation, i.e., the logarithmic transformation of $S^1 \to \RR$.
Accordingly, we cannot find a real hyperelliptic curve \revs{$Y_g^\RR$} with the line that is rational to the unit circle in $X_g$ because of the transcendental transformation.
In other words, we cannot realize the real condition by treating the real hyperelliptic curves, except for the case where the radius of the circle is infinite, which corresponds to the soliton solutions.

Therefore, we have to treat the hyperelliptic curve $X_g$ with the complex-valued branch points i.e. $b_i \in \CC$.
Then \revs{from \cite{Mat02b}}, $\psi:=$\revs{$\xi$}$/\ii$ formally obeys the focusing \revs{pre-MKdV} equation \revs{
\begin{equation}
\partial_{u_{g-1}}\psi
\revs{+}\frac{1}{2}(\lambda_{2g}+3b_0) \partial_{u_g}\psi
+\frac{1}{8}
\left(
\partial_{u_g} \psi\right)^3
+\frac{1}{4}\partial_{u_g}^3 \psi = 0.
\label{1eq:fprMKdV}
\end{equation}
}
In particular, \cite{Mat02b} shows that $(b_0-x_1)\cdots(b_0-x_g)$ is directly equal to the complex representation of the tangent vector of the loop soliton equation, and also to
$$
\prod_{i=1}^g (b_0-x_i)=
b_0^g-\wp_{g,g}b_0^{g-1}-\wp_{g,g-1}b_0^{g-2}-\cdots-\wp_{g,2}b_0-\wp_{g,1}
\in \mathrm{U}(1),
$$
where $\mathrm{U}(1):=\{z\in \CC\ |\ |z|=1\}$.
This means that its associated hyperelliptic solution $\wp_{gg}$ of the KdV equation must no longer be real valued.

Since $\psi$ is a complex valued meromorphic function on Jacobi variety $J_X$ of $X_g$, i.e., $\psi=\psi_\rr + \ii \psi_\ri$ of the real valued functions  $\psi_\rr$ and $\psi_\ri$, the cubic part generates the term $(\partial_{s} \psi_\ri)^2  \partial_{s} \psi_\rr$, which behaves like a gauge potential.
Thus we encounter a gauged focusing \revs{pre-MKdV} equation from the focusing \revs{pre-MKdV} equation over $\CC$ \revs{(\ref{1eq:fprMKdV}) as}
\begin{equation}
(\partial_t-
A(u)\partial_s)\psi_\rr
           +\frac{1}{8}
\left(\partial_s \psi_\rr\right)^3
+\frac{1}{4}\partial_s^3 \psi_\rr=0.
\label{1eq:gaugedMKdV2}
\end{equation}
It is in contrast to the above case of the defocusing \revs{pre-MKdV} equation (\ref{1eq:dprMKdVxi}), where it is guaranteed that $\partial_{u_g} \revs{\xi_\ri=0}$ simply.
\color{black}
We also remark that if we take $\ii(s,t)$ in $\ii\RR^2$ instead of $(s,t) \in \RR^2$ in (\ref{1eq:gaugedMKdV2}), we also have the real gauged focusing pre-MKdV equation,
\begin{equation}
-(\partial_t-A(u)\partial_s)\psi_\rr
           +\frac{1}{8}
\left(\partial_s \psi_\rr\right)^3
+\frac{1}{4}\partial_s^3 \psi_\rr=0.
\label{1eq:gaugedMKdV2a}
\end{equation}
It means that the requirement of the real solution allows to find $\RR^2$ or $\ii \RR^2$ in the target space of the Abelian integral.

Let $u_a \in \CC^g$ be decomposed as $u_a = u_{a\rr} + \ii u_{a\ri}$, $(a=1, \ldots, g)$.
\cite{MP22} shows that to obtain the real solution of focusing MKdV equation is to find the situation that the following conditions are satisfied for the solutions of (\ref{1eq:gaugedMKdV2}) or (\ref{1eq:gaugedMKdV2a}):
\begin{enumerate}

\item[CI] $\prod_{i=1}^g |x_i - b_a|=$ a constant $(> 0)$,,

\item[CII] $d u_{g\,\ri}=d u_{g-1\, \ri}=0$ or $d u_{g\,\rr}=d u_{g-1\, \rr}=0$, and

\item[CIII] $A(u)$ is a real constant:
if $A(u)=$ constant, (\ref{1eq:gaugedMKdV2}) or (\ref{1eq:gaugedMKdV2a}) is reduced to (\ref{4eq:rMKdV2}), i.e., $\psi_\rr=\phi$. 
\end{enumerate}

\color{black}

However, it is quite difficult to find the real plane $\{(u_g, u_{g-1})\}$ in the Jacobi variety $J_X$ which corresponds to the unit circle valued $(b_0-x_1)\cdots(b_0-x_g)\in \mathrm{U}(1)$.

Therefore, the hyperelliptic solutions of (\ref{4eq:rMKdV2}) have not explicitly and concretely obtained whereas the elliptic function solutions of (\ref{4eq:rMKdV2}) were studied well since Euler's discovery \cite{Euler44}.

As a conclusion of \cite{MP22}, higher genus hyperelliptic curves ($g \ge 3$) might be better to find the solution of (\ref{4eq:rMKdV2}).
Hence this paper is devoted to find the solutions of (\ref{4eq:rMKdV2}) in terms of the meromorphic functions of hyperelliptic curves $X$ over $\CC$ of genus three.

In particular, in this paper, we focus on the real hyperelliptic solutions of the gauged \revs{pre-MKdV} equation (\ref{1eq:gaugedMKdV2}) or (\ref{1eq:gaugedMKdV2a}) of genus three since it is obvious that if $A(u)$ in (\ref{1eq:gaugedMKdV2}) or (\ref{1eq:gaugedMKdV2a}) is constant, then the solution is a solution of (\ref{4eq:rMKdV2}).

\bigskip

\color{black}
We focus on the graph of the Abelian integral and go back to the age of Weierstrass and Baker.

However before considering the higher genus case, we will review how Euler and Lagrange drew the elastica whose tangential angle $\psi$ obeys the static pre-MKdV equation $a \partial_s \psi + (\partial_s\psi)^3/2 + \partial_s^3\psi=0$ for a certain real number $a$ without elliptic function from a modern point of view \cite{Euler44, Lagrange, Mat10, MP22}.

Consider an elliptic curve $E$, $y^2 = (x-e_1)(x-e_2)(x-e_3)$ for certain $e_j$, ($j=1,2,3$) with the projection $\varpi_x: E\to \PP^1$, ($(x,y)\mapsto x$).
For the elliptic curve $E$, we have its Abelian covering $\tE$, whose element $\gamma_{(x,y), \infty}$ is a path from $\infty$ to a point $(x,y)$ in $E$ up to homotopy equivalence; we have a projection $\kappa_E: \tE \to E$, ($\kappa_E(\gamma_{(x,y), \infty})=(x,y)$). Using $\tE$, we define the elliptic integral $\tv : \tE \to \CC$, $\tv(\gamma):= \displaystyle{\int_{\gamma_{(x,y),\infty}} \nu}$, where $\nu=\displaystyle{\frac{dx}{2y}}$.
For each $\gamma \in \tE$, we note that $u := \tv(\gamma)\in \CC$ and $(x,y)=\kappa_X(\gamma)$, we have the differential identity $a' \partial_u \psi + (\partial_u\psi)^3/2 + \partial_u^3\psi=0$ for a certain $a'\in \CC$, and $\psi:=(\log(x-e_1))/\ii$ from the governing equation, $y^2 = \prod (x-e_j)$.

We are interested in a special subspace of the graph of $\tv$, $\cG(\tv):=\{(\gamma, \tv(\gamma))\ |\ \gamma \in \tE\}$ by restricting its domain for drawing the shape of elastica.

Fix an element $\gamma_{(x_0,y_0), \infty}\in \tE$. Since $\gamma_{(x_0,y_0), \infty}$ is a path in $E$, we introduce a subspace $\tS$ in $\tE$ by the homotopy parameter $\ft \in [0,1]$ by $\tS:=\{\gamma_\ft:=\ft \gamma_{(x_0, y_0), \infty} \ | \ \ft \in [0,1]\}$
$\subset \tE$.
The elements in $\tS$ trace $\gamma_{(x_0, y_0), \infty}$.
Then we have the graph of $\cG(\tv|\tS):=\{(\gamma, \tv(\gamma)) \ |\ \gamma \in \tS\}$ $\subset \tE \times \CC$.
Since $\tv(\gamma)$ belongs to $\CC$, we may find $\tS$ such that $\tv(\tS) \subset \RR \subset \CC$ and the induced map $[0,1] \to \RR$, $\tv(\gamma_\ft)$, is a monotone increasing with respect to $\ft$.
We fix such a $\tS$.

By letting $s := \tv(\gamma_\ft)\in \RR$, we find an induced graph structure $\Xi:=\{(s:=\tv(\gamma_\ft),\psi:= (\log((\varpi_x(\kappa_X(\gamma_\ft))-e_1))/\ii\ |\ \ft \in [0,1]\}$, $\{(s, \psi(s)\}=:\cG(\psi)$ whose $\psi(s)$ satisfies the static pre-MKdV equation as a differential identity on $E$.
Further if we find the subspace $\tS$ such that $\psi(s)$ is real for $s\in \RR$ or $x=\varpi_x(\kappa_X(\gamma_\ft))$ belongs to the unit circle whose center $e_1$, we have a real elliptic solution of the pre-MKdV equation.
To do so, Lagrange introduced the angle expression, $y^2 = 4\ee^{\ii \psi}(1 - k^2 \sin^2 (\psi/2))/k^2$ in \cite{Lagrange} for a certain $k$ \cite{MP22}.
By integrating $\ee^{\ii\psi(s)} ds$ for such a situation, we can draw Euler's elastica without an elliptic function, as Euler and Lagrange essentially did.

Although we will directly parameterize the real subspace $\tS$ in the target space of the Abelian integral by integrating a differential equation from first, we extend this idea to the case of hyperelliptic curves.

\bigskip

Thus we go back to the age of Weierstrass and Baker to look at hyperelliptic curves $X$, their Abelian maps and meromorphic functions from a modern point of view.
We have the Abelian covering $\tX$ of $X$ by abelianization of the path space of $X$, $\kappa_X : \tX \to X$ ($\gamma_{P, \infty} \mapsto P$), and the projection $\varpi_x: X \to \PP^1$, ($(x,y)\mapsto x$).
By the Abelian integral $\tv : S^3 \tX \to \CC^3$, for a certain point $\gamma=(\gamma_{P_1,\infty}, \gamma_{P_2,\infty}, \gamma_{P_3,\infty}) \in S^3 \tX$, we have a point $\tv(\gamma_{P_1,\infty}, \gamma_{P_2,\infty}, \gamma_{P_3,\infty}) \in \CC^3$.
The Jacobian $J_X$ is defined as $\kappa_J : \CC^3 \to J_X:=\CC^3 / \Gamma_X$.
Due to the Abel-Jacobi theorem \cite{FarkasKra}, for a point $u \in \CC^3$, we may find a point $\gamma$ in $S^3 X$ such that $\tv(\gamma)=u$.
The above $\psi$ in (\ref{1eq:fprMKdV}) is a meromorphic function of $S^3 X$, which satisfies the gauged focusing pre-MKdV equation (\ref{1eq:gaugedMKdV2}) or (\ref{1eq:gaugedMKdV2a}) as an differential identity on $S^3 X$.

If we have a subspace $\tfS \subset S^3 \tX$ whose image of $\kappa_X$ is related to the real value of $\psi$ and whose image of $\tv$ belongs to the real vector space $\fL:=\displaystyle{\left\{\begin{pmatrix}
0 \\ u_{2\, \rr} \\u_{3\, \rr}\end{pmatrix} \Bigr|\ u_{a\,\rr}\in \RR\right\}}$ $=\RR^2$ (or $\ii \RR^2$) of $\CC^3$, then we have the real solution of (\ref{1eq:gaugedMKdV2}) (or (\ref{1eq:gaugedMKdV2a})).

More precisely speaking, we are looking for a subspace $\fX:=\{(\gamma, \tv(\gamma))\ | \gamma \in \tfS \subset S^3\tX, \tv(\gamma) \in \fL\}$ so that $\tfS$ corresponds to real $\psi$ in the graph of $\tv$, $\cG(\tv)=\{ (\gamma, \tv(\gamma)) \ |\ \gamma \in S^3 \tX\} \subset S^3 \tX \times \CC^3$ as fibers of suitable points in the moduli space of the hyperelliptic curves.
Then it is obvious that the conditions CI and CII are satisfied.
Thus the subspace $\fX$ can be regarded as the graph $\cG(\psi|\fL)=\{(u, \psi(u)) \ |\ u \in \fL\}$.

In this paper, we cannot find such a subspace $\fX$ but we find a subspace of $\cG(\tv)$ such that both pre-image and image of $\tv$ satisfy the condition CI and CII.
Let us show the strategy as follows.

Instead of $\tfS$, we consider $\tSS \subset S^3 \tX$.
We assume that $\varpi_x\kappa_X\tSS$ consists of the symmetric product of three points $x_i$ in the arc of the unit circle at a branch point $(b_0, 0)$ such that $-\ii\log(x_i-b)$ is real.
Our purpose is to find the proper set of the real subspace $\tSS$ in $S^3 \tX$ and the basis of $\RR^2$ in the target space $\CC^3$ of $\tv$.
Since for the case of the genus one, the angle expression of the elliptic curves plays an important role as Lagrange introduced, we also introduce the angle expression of the hyperelliptic curves such that $\psi$ may be real value in $S^3 X$ \cite{Mat07, MP22} to consider the behavior of the above arc.

As the angle expression (of the Jacobi elliptic function) is related to the double covering of the elliptic curve $E$ of the Weierstrass standard form, the angle expression in the hyperelliptic integral $\tv$ is also connected with the double covering $\hX$ of $X$ \cite{Mat07} which was introduced by Weierstrass \cite{Wei54}.
Here $\tX$ and $\hX$ are set so that both are consistent as in (\ref{2eq:al_hyp_kappas}). 
By using these tools, we have attacked the problem to find the real solution of the focusing MKdV equation \cite{Mat07, MP22}.

We are concerned with the graph of the Abelian integral $\tv$ such that the elements satisfy the conditions CI and CII.
The target space of $\tv$ is flat but the structure of its preimage $S^3 \tX$ is complicated.
It is a problem how to find certain real subspace $\tSS=\{\gamma = (\gamma_1, \gamma_2, \gamma_3)\} \subset S^3 \tX$.

We follow the strategy by Weierstrass and Baker. 
Weierstrass and Baker investigated the meromorphic functions on $S^3 X$ and $J_X$ by using the pullback between these cotangent spaces, i.e., $\tv^*: T^* \CC^3 \to T^* S^3 \tX$.
The Abel-Jacobi theorem shows that it is isomorphic except for some singular locus.
In other words, $\tv_*$ which is given by a transition matrix is invertible for regular places.
They used the the matrix representation of $\tv_*$ to study the properties of the sigma functions and found the sine-Gordon equation implicitly \cite{Wei54} and the KdV hierarchy explicitly \cite{Baker03}.

First, we assume non-singular locus for the birational map $\tv$.
We handle the pullbacks 
 $\tv^*: T^* \CC^3 \to T^* S^3 \tX$ and $\tv^{-1*}: T^* S^3 \tX\to T^* \CC^3$.

In order to find the condition that the image of $\tv$ is a real vector space,
we deal with $L_0:=\left\{\be_1s = \begin{pmatrix} s/2 \\ -s/2\\ s \end{pmatrix}\ \Bigr|\ s\in \RR\right\}\subset$  $\LL:=\left\{\be_1s+\be_2 t=\begin{pmatrix} t+s/2 \\ -s/2\\ s \end{pmatrix}\ \Bigr|\ s,t\in \RR\right\}$ as a subspace of the target space
$\CC^3=\displaystyle{\left\{\begin{pmatrix} u_1\\ u_2\\ u_3\end{pmatrix} \Bigr|\ u_i \in \CC\right\}}$ of the Abelian integral $\tv$.
Here let $\displaystyle{\be_1:=\begin{pmatrix} 1/2 \\ -1/2\\ 1 \end{pmatrix}}$
and  $\displaystyle{\be_2:=\begin{pmatrix} 1 \\ 0\\ 0 \end{pmatrix}}$.
As in Proposition \ref{4pr:reality_g3}, for the basis of the cotangent space $\be_1 ds$ and $\be_2 ds$ in $T^* \LL$, we find that $\tv^* (\be_1 ds)=\cV_1 ds$ and $\tv^*(\be_2 dt)=\cV_2 dt$ are real valued one-forms in $T^* \tSS$ for a certain $\tSS$ so that $\varpi_x \kappa_X\tSS$ consists of the arcs of the unit circle and a certain $X$.
The condition for $X$ means that the branch points except $b_0$ of $X$ belong to the unit circle $|x-b_0|=1$ as in Fig.~\ref{fg:Fig01}, which is the same as the condition that $y$ in (\ref{4eq:HEcurve_phi}) must be related to the real value in a certain sense.
By using the angle expression, we consider the condition.

For $\gamma\in \tSS$, and $\varpi_x \kappa_X\gamma = (x_{1}, x_{2}, x_{3})$.
the condition CI means that $\varphi_{i} := (\log (x_{i} - b_0))/2\ii$ is real.

For simplicity, let $t=0$ and focus on $L_0 \subset \LL$.
In the target space $\CC^3$ of $\tv$, we naturally have $\int^s_0 \be_1 ds = \be_1 s$ for $s \in \RR$.
We interpret $d \gamma = (d \varphi_1, d\varphi_2, d\varphi_3) =\tv^*(\be_1 ds)=\cV_1 ds$  (\ref{eq:g3CIII}) as a differential equation.
Consider it as an orbit by integrating $\tv^*(\be_1 ds)=\cV_1 ds$, $\gamma_s=\displaystyle{\int^s\cV_1 ds}$ for an initial point $\gamma_0 \in S^3 \tX$ for $s\in \RR$.
Proposition \ref{4pr:reality_g3} shows that the integration of $d \gamma=\cV_1 ds$ (\ref{eq:g3CIII}) provides $\tS_0$.
Due to the branch points, it turns out that the orbit $\gamma_s$ in $S^3 \tX$ consists of loops among branch points of $X$ or $\hX$. 
We will set $\cV_1 ds$ so that $\varphi_{s,i}$ of $\gamma_s$ must be real or $x_{s,i}$ of $\gamma_s$ must belong to the unit circle for any $s\in \RR$.
Further, it is obvious that $\tv(\gamma_s)=\be_1 s$ or $\tv^{-1}(\be_1 s) = \gamma_s$ from the assumption.

From the construction, the graph of $\tv^{-1}$ for the obtained $L_0$ is $\cG(\tv^{-1}|L_0)=\{(\be_1 s, \tv^{-1}(\be_1 s) \ |\ \be_1 s\in L_0\} \subset L_0 \times \tS_0$.
It means that we find a graph structure in which the conditions CI and CII are satisfied.

Then the meromorphic functions at a point $\gamma$ in $S^3 \tX$ satisfies (\ref{1eq:gaugedMKdV2})  or (\ref{1eq:gaugedMKdV2a}) as an differential identity.
We also estimate $\partial_{u_{3\ri}} \psi$ at $\gamma$ by using the pullback $\tv^*$ and the data of the point $\gamma$ locally as in (\ref{eq:g3CIIIi}).
If the gauged field $A$ in (\ref{1eq:gaugedMKdV2}) or (\ref{1eq:gaugedMKdV2a}) is constant or satisfies the condition CIII, we might have a real solution of the focusing MKdV equation. 

Precisely speaking, as mentioned above, we cannot find a real solution the solution of the gauged focusing pre-MKdV equation (\ref{1eq:gaugedMKdV2a}) in this paper because $L_t$ differs from $\displaystyle{\RR\begin{pmatrix}0\\0\\1\end{pmatrix}}$, but we will find real cross-sections along $L_0$ of real solutions of (\ref{1eq:gaugedMKdV2a}) as in Theorem \ref{th:solgMKdV_R}.

\color{black}
Further, as the Abel-Jacobi map is a bi-rational map from $S^3 \hX \to \hJ_X$, there is a singular locus in $\tS$ and $L$. 
We have to check the existence of $\tS$ as a locus of the symmetric product of $\tX$ as in Section 5.
For a point $\gamma=(\gamma_1, \gamma_2, \gamma_3) \in S^3 \tS$, if their neighborhoods $U_{\gamma_i}$ at $\gamma_i$ are disjoint, we can discriminate them because $d \gamma$ is essentially defined as $d \gamma_i$'s.
However the point $\gamma_i = \gamma_j$ ($i \neq j$) in $S^3 \tX$, we have to deal with them carefully as in Section 5.
Lemma \ref{4lm:intersection} shows that we also define the orbit even for such a case.

\color{black}

\revs{
In other words, we find} Proposition \ref{4pr:reality_g3}, which guarantees the reality conditions of both $S^3X$ and the Jacobian $J_X$.
The relation in Proposition \ref{4pr:reality_g3} shows a novel real part of the hyperelliptic solution of the gauged MKdV equation (\ref{1eq:gaugedMKdV2}) or  (\ref{1eq:gaugedMKdV2a})  as in Theorem \ref{th:solgMKdV_R}.
Theorem \ref{4th:reality_g3} provides that when the gauge field $A$ is constant, it is reduced to the solution of the MKdV equation.

The content is following:
Section 2 reviews the hyperelliptic solutions of the focusing \revs{pre-MKdV} equation over $\CC$ \revs{of genus three} in Theorem \ref{4th:MKdVloop} following \cite{Mat02b,MP15,MP22}.
Section 3 is devoted to the \revs{angle expression} of the hyperelliptic curves of genus three.
Section 4 provides local properties of the solutions of the gauged \revs{pre-MKdV} equation (\ref{1eq:gaugedMKdV2}) as in Theorem \ref{pr:solgMKdV}.
Based on this, we investigate the global behavior of the hyperelliptic solutions of genus three of the gauged \revs{pre-MKdV} equation in Section 5.
\revs{For the hyperelliptic solution, we will find orbits in $S^3 \hX$ as a preimage of the Abel-Jacobi map $v: S^3 \hX \to \hJ_X$ for a real line $\kappa_J L$ in the Jacobian $\hJ_X$ of $X$.}
In particular, the behavior at the intersection of two orbits and at branch points is discussed. 
We conclude that the solutions are well-defined globally in Theorem \ref{th:solgMKdV_R}.
As Discussion in Section 6, we describe the real solution of the focusing MKdV equation (\ref{4eq:rMKdV2}) as in Theorem \ref{4th:reality_g3}.
There we also demonstrate the numerical evaluation of the gauge potential.
Section 7 gives the conclusion of this paper.

\section{Hyperelliptic solutions of focusing \revs{pre-MKdV} equation$/\CC$ \revs{of genus three}}
\label{sec:HESGE}

\color{black}
\cite{MP22} concludes that in order to obtain the solution of (\ref{4eq:rMKdV2}) based on hyperelliptic function theory, we should handle hyperelliptic curves  of genus $g>2$. 
In this paper, we consider a hyperelliptic curve $X$ of genus three over $\CC$,
\begin{equation}
X=\left\{(x,y) \in \CC^2 \ |
\ y^2 + (x-b_0)(x-b_1)(x-b_2)\cdots(x-b_{6})=0\right\}
\cup \{\infty\},
\label{4eq:hypC}
\end{equation}
where $b_i$'s are mutually distinct complex numbers.
Let $\lambda_{6}=\displaystyle{-\sum_{i=0}^{6} b_i}$ and $S^k X$ be the $k$-th symmetric product of the curve $X$. 
Further for the Abelian integral, we introduce the Abelian covering $\tX:=\Gamma_\infty X$ of $X$ by abelianization of the path-space of $X$, $\kappa_X: \tX \to X$, ($\gamma_{P, \infty} \mapsto P$) \cite{M24}.
Here $\gamma_{P, \infty}$ means a path from $\infty$ to $P$.
$S^k \tX$ also means the  the $k$-th symmetric product of the space $\tX$. 
The Abelian integral $\tv : S^3  \tX \to \CC^3$ is defined by its $i$-th component $\tv_i$ $(i =1,2,3)$,
\begin{equation}
\tv_i(\gamma_1,\gamma_2,\gamma_3)=\sum_{j=1}^3
 \tv_i(\gamma_j), \quad
\tv_i(\gamma_{(x,y), \infty}) = \int_{\gamma_{(x,y), \infty}} \nuI{i},\quad
\nuI{i} = \frac{x^{i-1}d x}{2y}.
\label{4eq:firstdiff}
\end{equation}
Then we have the Jacobian $J_X$ : $\kappa_J: \CC^3 \to J_X=\CC^3/\Gamma_X$, where $\Gamma_X$ is the lattice generated by the period matrix for the standard homology basis of $X$.
Due to the Abel-Jacobi theorem \cite{FarkasKra}, we also have the bi-rational map $v$ from $S^3 X \to J_X$ by letting $v:=\tv$ modulo $\Gamma_X$.
We refer to $v$ as the Abel-Jacobi map.

\bigskip
Further, we implicitly consider the al function $\al_a(u) = \sqrt{(b_a-x_1)(b_a-x_2)(b_a-x_3}$ due to the ramification degree, two, at the branch point $B_a:=(b_a, 0) \in X$ ($\varpi_x: X\to \revs{\PP^1}$), which is originally defined by Weierstrass 1854 \cite{Wei54, Baker98}.
Fix $a=0$.
The square root leads the transformation of $w^2 = (x-b_0)$, i.e., the double covering $\hX$ of the curve $X$, $\hvarpi: \hX \to X$, although the precise arguments are left to the Appendix in \cite{MP15}.
Since $\hX$ is regarded as a path space of $X$ as in \cite{Baker98} and $\tX$ is a covering of $\hX$, we have a natural commutative diagram,
\begin{equation}
\xymatrix{ 
 \tX \ar[dr]^{\kappa_X}\ar[r]^-{\kappa_\hX}& \hX \ar[d]^-{\varpi_\hX} \\
  & X,
}\label{2eq:al_hyp_kappas}
\end{equation}
This transformation is a natural generalization of the relation between Weierstrass $\wp$ function and Jacobi $\sn, \cn, \dn$ functions because the Jacobi function consists of $\sqrt{\wp-e_i}$, ($i=1,2,3$) of genus one.

The curve $\hX$ is given by $z^2 + (w^2-e_1)\cdots(w^2 - e_{6})=0$, where $z:=y/w$, and $e_j := b_j- b_0$, $j=1, \ldots, 6$.
Here $z$ is introduced due to the normalization (blow-up) for the singularity at $w=0$.
Since the genus of $\hX$ is five, we have five holomorphic one-forms,
$$
\hnu_j:= \frac{w^{j} d w}{z}, \quad (j=1, 2, 3, \ldots, 5).
$$
and the Jacobi variety, $J_{\hX}$ of $\hX$ is given by the complex torus $J_{\hX}=\CC^5 /\Gamma_{\hX}$ for the lattice $\Gamma_{\hX}$ given by the period matrix.
As in \cite[Appendix, Proposition 11.9]{MP15}, we have the correspondence $\hvarpi^*\nuI{i}=\hnu_{2i-2}$, $(i=1,2,3)$ and thus the Jacobian $J_{\hX}$  contains a subvariety $\hJ_X\subset J_{\hX}$ which is a double covering of the Jacobian $J_X$ of $X$, $\hvarpi_J: \hJ_X \to J_X$, and $\hkappa_J : \CC^3 \to \hJ_X:= \CC^3/(\Gamma_\hX\cap \CC^3)$.

Since for each branch point $B_j:=(b_j, 0)\in X$, we have double branch points $\hB^\pm_j:=(\pm\sqrt{e_j},0) \in \hX$ as illustrated in Fig.~\ref{fg:Fig00}.

Similar to the Jacobi elliptic functions, $\hJ_X=\CC^3/\hGamma_X$ is determined by the same Abelian integral $\tv$, and thus we use the same symbol $\tv$ as $\tv : S^3\tX \to \CC^3$ for $\hX$ \cite{MP15}.

\begin{figure}
\begin{center}

\includegraphics[width=0.6\hsize, bb= 0 0 580 487]{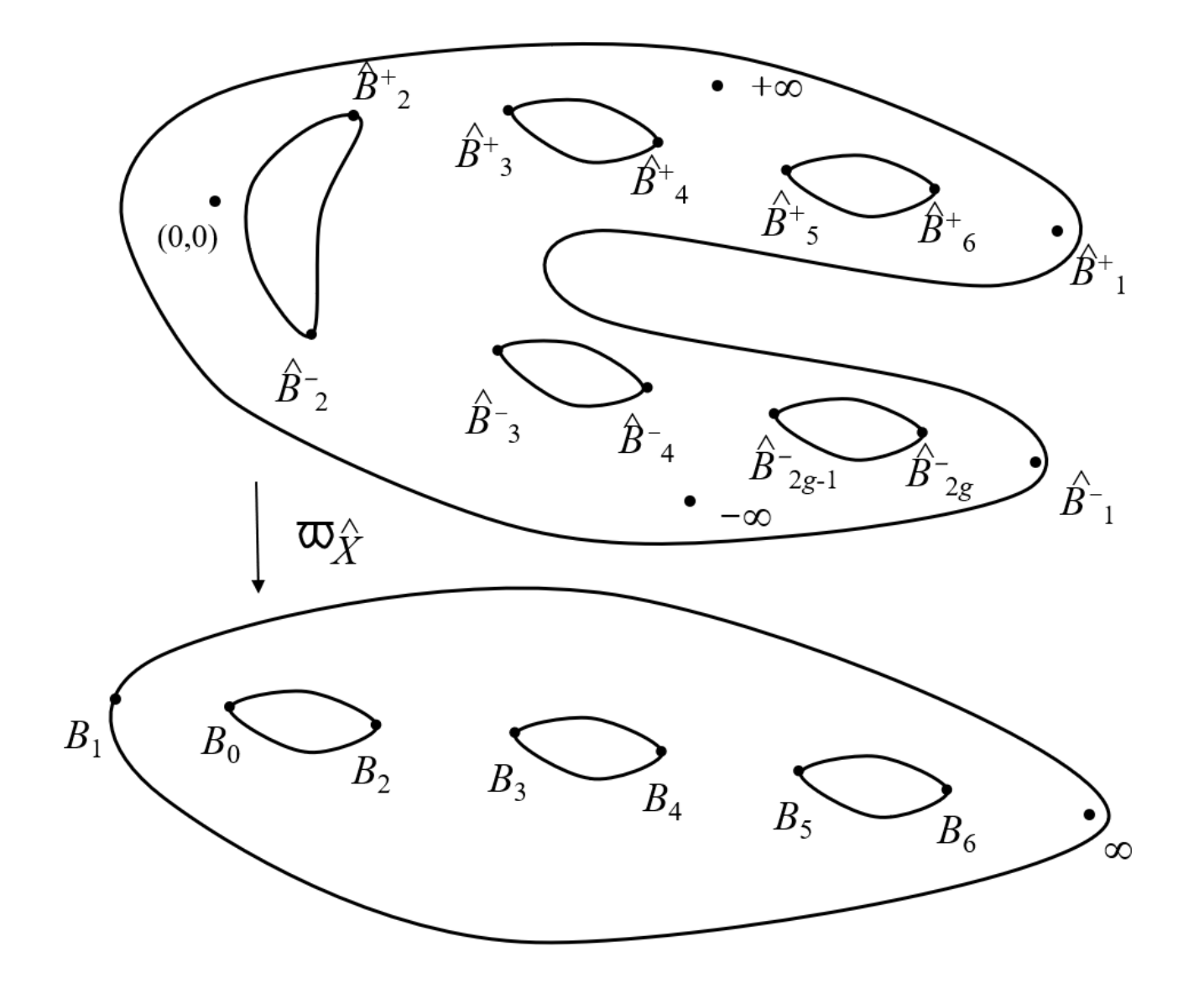}

\end{center}

\caption{The double covering $\hvarpi: \hX_g \to X_g$,
$\hvarpi: (w, z) \mapsto (w^2+b_0, zw)=(x,y)$.
}\label{fg:Fig00}
\end{figure}

\color{black}

\cite{Mat02b} shows the hyperelliptic solutions of the MKdV equation over $\CC$.

\begin{theorem} {\textrm{\cite{Mat02b}}}
\label{4th:MKdVloop}
For $((x_1,y_1),\cdots, (x_3,y_3)) \in S^3 X$, the fixed branch point $(b_0, 0)$, and $u:= v( (x_1,y_1),$ $\cdots,(x_3,y_3))$,
$$
\displaystyle{
   \psi(u) :=
-\ii \log (b_0-x_1)(b_0-x_2)(b_0-x_3)
}
$$
satisfies the pre-MKdV equation over $\CC$,
\begin{equation}
\revs{(}\partial_{u_{2}}-\frac{1}{2}
(\lambda_{6}+3b_0)
          \partial_{u_{3}})\psi
           +\frac{1}{8}
\left(\partial_{u_3} \psi\right)^3
 +\frac{1}{4}\partial_{u_3}^3 \psi=0,
\label{4eq:loopMKdV2}
\end{equation}
where $\partial_{u_i}:= \partial/\partial u_i$ as an differential identity in $S^3 X$ and $\CC^3$.
\end{theorem}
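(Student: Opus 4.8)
The plan is to derive the pre-MKdV equation (\ref{4eq:loopMKdV2}) as a differential identity on $S^3 X$ directly from the defining relation of the curve (\ref{4eq:hypC}) and the properties of Baker's hyperelliptic functions, following the scheme used by Weierstrass and Baker and adapted in \cite{Mat02b, MP15, MP22}. First I would express $\psi$ in terms of the $\wp$-functions: from the identity $\prod_{i=1}^3 (b_0 - x_i) = b_0^3 - \wp_{3,3} b_0^2 - \wp_{3,2} b_0 - \wp_{3,1}$ recalled in the introduction, one has $\psi = -\ii\log P(b_0)$ with $P(b_0)$ this cubic in $b_0$. The key structural fact is that the $u_3$-derivative acts on the $\wp_{3j}$ through the well-known Baker relations for genus-three hyperelliptic functions (the $\wp_{33j}$, $\wp_{3333}$ hierarchy), so $\partial_{u_3}\psi$, $\partial_{u_3}^3\psi$ and the mixed derivative $\partial_{u_2}\psi$ can all be written as rational expressions in the $\wp_{3j}$, $\wp_{33j}$ and the branch-point data $b_0$, $\lambda_6$.

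The main computational step is to substitute these expressions into the left-hand side of (\ref{4eq:loopMKdV2}) and verify it collapses to zero. The cleanest route is not to expand everything symmetrically but to use the ``al-function'' $\al_0(u) = \sqrt{(b_0-x_1)(b_0-x_2)(b_0-x_3)}$ introduced just before the theorem: since $\psi = -2\ii \log \al_0$, we have $\partial_{u_3}\psi = -2\ii\, \partial_{u_3}\al_0/\al_0$, and Weierstrass's differential equations for the al-functions on the double cover $\hX$ give $\partial_{u_3}\al_0$ and its higher derivatives in closed form. Concretely, I would (i) record the first-order law $\partial_{u_3}\al_0 = (\text{linear in the }\wp_{3j})\cdot \al_0 + (\text{correction})$ coming from $\hnu$-pullbacks via $\hvarpi^*\nuI{i} = \hnu_{2i-2}$ in (\ref{2eq:al_hyp_kappas}); (ii) differentiate twice more, using the genus-three $\wp$-relations to keep everything polynomial in $b_0$; (iii) likewise compute $\partial_{u_2}\al_0$ from the $\nuI{2}$-component; and (iv) assemble the combination $\partial_{u_2}\psi - \tfrac12(\lambda_6 + 3b_0)\partial_{u_3}\psi + \tfrac18(\partial_{u_3}\psi)^3 + \tfrac14\partial_{u_3}^3\psi$ and check that the coefficient of each power $b_0^k$ vanishes identically as a consequence of the curve equation $y^2 + \prod_{i=0}^6 (x-b_i) = 0$ evaluated along the Abel-Jacobi image.

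The step I expect to be the main obstacle is the bookkeeping in (ii)--(iv): the cubic term $(\partial_{u_3}\psi)^3$ produces, after clearing $\al_0^3$ in the denominator, a degree-$6$ polynomial in $b_0$ whose cancellation against $\partial_{u_3}^3\psi$ relies on precisely the genus-three Baker identities (the analogues of $\wp_{3333} = 6\wp_{33}^2 + \cdots$ and $\wp_{333,2} = \cdots$), and one must make sure the $\lambda_6$-dependent and $b_0$-dependent terms group correctly — this is exactly where the coefficient $-\tfrac12(\lambda_6 + 3b_0)$ in front of $\partial_{u_3}\psi$ is forced. A secondary point requiring care is that (\ref{4eq:loopMKdV2}) is asserted only as a \emph{differential identity} on $S^3 X$ (equivalently on $\CC^3$ via $\tv$), so no reality or single-valuedness is needed; it suffices to verify the identity on the open dense locus where $\tv$ is a local biholomorphism and the three points $x_i$ are distinct and finite, and then invoke continuity/analyticity to extend it. Since the full verification is a lengthy but routine manipulation already carried out in \cite{Mat02b} (and re-derived in \cite{MP15}), I would present the reduction to the al-function form in detail and refer to those sources for the final polynomial identity.
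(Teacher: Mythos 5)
Your proposal cannot be checked against a proof in this paper, because the paper offers none: Theorem \ref{4th:MKdVloop} is quoted verbatim from \cite{Mat02b}, and the text's only justification is that citation. Your outline — writing $\psi=-\ii\log\bigl(b_0^3-\wp_{33}b_0^2-\wp_{32}b_0-\wp_{31}\bigr)=-2\ii\log\al_0$, invoking Baker's genus-three $\wp$-relations and the Weierstrass al-function on $\hX$, and deferring the final polynomial cancellation to \cite{Mat02b, MP15} — is consistent with the derivation in those references and with the background the paper itself recalls (the $\wp_{gg}$ KdV identity and the product formula in Section 1, the al-function and double cover in Section 2), so it is essentially the same approach as the paper's, namely an appeal to \cite{Mat02b} dressed with a correct reduction to the al-function form.
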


We, here, emphasize the difference between the focusing \revs{pre-MKdV} equations (\ref{4eq:rMKdV2}) over $\RR$ and (\ref{4eq:loopMKdV2}) over $\CC$.
In (\ref{4eq:rMKdV2}), $\phi$ is a real valued function over $\RR^2$ but $\psi$ in (\ref{4eq:loopMKdV2}) is a complex valued function over \revs{$\CC^2 \subset \CC^3$}.
The difference is crucial since we want to obtain solutions of (\ref{4eq:rMKdV2}), not (\ref{4eq:loopMKdV2}).
However, the latter is expressed well in terms of the hyperelliptic function theory.
Although it goes beyond this paper, our ultimate goal is to construct the solutions of (\ref{4eq:rMKdV2}) based on the solutions of (\ref{4eq:loopMKdV2}).

As mentioned in \cite[(11)]{MP22}, we describe the difference.
By introducing real and imaginary parts, $ u_a = u_{a\,\rr} + \ii u_{a\,\ri}$, $(a=1,2,3)$, and $ \psi = \psi_{\rr} + \ii \psi_{\ri}$, the real part of (\ref{4eq:loopMKdV2}) is reduced to the gauged \revs{pre-MKdV} equation with gauge field $A(u)=(\lambda_{2g}+3b_a-\frac{3}{4}(\partial_{u_{g}\, \rr}\psi_\ri)^2)/2$,
\begin{equation}
\revs{(}\partial_{u_{2}\, \rr}-
A(u)\partial_{u_{3}\, \rr})\psi_\rr
           +\frac{1}{8}
\left(\partial_{u_3\, \rr} \psi_\rr\right)^3
+\frac{1}{4}\partial_{u_3\, \rr}^3 \psi_\rr=0
\label{4eq:gaugedMKdV2}
\end{equation}
by the Cauchy-Riemann relations as mentioned in \cite[(11)]{MP22}.

\bigskip
In order to obtain a solution of (\ref{4eq:rMKdV2}) in terms of the data in Theorem \ref{4th:MKdVloop},  the following conditions must be satisfied \cite{MP22}:

\begin{condition}\label{cnd}
{\rm{
\begin{enumerate}

\item[CI] $\prod_{i=1}^3 |x_i - b_a|=$ a constant $(> 0)$ in Theorem \ref{4th:MKdVloop},

\item[CII] $d u_{3\,\ri}=d u_{2\, \ri}=0$ \revs{or $d u_{3\, \rr}=d u_{2\, \rr}=0$} in Theorem \ref{4th:MKdVloop}, and

\item[CIII] $A(u)$ is a real constant:
if $A(u)=$ constant, (\ref{4eq:gaugedMKdV2}) is reduced to (\ref{4eq:rMKdV2}), i.e., $\psi_\rr=\phi$. 
\end{enumerate}
}}
\end{condition}
\revs{
It is obvious that if we have the solutions $\psi_\rr$ of (\ref{4eq:gaugedMKdV2}) satisfying the conditions CI-CIII, $\partial_{u_3, \rr}\psi_{\rr}/2$ obeys the focusing MKdV equation (\ref{eq:MKdV1}).}

However, in this paper we focus on the conditions CI and CII and the real hyperelliptic solutions of the gauged \revs{pre-MKdV} equation (\ref{4eq:gaugedMKdV2}) of genus three as a step to the solution of the real focusing \revs{pre-MKdV} equation (\ref{4eq:rMKdV2}).

\section{Hyperelliptic curves of genus three in angle expression}\label{sec:g=3}
\revs{
In this paper, we mainly investigate the conditions CI and CII in Condition \ref{cnd}, and thus we introduce the analge expression \cite{Mat07, MP22} as mentioned in Introduction.
}

We restrict the moduli (rather, parameter) space of the curve $X$ by the following.
We choose coordinates $u = {}^t(u_1, u_2, u_3)$
 in  $\CC^3$;
$u_i = u_i^{(1)}+u_i^{(2)}+u_i^{(3)}$, where $u_i^{(j)}
=v_i((x_j, y_j))$ for $(x_j, y_j) \in X$.
There are the projection $\varpi_x : X \to \revs{\PP^1}$, $((x,y) \mapsto x)$, and similarly $\hvarpi_x : \hX \to \revs{\PP^1}$.

We let $b_0=-1$ and $e_j := b_j - b_0$ $(j=1,2,\ldots,6)$ satisfying the following relations,
$$\sqrt{e_{2a-1}} = \alpha_a +\ii \beta_a,
\quad
\sqrt{e_{2a}} = \alpha_a -\ii \beta_a,
$$
where $\alpha_a, \beta_a\in \RR$, $a,b =1,2,3$, satisfying 
$\alpha_a^2 + \beta_a^2 = 1$.

We recall $w^2 = (x-b_0)$ and $w=\ee^{\ii \psi}$.
For a {\lq}real{\rq} expression of (\ref{4eq:hypC}), we use the following transformation, which is a generalization of {\lq}the angle expression{\rq} of the elliptic integral as mentioned in \cite{MP22}.
\begin{lemma}
$(w^2-e_1)(w^2-e_2)=\revs{4}\frac{1}{k_1^2}\ee^{2\ii \varphi}
(1-k^2\sin^2\varphi)$, where 
$$
w= \ee^{\ii \varphi}, \quad 
k_1 = \frac{2\ii\sqrt[4]{e_{1}e_{2}}}{\sqrt{e_{1}}- \sqrt{e_{2}}}
=\frac{1}{\beta_a}, \quad
 e_1 e_2 = 1.
$$
\end{lemma}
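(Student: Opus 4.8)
The plan is to prove the identity by a direct algebraic manipulation: the hypotheses $\sqrt{e_1}=\alpha_a+\ii\beta_a$, $\sqrt{e_2}=\alpha_a-\ii\beta_a$ with $\alpha_a^2+\beta_a^2=1$ make every symmetric function of $e_1,e_2$ occurring in the statement completely explicit, so no structural input is needed. First I would record the two elementary consequences of these relations. Taking the branch of the roots fixed by the displayed formulas, $\sqrt{e_1}\,\sqrt{e_2}=(\alpha_a+\ii\beta_a)(\alpha_a-\ii\beta_a)=\alpha_a^2+\beta_a^2=1$, hence $e_1e_2=1$ and $\sqrt[4]{e_1e_2}=1$; substituting into the defining expression for $k_1$ gives $k_1=2\ii/(\sqrt{e_1}-\sqrt{e_2})=2\ii/(2\ii\beta_a)=1/\beta_a$. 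This already disposes of the side assertions $e_1e_2=1$ and $k_1=1/\beta_a$.

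Next I would expand the left-hand side as a polynomial in $w^2$ and factor out $w^2$, using $e_1e_2=1$:
\[
(w^2-e_1)(w^2-e_2)=w^4-(e_1+e_2)w^2+1=w^2\bigl(w^2+w^{-2}-(e_1+e_2)\bigr).
\]
From $(\sqrt{e_1})^2+(\sqrt{e_2})^2=(\alpha_a+\ii\beta_a)^2+(\alpha_a-\ii\beta_a)^2=2(\alpha_a^2-\beta_a^2)=2(1-2\beta_a^2)$ one gets $e_1+e_2=2-4\beta_a^2$, and since $w=\ee^{\ii\varphi}$ one has $w^2+w^{-2}=2\cos2\varphi$ and $w^2=\ee^{2\ii\varphi}$. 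Therefore
\[
(w^2-e_1)(w^2-e_2)=\ee^{2\ii\varphi}\bigl(2\cos2\varphi-2+4\beta_a^2\bigr).
\]
Applying $\cos2\varphi-1=-2\sin^2\varphi$ converts the bracket into $4\beta_a^2-4\sin^2\varphi=4\beta_a^2\bigl(1-\beta_a^{-2}\sin^2\varphi\bigr)$, so that
\[
(w^2-e_1)(w^2-e_2)=4\beta_a^2\,\ee^{2\ii\varphi}\bigl(1-\beta_a^{-2}\sin^2\varphi\bigr)=\frac{4}{k_1^2}\,\ee^{2\ii\varphi}\bigl(1-k^2\sin^2\varphi\bigr),
\]
with $1/k_1^2=\beta_a^2$ and the elliptic modulus forced to be $k=k_1=1/\beta_a$, which completes the proof.

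There is no genuine obstacle here; the only point demanding care is the consistency of the chosen branches of $\sqrt{e_1},\sqrt{e_2}$ (hence of $\sqrt[4]{e_1e_2}$), which is pinned down by the convention $\sqrt{e_{2a-1}}=\alpha_a+\ii\beta_a$, $\sqrt{e_{2a}}=\alpha_a-\ii\beta_a$ adopted just above the Lemma; with that convention $\sqrt{e_1}\sqrt{e_2}=1$ and the fourth root is unambiguously $1$. It is worth remarking, as a sanity check, that for a single pair $(e_1,e_2)$ this reduces to Lagrange's angle form of the elliptic curve recalled in the Introduction, $y^2=4\ee^{\ii\psi}(1-k^2\sin^2(\psi/2))/k^2$ with $\psi=2\varphi$, which is precisely what motivates normalizing $\alpha_a^2+\beta_a^2=1$ (so that the branch points of $X$ other than $(b_0,0)$ lie on the unit circle $|x-b_0|=1$).
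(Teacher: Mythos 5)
Your proposal is correct and follows essentially the same route as the paper: factor out $w^2$, use $e_1e_2=1$ and the double-angle formula to reduce the bracket to $2\cos 2\varphi-(e_1+e_2)$, and identify the resulting constant with $4/k_1^2$. The only cosmetic difference is that you substitute the explicit parametrization $\sqrt{e_1}=\alpha_a+\ii\beta_a$, $\sqrt{e_2}=\alpha_a-\ii\beta_a$ to get $e_1+e_2=2-4\beta_a^2$, whereas the paper keeps the identity symbolic via $(\sqrt{e_1}-\sqrt{e_2})^2=-4/k_1^2$; both are the same computation.
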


\begin{proof}
Let $ e_1 e_2=1$.
We recall the double angle formula $\cos 2\varphi = 1-2\sin^2 \varphi$.
\begin{gather*}
\begin{split}
(w^2 - e_1)(w^2 -e_2)& =w^2(w^2-(e_1+e_2) +e_1 e_2 w^{-2})\\
&=w^2 \left(\ee^{2\ii\varphi}+\ee^{-2\ii\varphi}
-\frac{e_1+e_2}{1}\right)\\
&=2w^2 \left(\cos(2\varphi) - \frac{e_1+e_2}{2}\right)\\
&=-w^2 
\left(
e_1+e_2-2\sqrt{e_1e_2}+4\sin^2 \varphi\right) \\
&=\revs{4}w^2 \frac{1}{k_1^2}
\left(1-k_1^2\sin^2 \varphi\right), \\
\end{split}
\end{gather*}
where $(e_1+e_2-2\sqrt{e_1e_2})=(\sqrt{e_1}-\sqrt{e_2})^2 = e_1^{-1}(e_1+1)^2=-4/k_1^2$.
\qed
\end{proof}

Under these assumptions, we have the real extension of the hyperelliptic curve $X$ by $(\ee^{\ii\varphi}, y/\ee^{\ii\varphi})\in \hX$. 
The direct computation shows the following:

\begin{lemma} \label{4lm:g3gene_y2}
Let $ \ee^{2\ii\varphi} :=(x-b_0)$,
(\ref{4eq:hypC}) is written by
\begin{gather}
y^2=-64 \frac{4\ee^{8\ii\varphi}}{k_1^2 k_2^2k_3^2} 
(1-k_1^2 \sin^2 \varphi)(1-k_2^2 \sin^2 \varphi)
(1-k_3^2 \sin^2 \varphi),
\label{4eq:HEcurve_phi}
\end{gather}
where 
$\displaystyle{
k_a = \frac{2\ii\sqrt[4]{e_{2a-1}e_{2a}}}{\sqrt{e_{2a-1}}- \sqrt{e_{2a}}}
=\frac{1}{\beta_a}}$, $(a=1,2,3)$.
\end{lemma}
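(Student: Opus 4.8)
The plan is to obtain (\ref{4eq:HEcurve_phi}) by one substitution followed by a threefold application of the preceding Lemma. Writing $w^2 = x - b_0 = \ee^{2\ii\varphi}$, one has $x - b_j = (x - b_0) - (b_j - b_0) = w^2 - e_j$ for $j = 1, \dots, 6$, so the defining equation (\ref{4eq:hypC}) of $X$ (more precisely, its lift to $\hX$ along the branch $w = \ee^{\ii\varphi}$) becomes
\begin{equation*}
y^2 = -(x - b_0)\prod_{j=1}^{6}(x - b_j) = -\,w^2 \prod_{a=1}^{3}\bigl(w^2 - e_{2a-1}\bigr)\bigl(w^2 - e_{2a}\bigr).
\end{equation*}
Thus everything reduces to simplifying each of the three factors $(w^2 - e_{2a-1})(w^2 - e_{2a})$, which are quadratic in $w^2$, and then multiplying by the leftover factor $w^2$ coming from $x-b_0$.

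Next I would verify that the standing hypotheses on the branch points put each pair $\{e_{2a-1}, e_{2a}\}$ into the exact shape required by the preceding Lemma. Indeed, $\sqrt{e_{2a-1}} = \alpha_a + \ii\beta_a$ and $\sqrt{e_{2a}} = \alpha_a - \ii\beta_a$ with $\alpha_a^2 + \beta_a^2 = 1$ give $\sqrt{e_{2a-1}e_{2a}} = \alpha_a^2 + \beta_a^2 = 1$, hence $e_{2a-1}e_{2a} = 1$, which is precisely the normalization $e_1 e_2 = 1$ assumed there. Applying the Lemma to each pair (with $e_1, e_2$ replaced by $e_{2a-1}, e_{2a}$) yields, for $w = \ee^{\ii\varphi}$,
\begin{equation*}
\bigl(w^2 - e_{2a-1}\bigr)\bigl(w^2 - e_{2a}\bigr) = \frac{4}{k_a^2}\,\ee^{2\ii\varphi}\bigl(1 - k_a^2\sin^2\varphi\bigr), \qquad k_a = \frac{2\ii\sqrt[4]{e_{2a-1}e_{2a}}}{\sqrt{e_{2a-1}} - \sqrt{e_{2a}}} = \frac{1}{\beta_a},
\end{equation*}
for $a = 1, 2, 3$.

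Substituting these three identities into the displayed expression for $y^2$ and collecting the elementary factors then finishes the argument: the three right-hand sides contribute the constant $4^3/(k_1^2 k_2^2 k_3^2)$, the exponential $\ee^{6\ii\varphi}$, and the product $\prod_{a=1}^{3}(1 - k_a^2\sin^2\varphi)$, and the remaining factor $w^2 = \ee^{2\ii\varphi}$ raises the exponential to $\ee^{8\ii\varphi}$, giving (\ref{4eq:HEcurve_phi}) after recording the overall numerical constant. I do not expect any genuine obstacle here; the only points needing a little care are (i) confirming $e_{2a-1}e_{2a} = 1$ so that the preceding Lemma is literally applicable to each pair, and (ii) bookkeeping the powers of $\ee^{\ii\varphi}$ together with the numerical constants, since both the isolated factor $x - b_0 = w^2$ and each of the three normalized quadratics contribute a factor $\ee^{2\ii\varphi}$.
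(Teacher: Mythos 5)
Your route is exactly the paper's (the paper records only ``the direct computation shows the following''): substitute $w^2=x-b_0=\ee^{2\ii\varphi}$ so that $x-b_j=w^2-e_j$, check $e_{2a-1}e_{2a}=(\alpha_a^2+\beta_a^2)^2=1$ so the preceding Lemma applies to each pair, and multiply the three resulting identities together with the leftover factor $w^2$. The method and every intermediate identity are correct. The one point you glossed over with ``after recording the overall numerical constant'' is that your computation yields
\[
y^2=-\frac{64\,\ee^{8\ii\varphi}}{k_1^2k_2^2k_3^2}\prod_{a=1}^{3}\bigl(1-k_a^2\sin^2\varphi\bigr),
\]
whereas the printed display (\ref{4eq:HEcurve_phi}) carries an extra factor of $4$ in the numerator. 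Your constant $4^3=64$ is the right one: it is the value consistent with the later parametrization $(\ee^{\ii\varphi},\,8\ii\tgamma\tK(\varphi)\ee^{3\ii\varphi})$ of points of $\hX$ (which gives $y^2=-64\,\tK(\varphi)^2\ee^{8\ii\varphi}$) and with the one-forms (\ref{eq:3.3_1stdiff}); the stray ``$4$'' in (\ref{4eq:HEcurve_phi}) is a typographical slip. You should state explicitly that you obtain $64$ rather than asserting verbatim agreement with the displayed equation.
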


As we are concerned with the situation that $y/\ee^{4\ii \varphi}$ is real or pure imaginary, we assume that the branch points surround the circle whose center is $(b_0,0)$ and radius is $1$.
Noting that we handle the double covering $\hX$ with twelve branch points, we define $\varphi_{\fb a}^{+\pm}:=\pm \sin^{-1}(1/k_a)$ and $\varphi_{\fb a}^{-\pm}
=\pi - \varphi_{\fb a}^{+\pm}$ $(a= 1, 2, 3)$ as in Figure \ref{fg:Fig01}.
\revs{
In other words, we have twelve branch points $(\ee^{\varphi_{\fb a}^{\pm\pm}}, 0)$ in $\hX$, $a=1, 2, 3$.
}

\begin{assumption}\label{Asmp}
{\rm{
For the case Figure \ref{fg:Fig01} (a) $k_1> k_2 > k_3>1.0$, we assume $\varphi_\fb^\pm:=\varphi_{\fb1}^{+\pm}$ whereas for the case Figure \ref{fg:Fig01} (b) we let $\varphi_\fb^\pm:=\varphi_{\fb1}^{\pm+}$ for $(k_3> k_2 > k_1 > 1.0)$.
}}
\end{assumption}

\bigskip

\begin{figure}
\begin{center}

\includegraphics[width=0.42\hsize, bb= 0 0 641 668]{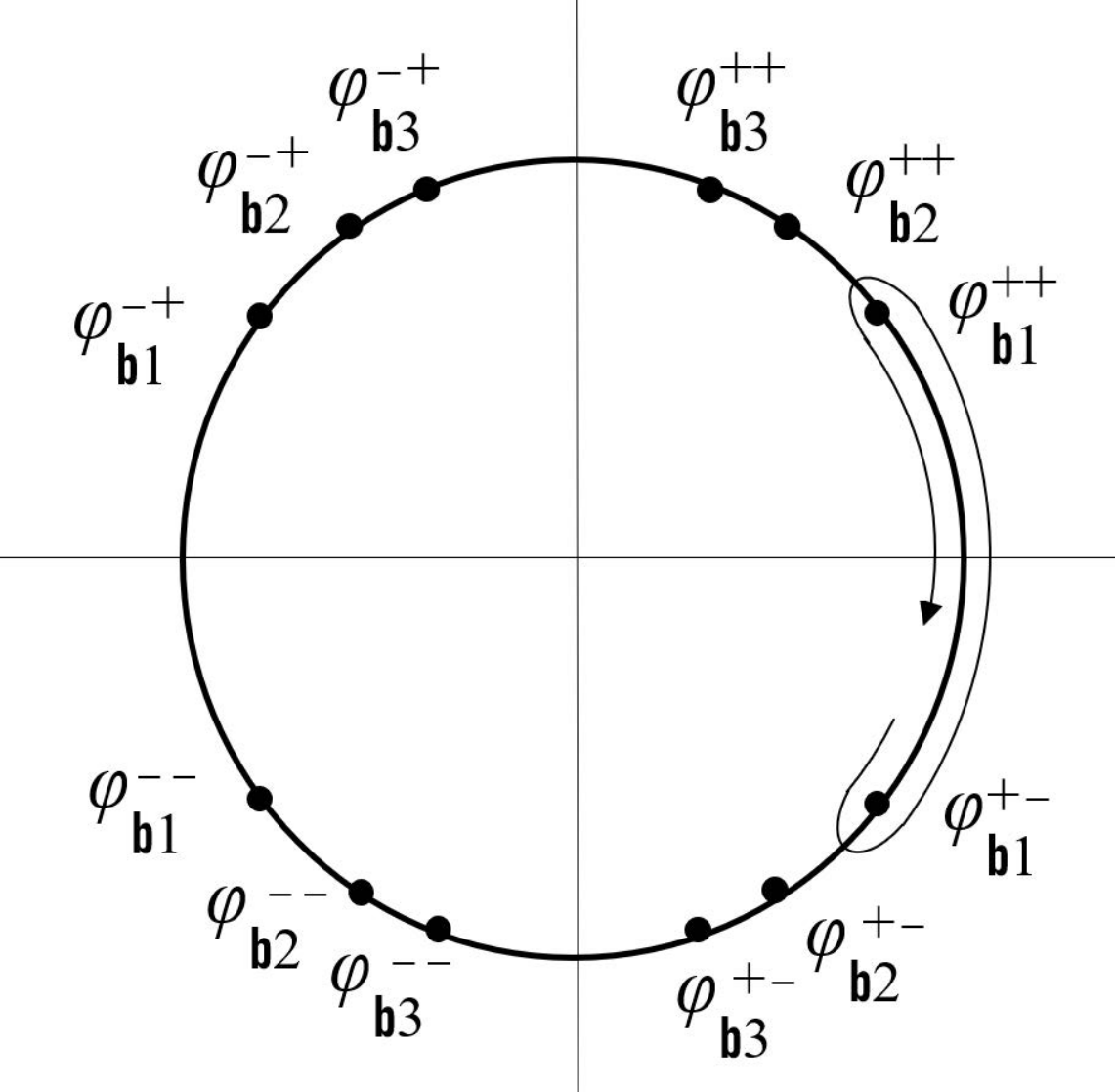}
\hskip 0.1\hsize
\includegraphics[width=0.39\hsize, bb= 0 0 539 436]{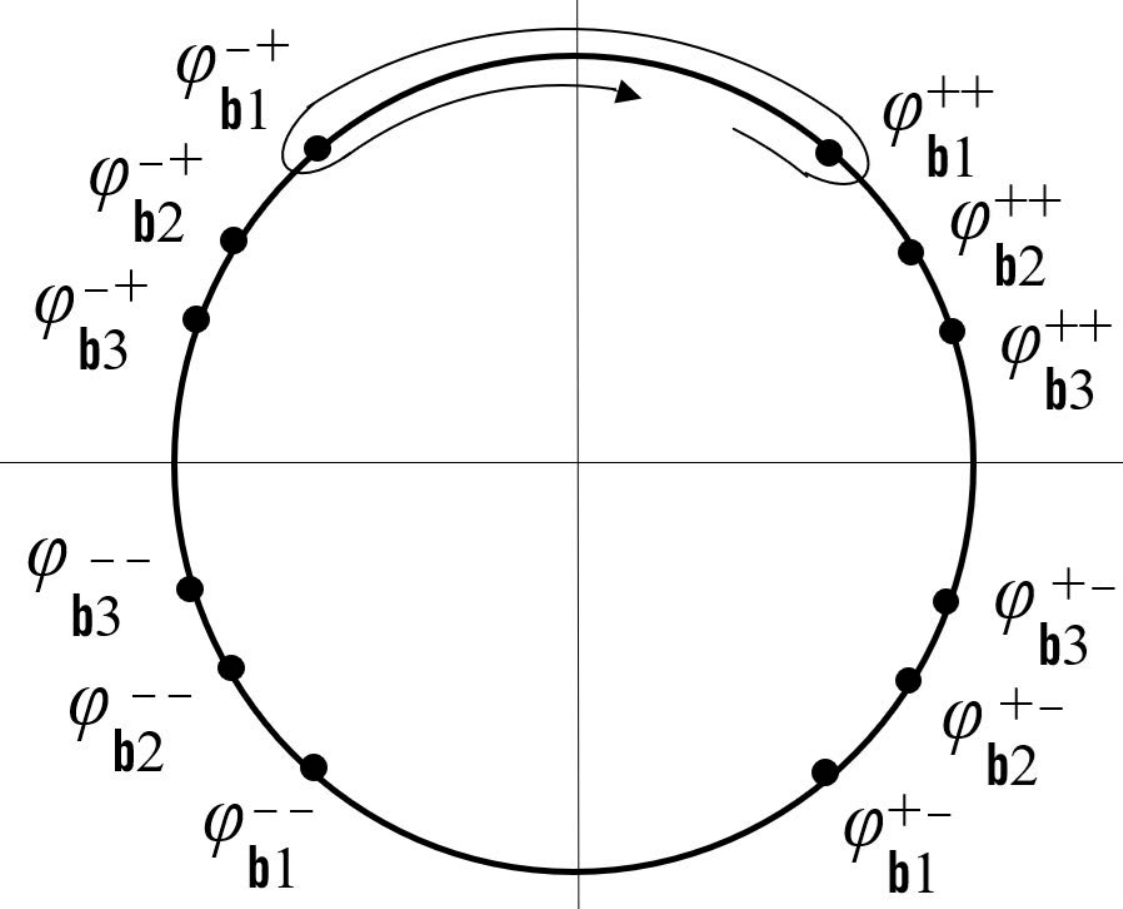}

(a) \hskip 0.35\hsize (b)

\end{center}

\caption{
The orbits of each $\varphi_i$ in the quadrature:
(a): $k_1> k_2 > k_3>1.0$.
(b): $k_3>k_2>k_1>1.0$.
}\label{fg:Fig01}
\end{figure}

\bigskip

We consider a point $((x_1, y_1),\ldots,(x_3,y_3))$ in $S^3 X$ under the condition CI, $|x_j-b_0|=1$, ($b_0 = -1$).
We define the variable $\varphi_j$ by $x_j=  \ee^{\ii \varphi_j}
(\ee^{\ii \varphi_j}+ b_0 \ee^{-\ii \varphi_j})$ $=2 \ii  \ee^{\ii \varphi_j} \sin \varphi_j$, $(j=1,2,3)$.
Noting
$d x_j = 2 \ii  \ee^{2\ii \varphi_j}d\varphi_j$ and
 $x_j^\ell d x_j = (2\ii)^{\ell+1}  \ee^{(2+\ell)\ii \varphi_j}\sin^\ell \varphi_j\ d \varphi_j$,
we have the holomorphic one forms
$(\nuI{1}^{(j)}, \nuI{2}^{(j)},  \nuI{3}^{(j)})$ $(j=1,2,3)$, 
\begin{equation}
\left(
\frac{ \ee^{-2\ii\varphi_j}\ d \varphi_j}{
8  K(\varphi_j)},
\frac{-\ii \ee^{-\ii\varphi_j}\sin(\varphi_j)\ d \varphi_j}{
4  K(\varphi_j)},
\frac{ -\sin^2 \varphi_j\ d \varphi_j}{
2K(\varphi_j)}\right),
\label{eq:3.3_1stdiff}
\end{equation}
where $K(\varphi):=\tgamma\tK(\varphi)$, $\displaystyle{
\tK(\varphi):=\frac{\sqrt{
 (1-k_1^2 \sin^2 \varphi)(1-k_2^2 \sin^2 \varphi)
(1-k_3^2 \sin^2 \varphi)}}
{k_1k_2k_3}}$, and $\tgamma = \pm 1$.
$\tgamma$ represents the sheet from $\hvarpi_x:\hX \to \revs{\PP^1}$, 
$(( \ee^{\ii \varphi},
\ii 8\tgamma \tK(\varphi)\ee^{3\ii \varphi}) \mapsto \ee^{\ii \varphi})$.

Using the ambiguity $\tgamma$, we handle $-(\nuI{1}^{(j)}, \nuI{2}^{(j)},  \nuI{3}^{(j)})$ rather than $(\nuI{1}^{(j)}, \nuI{2}^{(j)},  \nuI{3}^{(j)})$ $(j=1,2,3)$ from here.
We note that $( \ee^{\ii \varphi}, \ii8 K\ee^{3\ii\varphi})$ denotes a point in $\hX$.

Then we obviously have the following lemmas:

\begin{lemma} \label{4lm:dudphi}
For $( \ee^{\ii \varphi_j}, \ii8 K_j\ee^{3\ii\varphi_j})_{j=1, 2, 3}\in S^3\hX$, where $K_j:=\tgamma_j\tK(\varphi_j)$, ($j=1, 2, 3$), the following holds:

\noindent
$$\displaystyle{
\begin{pmatrix} d u_1 \\ d u_2\\ du_3\end{pmatrix}
=-
\begin{pmatrix}
\frac{ \ee^{-2\ii\varphi_1}}{8 K_1}&
\frac{ \ee^{-2\ii\varphi_2}}{8 K_2}&
\frac{ \ee^{-2\ii\varphi_3}}{8 K_3}\\
\frac{\ii \ee^{-\ii\varphi_1}\sin(\varphi_1)}{4 K_1}&
\frac{\ii \ee^{-\ii\varphi_2}\sin(\varphi_2)}{4 K_2}&
\frac{\ii \ee^{-\ii\varphi_3}\sin(\varphi_3)}{4 K_3}\\
\frac{- \sin^2(\varphi_1)}{2 K_1}&
\frac{- \sin^2(\varphi_2)}{2 K_2}&
\frac{- \sin^2(\varphi_3)}{2 K_3}\\
\end{pmatrix}
\begin{pmatrix} d \varphi_1 \\ d \varphi_2 \\d \varphi_3\end{pmatrix}
}.
$$
Let the matrix be denoted by $\cL$.
Then the determinant of $\cL$, \revs{
$$
\det(\cL)=\displaystyle{
\frac{\sin(\varphi_1-\varphi_2)\sin(\varphi_2-\varphi_3)\sin(\varphi_3-\varphi_1)}{4^3 K_1 K_2 K_3}}.
$$
}
\end{lemma}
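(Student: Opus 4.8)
The plan is to verify the matrix identity by direct substitution, then compute the $3\times 3$ determinant and recognize the resulting expression as a Vandermonde-type combination that factors into the product of sines.

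First I would recall the three holomorphic one-forms $(\nuI{1}^{(j)},\nuI{2}^{(j)},\nuI{3}^{(j)})$ derived in (\ref{eq:3.3_1stdiff}) and write $du_i = \sum_{j=1}^{3}\nuI{i}^{(j)}$ as in (\ref{4eq:firstdiff}). Substituting $K(\varphi_j)=\tgamma_j\tK(\varphi_j)=K_j$ and using the sign convention $-(\nuI{1}^{(j)},\nuI{2}^{(j)},\nuI{3}^{(j)})$ that was fixed just before the lemma, the column indexed by $j$ of the claimed matrix $\cL$ is exactly $-(\nuI{1}^{(j)}/d\varphi_j,\nuI{2}^{(j)}/d\varphi_j,\nuI{3}^{(j)}/d\varphi_j)^t$. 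This establishes the matrix equation; it is essentially a bookkeeping step once (\ref{eq:3.3_1stdiff}) is in hand, since the point $(\ee^{\ii\varphi_j},\ii 8K_j\ee^{3\ii\varphi_j})$ sits on $\hX$ by the description of $\hvarpi_x$ given above.

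Next I would compute $\det(\cL)$. Factoring the scalar $-1/(8K_j)$, $-\ii/(4K_j)$, $1/(2K_j)$ from the three rows (or, more symmetrically, pulling $1/K_j$ out of each column and the numerical constants out of each row) reduces the determinant to $\dfrac{1}{4^3 K_1K_2K_3}$ times the determinant of the matrix with columns $(\ee^{-2\ii\varphi_j},\, \ee^{-\ii\varphi_j}\sin\varphi_j,\, \sin^2\varphi_j)^t$ — after tracking that $(-1)\cdot(-\ii)\cdot 1$ together with the $\ii$ in the middle entry and the signs gives a net real constant; I would track the power of $\ii$ and the signs carefully here. Writing $\sin\varphi_j = (\ee^{\ii\varphi_j}-\ee^{-\ii\varphi_j})/(2\ii)$ and setting $\zeta_j := \ee^{-\ii\varphi_j}$, each column becomes a quadratic polynomial in $\zeta_j$ (up to an overall power of $\zeta_j$ and constants), so the determinant is a Vandermonde determinant $\prod_{i<j}(\zeta_i-\zeta_j)$ up to constants and monomial factors $\prod_j \zeta_j^{c}$. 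Converting $\zeta_i-\zeta_j = \ee^{-\ii\varphi_i}-\ee^{-\ii\varphi_j} = -2\ii\,\ee^{-\ii(\varphi_i+\varphi_j)/2}\sin\!\big(\tfrac{\varphi_i-\varphi_j}{2}\big)$... — actually it is cleaner to note $\det$ of columns $(1,\sin\varphi_j,\sin^2\varphi_j)$ after extracting $\ee^{-2\ii\varphi_j}$ from column $j$ is the Vandermonde $\prod_{i<j}(\sin\varphi_j-\sin\varphi_i)$, and then use the product-to-sum identity $\sin\varphi_j-\sin\varphi_i = 2\cos\!\big(\tfrac{\varphi_i+\varphi_j}{2}\big)\sin\!\big(\tfrac{\varphi_j-\varphi_i}{2}\big)$, together with the collected phase factors $\prod_j \ee^{-2\ii\varphi_j}$, to reassemble everything into $\sin(\varphi_1-\varphi_2)\sin(\varphi_2-\varphi_3)\sin(\varphi_3-\varphi_1)$ using $\sin(\varphi_i-\varphi_j) = 2\sin\!\big(\tfrac{\varphi_i-\varphi_j}{2}\big)\cos\!\big(\tfrac{\varphi_i-\varphi_j}{2}\big)$ and matching the exponential prefactors against $\cos\!\big(\tfrac{\varphi_i+\varphi_j}{2}\big)$ and $\cos\!\big(\tfrac{\varphi_i-\varphi_j}{2}\big)$ via $2\cos\theta = \ee^{\ii\theta}+\ee^{-\ii\theta}$.

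The main obstacle is purely computational: keeping exact track of the powers of $2$, the powers of $\ii$, and the signs as one moves between the $\ee^{\pm\ii\varphi_j}$ representation and the $\sin/\cos$ representation, so that the final numerical constant comes out to exactly $1/(4^3 K_1K_2K_3)$ with no stray factor of $\ii$ or $-1$. A clean way to organize this is to compute the determinant first in the variables $\zeta_j=\ee^{\ii\varphi_j}$ where it is manifestly Vandermonde, obtaining a rational function of the $\zeta_j$, and only at the very end substitute back $\zeta_j - \zeta_i = \ee^{\ii\varphi_i}(\ee^{\ii(\varphi_j-\varphi_i)}-1)$ and simplify; this isolates the sign/phase bookkeeping into one final step rather than spreading it through the calculation. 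No genuinely hard idea is needed beyond the Vandermonde observation.
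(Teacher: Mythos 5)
Your first step is exactly what the paper intends (the paper offers no proof of this lemma beyond ``we obviously have''): column $j$ of $\cL$ is read off from the one-forms (\ref{eq:3.3_1stdiff}) together with the sign convention fixed just above the lemma, so the matrix identity is bookkeeping, and your treatment of it is fine. The Vandermonde idea for the determinant is also the right one. But the specific ``cleaner'' implementation you settle on fails: you cannot factor $\ee^{-2\ii\varphi_j}$ out of all of column $j$ and be left with $(1,\sin\varphi_j,\sin^2\varphi_j)^t$, because the three entries carry \emph{different} powers of $\ee^{-\ii\varphi_j}$ (namely $\ee^{-2\ii\varphi_j}$, $\ee^{-\ii\varphi_j}$ and $1$). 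What that extraction actually leaves is $(1,\ \ee^{\ii\varphi_j}\sin\varphi_j,\ \ee^{2\ii\varphi_j}\sin^2\varphi_j)^t$, i.e.\ a Vandermonde column in $w_j:=\ee^{\ii\varphi_j}\sin\varphi_j=(\ee^{2\ii\varphi_j}-1)/2\ii$. That is in fact the clean route: $w_j-w_i=\ee^{\ii(\varphi_i+\varphi_j)}\sin(\varphi_j-\varphi_i)$, so $\prod_{i<j}(w_j-w_i)=\ee^{2\ii(\varphi_1+\varphi_2+\varphi_3)}\prod_{i<j}\sin(\varphi_j-\varphi_i)$ and the phase cancels exactly against the extracted $\prod_j\ee^{-2\ii\varphi_j}$; no half-angle identities are needed. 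Had you genuinely obtained $\prod_{i<j}(\sin\varphi_j-\sin\varphi_i)=\prod_{i<j}2\cos\bigl(\tfrac{\varphi_i+\varphi_j}{2}\bigr)\sin\bigl(\tfrac{\varphi_j-\varphi_i}{2}\bigr)$, it would \emph{not} reassemble into $\prod\sin(\varphi_i-\varphi_j)$: that requires cosines of half-differences, not half-sums, and the leftover exponential prefactor would not cancel either.

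Second, you assert, without doing the count, that the scalar prefactor comes out to the real number $1/4^3$. Carrying it through for the matrix exactly as printed --- factor $1/K_j$ from each column, then $1/8$, $\ii/4$, $-1/2$ from the three rows, and include the overall $(-1)^3$ --- gives $\ii/(4^3K_1K_2K_3)$ times $\prod_{i<j}\sin(\varphi_j-\varphi_i)$. An odd power of $\ii$ necessarily survives, because exactly one row carries a single factor of $\ii$ while the reduced determinant $\prod_{i<j}\sin(\varphi_j-\varphi_i)=\sin(\varphi_1-\varphi_2)\sin(\varphi_2-\varphi_3)\sin(\varphi_3-\varphi_1)$ is real; rearranging the signs of the rows cannot remove it. So the printed matrix and the printed determinant value are mutually inconsistent by a factor of $\ii$ (an inconsistency already visible between row 2 of $\cL$ and the second entry of (\ref{eq:3.3_1stdiff})). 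A careful version of your computation would have surfaced this; as written, your proposal papers over it by declaring the constant real. In short: the outline is correct and matches the paper's (implicit) direct-computation approach, but one of your two proposed routes to the Vandermonde form is invalid, and the constant-tracking step --- which you yourself flag as the main obstacle --- is left unverified and, for the statement as printed, does not come out as claimed.
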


We also have the inverse of Lemma \ref{4lm:dudphi} at a regular locus:

\begin{lemma} \label{4lm:dudphi3}
For $\varphi_j \in [\varphi_\fb^-, \varphi_\fb^+]$, $(j=1,2,3)$ such that $\varphi_\neq \varphi_j$ $(i\neq j)$, we have
\begin{equation}
\displaystyle{
\begin{pmatrix} d \varphi_1 \\ d \varphi_2 \\ d \varphi_3\end{pmatrix}
=\cK \cM
\begin{pmatrix} d u_1 \\ d u_2\\ du_3\end{pmatrix}
},
\qquad
\cL^{-1}=\cK \cM,
\label{4eq:Elas3.4}
\end{equation}
where
$
\displaystyle{
\cK
:=-
\begin{pmatrix}
\frac{K_1}{\sin(\varphi_2-\varphi_1)\sin(\varphi_3-\varphi_1)}&0& 0 \\
0& \frac{K_2}{\sin(\varphi_3-\varphi_2)\sin(\varphi_1-\varphi_2)}&0 \\
0&0&\frac{K_3}{\sin(\varphi_1-\varphi_3)\sin(\varphi_2-\varphi_3)} 
\end{pmatrix}
}
$ and,
{\small{
$$
\cM
:=
\begin{pmatrix}
8  \sin\varphi_2\sin\varphi_3&
-4\ii(2\ii \sin\varphi_2\sin\varphi_3 - \sin(\varphi_2+\varphi_3) )&
-2 \ee^{-\ii (\varphi_2+\varphi_3)} \\
8  \sin\varphi_1\sin\varphi_3&
-4\ii(2\ii \sin\varphi_1\sin\varphi_3 - \sin(\varphi_3+\varphi_1) )&
-2 \ee^{-\ii (\varphi_1+\varphi_3)} \\
8  \sin\varphi_1\sin\varphi_2&
-4\ii(2\ii \sin\varphi_1\sin\varphi_2 - \sin(\varphi_1+\varphi_2) )&
-2\ee^{-\ii (\varphi_1+\varphi_2)} \\
\end{pmatrix}.
$$
}}
\end{lemma}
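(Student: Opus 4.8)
The plan is to prove Lemma \ref{4lm:dudphi3} by directly inverting the matrix $\cL$ from Lemma \ref{4lm:dudphi}, exploiting the fact that $\cL$ factors as a product of a diagonal matrix and a matrix whose entries depend on the $\varphi_j$ in a nearly Vandermonde fashion. First I would write $\cL = \cL' \cD^{-1}$, where $\cD = \mathrm{diag}(8K_1, 8K_2, 8K_3)$ (up to rescaling the rows of $\cL'$ by constants $1, \tfrac12, \tfrac14$ to match the three first-kind differentials) and $\cL'$ has $j$-th column $-\bigl(\ee^{-2\ii\varphi_j},\, 2\ii\ee^{-\ii\varphi_j}\sin\varphi_j,\, -4\sin^2\varphi_j\bigr)^{t}$ after clearing denominators; then $\cL^{-1} = \cD\, (\cL')^{-1}$. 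The hypothesis $\varphi_i \neq \varphi_j$ for $i \neq j$ guarantees $\det\cL \neq 0$ by the determinant formula already recorded in Lemma \ref{4lm:dudphi}, so the inverse exists on the stated regular locus.

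The key computational step is to recognize the column $-\bigl(\ee^{-2\ii\varphi}, 2\ii\ee^{-\ii\varphi}\sin\varphi, -4\sin^2\varphi\bigr)^t$ as essentially $-\ee^{-2\ii\varphi}(1, \zeta, \zeta^2)^t$ where $\zeta = 2\ii\ee^{\ii\varphi}\sin\varphi \cdot \ee^{-\ii\varphi} \cdot (\text{something})$ — more precisely, since $2\ii\ee^{-\ii\varphi}\sin\varphi = \ee^{-2\ii\varphi}(\ee^{2\ii\varphi}-1) \cdot$ (a unit), the middle and bottom entries are $\ee^{-2\ii\varphi}$ times powers of $(\ee^{2\ii\varphi} - 1)/\ee^{0}$; in any case one checks that $\bigl(\ee^{-2\ii\varphi_j}, 2\ii\ee^{-\ii\varphi_j}\sin\varphi_j, -4\sin^2\varphi_j\bigr)$ is $\ee^{-2\ii\varphi_j}$ times $(1, t_j, t_j^2)$ for the auxiliary variable $t_j := 2\ii\ee^{-\ii\varphi_j}\sin\varphi_j = 1 - \ee^{-2\ii\varphi_j}$. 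This reduces the problem to inverting a Vandermonde matrix in $t_1, t_2, t_3$ conjugated by the diagonal matrix $\mathrm{diag}(\ee^{-2\ii\varphi_1}, \ee^{-2\ii\varphi_2}, \ee^{-2\ii\varphi_3})$. The classical cofactor formula for the inverse Vandermonde then produces entries that are elementary symmetric functions $e_1(t), e_2(t)$ of the remaining two variables divided by $\prod_{k\neq j}(t_j - t_k)$, and one checks that $t_j - t_k = \ee^{-2\ii\varphi_k} - \ee^{-2\ii\varphi_j} = 2\ii\ee^{-\ii(\varphi_j+\varphi_k)}\sin(\varphi_j-\varphi_k)$, which is exactly what manufactures the $\sin(\varphi_i - \varphi_j)$ denominators appearing in $\cK$ and the $\ee^{-\ii(\varphi_i+\varphi_j)}$ factors appearing in $\cM$.

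After this identification, the proof becomes the bookkeeping task of matching: the diagonal factor $\cK$ absorbs $K_j$ (from $\cD$) together with the two sine-differences $\sin(\varphi_i-\varphi_j)$ coming from the two Vandermonde denominators involving row $j$, while the matrix $\cM$ collects the numerators — the $1$ (from $\ee^{-2\ii\varphi}$), $t$ (from $2\ii\ee^{-\ii\varphi}\sin\varphi$), $t^2$ (from $-4\sin^2\varphi$) structure, re-expanded in terms of $\sin\varphi_k\sin\varphi_\ell$, $\sin(\varphi_k+\varphi_\ell)$ and $\ee^{-\ii(\varphi_k+\varphi_\ell)}$. I would verify the $(1,1)$, $(1,2)$, $(1,3)$ entries of $\cM$ explicitly — e.g. the product of the two relevant $t$-differences contributes a factor $4\sin\varphi_2\sin\varphi_3\,\ee^{-\ii(\varphi_2+\varphi_3)}(-1)$, and matching against $8\sin\varphi_2\sin\varphi_3$ etc.\ pins down the overall normalization — and then note that the remaining two rows follow by the evident cyclic symmetry $1\to2\to3\to1$ of the construction. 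The main obstacle is purely organizational: keeping consistent sign and factor-of-$2$ conventions through the three differentials' differing normalizations (the $\tfrac18, \tfrac14, \tfrac12$ weights), the sign choice $-(\nuI{1}, \nuI{2}, \nuI{3})$ adopted just before the lemma, and the orientation of each sine-difference $\sin(\varphi_i-\varphi_j)$ versus $\sin(\varphi_j-\varphi_i)$; a single sanity check against Lemma \ref{4lm:dudphi}'s determinant (the product of the diagonal of $\cK^{-1}$ must reproduce $4^3 K_1K_2K_3 \det\cL$ up to the known formula) closes the argument.
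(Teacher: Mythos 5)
Your overall strategy --- factor $\cL$ as (a sign times) a Vandermonde matrix times a diagonal matrix and invert with the classical cofactor formula --- is sound, and it is really just a structured version of what the paper dismisses as ``straightforward computations'': the Vandermonde structure is forced by the fact that the differentials are $\nuI{i}=x^{i-1}dx/2y$, so the $j$-th column of $\cL$ is proportional to $(1,x_j,x_j^2)^t$. With that factorization the inversion does reproduce $\cK\cM$ exactly (including the $\ee^{-\ii(\varphi_k+\varphi_l)}$ entries, which arise from cancelling the $\ee^{\ii(2\varphi_j+\varphi_k+\varphi_l)}$ in the two Vandermonde denominators against the $\ee^{2\ii\varphi_j}$ coming from the inverted diagonal factor), so the plan closes.

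There is, however, one concrete error in your identification of the Vandermonde variable. You set $t_j:=2\ii\ee^{-\ii\varphi_j}\sin\varphi_j=1-\ee^{-2\ii\varphi_j}$ and claim the column is $\ee^{-2\ii\varphi_j}(1,t_j,t_j^2)^t$; this is false as stated, since $\ee^{-2\ii\varphi_j}t_j=2\ii\ee^{-3\ii\varphi_j}\sin\varphi_j$, not the actual second entry $2\ii\ee^{-\ii\varphi_j}\sin\varphi_j$. The correct statement is that the column $\bigl(\ee^{-2\ii\varphi_j},\,2\ii\ee^{-\ii\varphi_j}\sin\varphi_j,\,-4\sin^2\varphi_j\bigr)^t$ equals $\ee^{-2\ii\varphi_j}(1,x_j,x_j^2)^t$ with
$$x_j=2\ii\ee^{+\ii\varphi_j}\sin\varphi_j=\ee^{2\ii\varphi_j}-1,$$
i.e.\ the curve coordinate itself ($x_j=\ee^{2\ii\varphi_j}+b_0$ with $b_0=-1$). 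Consequently the difference is $x_j-x_k=2\ii\ee^{+\ii(\varphi_j+\varphi_k)}\sin(\varphi_j-\varphi_k)$, with the \emph{opposite} sign in the exponent from what you wrote; the $\ee^{-\ii(\varphi_k+\varphi_l)}$ in the third column of $\cM$ is then produced only after multiplying by the $8K_j\ee^{2\ii\varphi_j}$ from the inverse of the diagonal factor, not directly by the denominators. If you carried your $t_j$ through literally the bookkeeping would not match $\cM$; with $x_j$ in its place every entry checks out (e.g.\ $x_kx_l\cdot\ee^{-\ii(\varphi_k+\varphi_l)}=-4\sin\varphi_k\sin\varphi_l\,\ee^{+\ii(\varphi_k+\varphi_l)}\ee^{-\ii(\varphi_k+\varphi_l)}\cdot(-1)$ gives the $8\sin\varphi_k\sin\varphi_l$ after the overall $-2$, and $x_k+x_l$ gives $4\cos(\varphi_k-\varphi_l)-4\ee^{-\ii(\varphi_k+\varphi_l)}$, which is the second column of $\cM$). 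So: right method, one wrong intermediate identity that must be corrected before the ``bookkeeping'' stage can succeed.
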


\begin{proof}
The straightforward computations show it. \qed
\end{proof}

We remark that (\ref{4eq:Elas3.4}) in Lemma \ref{4lm:dudphi3} means that even if $\varphi_i$ $(i=1,2,3)$ is real, $d\varphi_j$ is complex valued one-form.
We let it decomposed to $d\varphi_j = d\varphi_{j,\rr}+ \ii d \varphi_{j, \ri}$.
Further we introduce $\varphi := \varphi_1 +\varphi_2 +\varphi_3 \in \RR$ and $d\varphi = d\varphi_{\rr}+ \ii d \varphi_{\ri}$; 
$\psi_\rr = 2 \varphi$, $d\psi_\rr = 2 d\varphi_\rr$ and $d\psi_\ri =2 d\varphi_\ri$ for $\psi$ in (\ref{4eq:loopMKdV2}) and (\ref{4eq:gaugedMKdV2}).
We sometimes write $\varpi_{a,\rr}:=\varphi_a$.

\color{black}
\begin{remark}
{\rm{
For a point $\gamma' \in \tX$, the holomorphic one form $\nu(\gamma')$ is regarded as $\nu(\gamma')=\nu(\hkappa_X \gamma')$.
Lemma \ref{4lm:dudphi} means that for $u\in \tv(\gamma=(\gamma_1,\gamma_2, \gamma_3)) = \displaystyle{\sum \int_{\gamma_i} \nu}$, $du = d\tv(\gamma)$ is equal to $\sum \nu(\gamma_i)$.
We regard this as a linear transfomation of $d \gamma_i$, i.e., $du = \cL(d \gamma)$, where $d \gamma= \sum d \gamma_i$ and $d \gamma_i$ is an element of $\hvarpi_x \hkappa_X \tX =\PP^1$ because $d \gamma_i$ is locally defined.
The matrix $\cL$ is the matrix representation of the transfomarion $\cL$.
Since $\cL$ is invertible due to the Abel-Jacobi theorem for the regular locus, $d \gamma = \cL^{-1}(du)= \cK \cM du$ there.
Then the transfomation $\cL^{-1}$ (or the matrix $\cK \cM$) can be also interpreted as the pullback  $\tv^* : T^*_{ u} \CC^3 \to T^*_{\gamma} S^3 \tX$ for a point $\tv(\gamma)=u \in \CC^3$ for the Abelian integral $\tv: S^3 \tX \to \CC^3$, and for the Abel-Jacobi map $\hv: S^3 \hX \to \hJ_X$,
Then we regard $\cL$ as $\tv^{-1*}: T^*_{\gamma} S^3 \tX \to T^*_u\CC^3$.

We note that since $d \gamma_i$ is defined on $\PP^1$ but $\nu(\gamma_i)$ is defined on $\hX$, $\tgamma_i$ is defined for the lift from $\PP^1$ to $\hX$.
Of course, the lift is not a map but we consider $\gamma_i \in \tX$ and thus the sign $\tgamma_i$ is uniquely determined for a given $\gamma_i\in \tX$.

Further, we define these correspondences only for the regular locus, but we show that some of the view can be extended to the global one even for singular locus as we will mention in Section 5.

Originally Weierstrass basically considered these maps to find his sigma function Al-function in \cite{Wei54} by implicitly studying the sine-Gordon equation, and Baker also used this matrix to find the so-called KdV hierarchy in \cite{Baker03}.
They expressed $\partial_{u_i}$ in terms of $\partial_{x_i}$ by using such a matrix, and realized the inversion problem of $\tv$ or $v$ as the Jacobi inversion formula.
}}
\end{remark}

\color{black}

\section{Real hyperelliptic solutions of the gauged MKdV equation over $\RR$}

\revs{
For a point $\gamma \in S^3\tX$ and $u=\tv(\gamma) \in \CC^3$, the fibers $T^*_{ \gamma}S^3 \tX$ and $T^*_{u} \CC^3$ are isomorphic to $\CC^3$ as the real or the complex vector space.
When we regard the matrix $\cK \cM$ as a linear map from $\CC^3$ to $\CC^3$ or the pullback $\tv^*:T^*_{u} \CC^3 \to T^*_\gamma S^3 \tX$  for the point $\gamma$, we decompose the image of the map $\tv^*$ and $\tv^{-1*}$ from the viewpoint of the reality condition so that the decomposition induces a map from $\RR^\ell$ to $\RR^\ell$.
In other words, even though the entries in $\cK \cM$ are complex valued morophic function on $S^3 \hX$, we may find real valued bases $\cV ds$ in $T^*_\gamma S^3 \tX$ and $\be$ in $T^*_\be \CC^3$ so that $\tv^*(\be ds) = \cV ds$.
We provide such basis $\{\cV_1, \cV_2, \cV_3\}$, which is linked to the basis 
$\displaystyle{\left\{\be_1:=\begin{pmatrix} 1/2 \\ -1/2 \\ 1\end{pmatrix}, 
\be_2:=\begin{pmatrix} 1 \\ 0 \\ 0\end{pmatrix}, 
\be_3:=\begin{pmatrix} \ii/2 \\ \ii/2 \\ 0\end{pmatrix}\right\}}$ of $\CC^3$, i.e.,
\begin{equation}
\CC^3 = \langle\be_1, \be_2, \be_3\rangle_\CC, \quad
\RR^2 = \langle\be_1, \be_2\rangle_\RR \subset \CC^3
\label{eq:basis_in_C3}
\end{equation}
by the following lemma.
}

Direct computations show the following lemma:
\begin{lemma}\label{lm4.1}
For the real meromorphic valued vectors on $S^3 \hX$,
\begin{equation*}
\cV_1
:=\begin{pmatrix}
\frac{2\tK_1\cos(\varphi_2-\varphi_3)}{\sin(\varphi_2-\varphi_1)\sin(\varphi_3-\varphi_1)}\\
\frac{2\tK_2\cos(\varphi_3-\varphi_1)}{\sin(\varphi_3-\varphi_2)\sin(\varphi_1-\varphi_2)}\\
\frac{2\tK_3\cos(\varphi_1-\varphi_2)}{\sin(\varphi_1-\varphi_3)\sin(\varphi_2-\varphi_3)}
\end{pmatrix},\quad
\cV_2
:=\begin{pmatrix}
\frac{8\tK_1\sin(\varphi_2)\sin(\varphi_3)}{\sin(\varphi_2-\varphi_1)\sin(\varphi_3-\varphi_1)}\\
\frac{8\tK_2\sin(\varphi_3)\sin(\varphi_1)}{\sin(\varphi_3-\varphi_2)\sin(\varphi_1-\varphi_2)}\\
\frac{8\tK_3\sin(\varphi_1)\sin(\varphi_2)}{\sin(\varphi_1-\varphi_3)\sin(\varphi_2-\varphi_3)}
\end{pmatrix},
\end{equation*}
\begin{equation}
\cV_3
:=\begin{pmatrix}
\frac{2\tK_1\sin(\varphi_2+\varphi_3)}{\sin(\varphi_2-\varphi_1)\sin(\varphi_3-\varphi_1)}\\
\frac{2\tK_2\sin(\varphi_3+\varphi_1)}{\sin(\varphi_3-\varphi_2)\sin(\varphi_1-\varphi_2)}\\
\frac{2\tK_3\sin(\varphi_1+\varphi_2)}{\sin(\varphi_1-\varphi_3)\sin(\varphi_2-\varphi_3)}
\end{pmatrix},\quad
\cC=\displaystyle{
\begin{pmatrix} \tgamma_1 & & \\ &\tgamma_2 & \\ & & \tgamma_3\end{pmatrix}},
\label{eq:g3CIII0}
\end{equation}
where $\tK_j:=\tK(\varphi_j)$, 
we have the following relations,
$$
\revs{\be_1}
=\cL \cC \cV_1, \quad
\revs{\be_2}
=\cL \cC \cV_2, \quad
\revs{\be_3}
=\cL \cC \cV_3.
$$
\end{lemma}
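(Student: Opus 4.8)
The plan is to verify the three relations $\be_i=\cL\cC\cV_i$ by a direct matrix computation, organised so that the work reduces to a handful of elementary trigonometric identities. First note that all entries involved are rational in $\ee^{\ii\varphi_1},\ee^{\ii\varphi_2},\ee^{\ii\varphi_3}$ and in $\tK_1,\tK_2,\tK_3$, hence meromorphic on $S^3\hX$, so it is enough to check the identities on the dense open locus where $\varphi_1,\varphi_2,\varphi_3$ are pairwise distinct. There $\cL$ is invertible by Lemma \ref{4lm:dudphi} (its determinant is the nonvanishing product of sine-differences displayed in that lemma), and Lemma \ref{4lm:dudphi3} identifies the inverse with $\cK\cM$. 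Since $\cC=\mathrm{diag}(\tgamma_1,\tgamma_2,\tgamma_3)$ with $\tgamma_j^2=1$, the three claims are, up to the leading sign conventions of those two lemmas, equivalent to
\[
\cC\,\cV_i=\cK\cM\,\be_i,\qquad i=1,2,3 .
\]
So I would (a) compute $\cM\be_i$, (b) left-multiply by the diagonal matrices $\cK$ and $\cC$, and (c) match the result against $\cV_i$.

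For step (a): since $\be_2={}^t(1,0,0)$, the vector $\cM\be_2$ is simply the first column of $\cM$, whose $j$-th entry is $8\sin\varphi_{j+1}\sin\varphi_{j+2}$, with indices read cyclically in $\{1,2,3\}$; this is exactly the trigonometric factor in the $j$-th entry of $\cV_2$. For $\be_1={}^t(\tfrac12,-\tfrac12,1)$ and $\be_3={}^t(\tfrac{\ii}{2},\tfrac{\ii}{2},0)$ I would form the corresponding linear combinations of the three columns of $\cM$ and simplify them using $\ee^{\pm\ii\theta}=\cos\theta\pm\ii\sin\theta$ together with the product-to-sum identity $2\sin\varphi_{j+1}\sin\varphi_{j+2}=\cos(\varphi_{j+1}-\varphi_{j+2})-\cos(\varphi_{j+1}+\varphi_{j+2})$. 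In each of the two cases the terms quadratic in the sines cancel against the real part of the $\ee^{-\ii(\varphi_{j+1}+\varphi_{j+2})}$ column and the spurious imaginary parts cancel as well, so that the $j$-th entry of $\cM\be_1$ collapses to a constant multiple of $\cos(\varphi_{j+1}-\varphi_{j+2})$ and that of $\cM\be_3$ to a constant multiple of $\sin(\varphi_{j+1}+\varphi_{j+2})$ --- precisely the trigonometric factors appearing in $\cV_1$ and $\cV_3$.

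For steps (b)--(c): $\cK$ is diagonal with $j$-th entry proportional to $K_j/\big(\sin(\varphi_{j+1}-\varphi_j)\sin(\varphi_{j+2}-\varphi_j)\big)$, and these denominators are exactly the ones occurring in the entries of $\cV_i$, while the factor $\tgamma_j$ of $\cC$ turns $K_j=\tgamma_j\tK_j$ into $\tgamma_jK_j=\tK_j$; hence $\cK\cM\be_i$ and $\cC\cV_i$ agree entry by entry as soon as the scalar identities of step (a) are established. Alternatively --- and perhaps more cleanly --- one can avoid inverting $\cL$ and instead compute $\cL(\cC\cV_i)$ directly: the factor $K_j$ in the $j$-th entry of $\cC\cV_i$ cancels the $1/K_j$ in the $j$-th column of $\cL$, and the remaining sum $\sum_{j}\big(\text{trig factor in }\varphi_j\big)\big/\!\prod_{k\ne j}\sin(\varphi_k-\varphi_j)$ multiplied by the column vector ${}^t\!\big(\ee^{-2\ii\varphi_j}/8,\ \ii\ee^{-\ii\varphi_j}\sin\varphi_j/4,\ -\sin^2\varphi_j/2\big)$ is a Lagrange-interpolation / partial-fraction identity that evaluates to the constant vector $\be_i$.

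The computation is elementary throughout; the only place that really demands care --- and so the main obstacle --- is the sign bookkeeping. One must keep consistent track of the leading minus signs in the definitions of $\cL$ and of $\cK$, of the replacement of $(\nuI{1}^{(j)},\nuI{2}^{(j)},\nuI{3}^{(j)})$ by its negative made in Section \ref{sec:g=3}, and of the antisymmetry $\sin(\varphi_{j+1}-\varphi_j)=-\sin(\varphi_j-\varphi_{j+1})$ relating the denominators as they are written in $\cK$ and in $\cV_i$. Once a single consistent convention is fixed, each of the three relations $\be_i=\cL\cC\cV_i$ reduces to a one-line trigonometric check; the reality of the entries of $\cV_1,\cV_2,\cV_3$ for real $\varphi_j$ is then immediate from the formulas, since $\tK_j\in\RR$ under Assumption \ref{Asmp}.
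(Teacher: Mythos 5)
Your proposal is correct and follows essentially the same route as the paper, whose proof consists precisely of the remark that $\cC^{-1}=\cC$ followed by the direct computation you spell out (either via $\cL^{-1}=\cK\cM$ from Lemma \ref{4lm:dudphi3} or by evaluating $\cL\cC\cV_i$ through the Lagrange-interpolation identities). Your explicit attention to the sign bookkeeping for the leading minus signs in $\cL$ and $\cK$ and for the factors $\tgamma_j$ is exactly the care the verification requires.
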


\begin{proof}
We note $\cC^{-1}=\cC$. 
The direct computation show them.
\end{proof}

\color{black}
\begin{remark}\label{rmk:dus_dts}
{\rm{
In terms of $\cV$'s, $\cK\cM$ is expressed by
$$
\cK\cM= (\cV_2, \cV_2+2\ii \cV_3, \cV_1+\cV_2/2+\ii\cV_3), 
$$
In other words, we will decompose the image of the Abelian integral or the Abel-Jacobi map from a real analytic viewpoint or consider the linear transformation in $T^*\CC^3$ to $T^*\CC^3$:
\begin{equation}
\begin{pmatrix} du_1 \\ d u_2 \\ d u_3\end{pmatrix}
=\begin{pmatrix} 1 & \ii/2 & 1/2\\ 0 & \ii/2 & -1/2\\ 0 & 0 & 1
\end{pmatrix}
\begin{pmatrix} dt_1 \\ d t_2 \\ d t_3\end{pmatrix}, \quad
\begin{pmatrix} dt_1 \\ d t_2 \\ d t_3\end{pmatrix}
=\begin{pmatrix} 1 & -1 & -1\\ 0& -2\ii & -\ii \\ 0 & 0 & 1
\end{pmatrix}
\begin{pmatrix} du_1 \\ d u_2 \\ d u_3\end{pmatrix},
\label{eq:dus_dts}
\end{equation}
We let the former matrix denoted by $\cD$, and  also define $dt:=dt_1$ and $ds:=dt_3$. 

Instead of $\cV_1$, we can use the basis $\{\tcV_1, \cV_2, \cV_3\}$ where $\tcV_1:=$ $\displaystyle{\begin{pmatrix}
\frac{2\tK_1\cos(\varphi_2+\varphi_3)}{\sin(\varphi_2-\varphi_1)\sin(\varphi_3-\varphi_1)}\\
\frac{2\tK_2\cos(\varphi_3+\varphi_1)}{\sin(\varphi_3-\varphi_2)\sin(\varphi_1-\varphi_2)}\\
\frac{2\tK_3\cos(\varphi_1+\varphi_2)}{\sin(\varphi_1-\varphi_3)\sin(\varphi_2-\varphi_3)}
\end{pmatrix}=\cV_1+\cV_2/2}$, and consider this problem.
Though we will not show explicitly, we confirmed that the results with $\tcV_1$ have no difference in its nature from the following results based on $\{\cV_1, \cV_2, \cV_3\}$.
}}
\end{remark}
\color{black}

We recall that $\tgamma_i$ and thus $\cC$ represent the sheet from $\hX \to \revs{\PP^1}$.

Since we are interested in the real orbit of $\varphi_a$'s, we treat the real directions given only by $\cC\cV_1d s$ and $\cC\cV_2d t$, but omit the $\cC\cV_3$ direction.
Further, since we are concerned with the orbit including the $u_3$ direction, we will focus on the direction, $\cC\cV_1d s$.

Lemma \ref{lm4.1} leads \revs{to} a nice property which is connected with the reality condition CI and CII in Condition \ref{cnd} by the straightforward computation:
\begin{proposition}\label{4pr:reality_g3}
For configurations $(\ee^{\ii\varphi_i}, \ii8 \tgamma_i \tK_i\ee^{3\ii\varphi_i})_{i=1,2,3} \in S^3 \hX$ such that $\varphi_i \in  [\varphi_\fb^-, \varphi_\fb^+]\subset \RR$ and $\varphi_i \neq \varphi_j$ $(i\neq j)$ for $(d s, d t) \in T^* \RR^2$, one-forms,
\begin{equation}
\begin{pmatrix} d \varphi_{1,\rr} \\ d \varphi_{2,\rr} \\
d \varphi_{3,\rr}\end{pmatrix}
= \cC\cV_1 d s=\begin{pmatrix}
\frac{2\tK_1\tgamma_1\cos(\varphi_2-\varphi_3)}{\sin(\varphi_2-\varphi_1)\sin(\varphi_3-\varphi_1)}\\
\frac{2\tK_2\tgamma_2\cos(\varphi_3-\varphi_1)}{\sin(\varphi_3-\varphi_2)\sin(\varphi_1-\varphi_2)}\\
\frac{2\tK_3\tgamma_3\cos(\varphi_1-\varphi_2)}{\sin(\varphi_1-\varphi_3)\sin(\varphi_2-\varphi_3)}
\end{pmatrix}d s ,
\label{eq:g3CIII}
\end{equation}
and $\cC\cV_2 d t$
\color{black}
form the two-dimensional real subspace in $T^* S^3 \hvarpi_x^{-1}[\varphi_\fb^-, \varphi_\fb^+]$. 
Thus the condition CI is satisfied.

Further $\be_1 ds = \cL\cC (\cC \cV_1) ds$ and $\be_2 dt = \cL\cC (\cC \cV_2) dt$, and thus the condition CII is also satisfied.
\color{black}
\end{proposition}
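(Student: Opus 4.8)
The plan is to reduce the whole statement to the three linear-algebraic identities already established: Lemma~\ref{4lm:dudphi} in the form $du=\cL\,d\varphi$, its inverse Lemma~\ref{4lm:dudphi3} in the form $d\varphi=\cL^{-1}du=\cK\cM\,du$, and Lemma~\ref{lm4.1} in the form $\be_a=\cL\cC\cV_a$ $(a=1,2,3)$. First I would fix a configuration $(\ee^{\ii\varphi_i},\ii 8\tgamma_i\tK_i\ee^{3\ii\varphi_i})_{i=1,2,3}\in S^3\hX$ as in the hypothesis and note that, by the determinant formula of Lemma~\ref{4lm:dudphi}, $\det\cL\neq0$ there: the numerator $\sin(\varphi_1-\varphi_2)\sin(\varphi_2-\varphi_3)\sin(\varphi_3-\varphi_1)$ cannot vanish because the $\varphi_i$ are distinct and lie in an interval of length less than $\pi$. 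Hence $\cL$ is invertible and, from $\be_a=\cL\cC\cV_a$, we obtain $\cL^{-1}\be_a=\cC\cV_a$. Restricting the pullback $\tv^{*}$ (equivalently $\hv^{*}$) of the Abelian integral to the real plane $\RR^2=\langle\be_1,\be_2\rangle_\RR\subset\CC^3$, an arbitrary real one-form $\be_1\,ds+\be_2\,dt$ with $(ds,dt)\in T^*\RR^2$ therefore pulls back to $d\varphi=\cC\cV_1\,ds+\cC\cV_2\,dt$; the sub-case $du=\be_1\,ds$ is exactly (\ref{eq:g3CIII}) and $du=\be_2\,dt$ gives the companion form $\cC\cV_2\,dt$.

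The second step is to verify that these pulled-back one-forms are genuinely real on the chosen arc. Every entry of $\cC\cV_1$ and $\cC\cV_2$, as written out in Lemma~\ref{lm4.1} and (\ref{eq:g3CIII}), is a quotient whose numerator is $\tgamma_j\tK_j$ times a trigonometric polynomial in $\varphi_1,\varphi_2,\varphi_3$ and whose denominator is $\sin(\varphi_i-\varphi_j)\sin(\varphi_k-\varphi_j)$, which is nonzero by hypothesis. The one point that requires attention is the reality of $\tK_j=\tK(\varphi_j)$, i.e.\ that $\prod_{a=1}^{3}(1-k_a^2\sin^2\varphi_j)\geq0$; this is precisely why Assumption~\ref{Asmp} fixes $[\varphi_\fb^-,\varphi_\fb^+]$ so that (in the case $k_1>k_2>k_3$, say) $|\sin\varphi_j|\leq1/k_1$, which makes every factor $1-k_a^2\sin^2\varphi_j\geq0$ and hence $\tK_j\in\RR$; the other case of Assumption~\ref{Asmp} is entirely parallel. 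With $\tgamma_j=\pm1$ and $\varphi_i\in\RR$ all entries are then real, so $d\varphi_{j,\ri}=0$ and $d\varphi_{j,\rr}=(\cC\cV_1)_j\,ds+(\cC\cV_2)_j\,dt$, which is the asserted identity. Linear independence of $\cC\cV_1$ and $\cC\cV_2$ follows at once from that of $\be_1,\be_2$ by applying the invertible $\cL$, so these two real one-forms span a two-dimensional real subspace of $T^*S^3\hvarpi_x^{-1}[\varphi_\fb^-,\varphi_\fb^+]$, namely the image of $\langle\be_1,\be_2\rangle_\RR$ under $\tv^{*}$.

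The last step is to read off conditions CI and CII. For CI: from $x_j=2\ii\ee^{\ii\varphi_j}\sin\varphi_j=\ee^{2\ii\varphi_j}+b_0$ we have $x_j-b_0=\ee^{2\ii\varphi_j}$, so $|x_j-b_0|=\ee^{-2\varphi_{j,\ri}}$; since $d\varphi$ is a real one-form along the real plane, integrating it keeps each $\varphi_{j,\ri}$ constant (one may take $\varphi_{j,\ri}\equiv0$), and hence $\prod_{j=1}^{3}|x_j-b_0|\equiv1$, a positive constant, which is CI. For CII: substituting $d\varphi=\cC\cV_1\,ds$ into $du=\cL\,d\varphi$ and using Lemma~\ref{lm4.1} recovers $du=\cL\cC\cV_1\,ds=\be_1\,ds$, and likewise $du=\cL\cC\cV_2\,dt=\be_2\,dt$; since $\be_1={}^{t}(\tfrac{1}{2},-\tfrac{1}{2},1)$ and $\be_2={}^{t}(1,0,0)$ this reads $du_1=\tfrac{1}{2}\,ds+dt$, $du_2=-\tfrac{1}{2}\,ds$, $du_3=ds$, all of which are real, so $du_{3,\ri}=du_{2,\ri}=0$, which is the first alternative of CII.

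I expect the only genuine obstacle to be the reality of $\tK(\varphi_j)$, i.e.\ controlling the sign of $\prod_a(1-k_a^2\sin^2\varphi_j)$ on the relevant arc; this is exactly what the angle expression of Section~\ref{sec:g=3} and Figure~\ref{fg:Fig01} are engineered for, and it is encoded in Assumption~\ref{Asmp}. Everything else is the bookkeeping already packaged into Lemmas~\ref{4lm:dudphi}, \ref{4lm:dudphi3} and \ref{lm4.1}. I would stress that this is an infinitesimal, cotangent-level claim: checking that an actual integral orbit of $d\varphi=\cC\cV_1\,ds$ remains inside $[\varphi_\fb^-,\varphi_\fb^+]$ and avoids the coincidence locus $\varphi_i=\varphi_j$ and the branch points is a separate, global matter --- handled in Section~5.
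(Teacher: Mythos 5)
Your proposal follows the same route as the paper's own (very terse) proof: everything is reduced to Lemma~\ref{lm4.1}, i.e.\ $\be_a=\cL\cC\cV_a$, inverted using the nonvanishing of $\det\cL$ from Lemma~\ref{4lm:dudphi} on the locus $\varphi_i\neq\varphi_j$, together with the observation that the entries of $\cC\cV_1$ and $\cC\cV_2$ are real there. The paper states exactly these two facts and stops; your version fills in the reality check for $\tK$, the linear independence of $\cC\cV_1,\cC\cV_2$, and the explicit reading of CI from $|x_j-b_0|=\ee^{-2\varphi_{j,\ri}}$ and of CII from $du=\be_1\,ds+\be_2\,dt$, all of which is correct and is what the paper intends. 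Your closing remark that this is a purely cotangent-level statement, with the global integration deferred to Section~5, also matches the paper's organization (Proposition~\ref{4pr:reality_g3a} and Lemma~\ref{4lm:intersection}).

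One piece of your added detail is wrong, however: the claim that the second case of Assumption~\ref{Asmp} is ``entirely parallel'' with every factor $1-k_a^2\sin^2\varphi_j\geq0$. That holds only in case (a), where $k_1>k_2>k_3>1$ and the arc is $[-\sin^{-1}(1/k_1),\sin^{-1}(1/k_1)]$, so that $|\sin\varphi_j|\leq 1/k_1\leq 1/k_a$. In case (b), $k_3>k_2>k_1>1$ and the arc $[\varphi_\fb^-,\varphi_\fb^+]$ runs from $\sin^{-1}(1/k_1)$ to $\pi-\sin^{-1}(1/k_1)$, on which $\sin\varphi_j\geq 1/k_1>1/k_2>1/k_3$; there all three factors are $\leq 0$, their product is $\leq 0$, and $\tK(\varphi_j)$ is purely imaginary. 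Consequently $\cC\cV_1\,ds$ with $ds\in\RR$ is purely imaginary, and to obtain a real subspace one must take $(s,t)\in\ii\RR^2$ — which is precisely the second alternative $du_{3,\rr}=du_{2,\rr}=0$ of CII and the variant (\ref{1eq:gaugedMKdV2a}) of the gauged equation flagged in the introduction. The paper's proof glosses over this distinction too, but your explicit assertion that both cases give nonnegative factors should be corrected rather than left as stated.
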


\begin{proof}
Due to Lemma \ref{lm4.1}, 
\revs{
$\be_1 ds = \cL\cC (\cC \cV_1) ds$ and $\be_2 dt = \cL\cC (\cC \cV_2) dt$.
}
Since the one forms in $\cC\cV_1 d s$ and  $\cC\cV_2 d t$ are holomorphic one-from although $\cV_a$ is singular at the branch point, it is defined over $\hX$.
\end{proof}

Using the relation in Proposition \ref{4pr:reality_g3}, we have real two-dimensional subspaces in $S^3 \hX$ and $J_X$ respectively.
\revs{
In other words, we have the real vector subspace $\displaystyle{\LL_{v_0}:=\left\{v:=\begin{pmatrix} t+s/2\\ -s/2\\ s\end{pmatrix}+v_0 := \int^s \be_1ds + \int^t \be_2dt+v_0 \Bigr|\ (t,s)\in \RR^2\right\}}\subset \CC^3$.
$\LL_{v_0}$ satisfies the condition CII,  i.e., $d u_{2, \ri}=d u_{3,\ri}=0$.
}

\begin{proposition}\label{4pr:reality_g3a}
Assume Assumption \ref{Asmp}.
Let $(P_{a,0}=( \ee^{\ii\varphi_{a,0}},$ $ 8\ii \tgamma_a \tK(\varphi_{a,0})\ee^{3\ii \varphi_{a,0}})))_{a=1,2,3}$ be a point in $S^3 \hX$ where $\varphi_{a,0}\in [\varphi_\fb^-, \varphi_\fb^+]$ such that we avoid $\varphi_{1,0}=\varphi_{2,0}=\varphi_{3,0}$, and $v_0 :=\revs{v(P_{1,0}, P_{2,0}, P_{3,0})}$, i.e., $v_0 = \displaystyle{\sum_{i=a}^3 \int_\infty^{P_{a,0}} \nuI{}}$.
\revs{For $(t,s) \in \RR^2$,}
\begin{equation}
   \revs{\gamma(t,s):=}  \left(\int^s_0 \cC\cV_1 d s + \int^t_0 \cC\cV_2 d t\right) 
\in \revs{S^3\tX},
\label{eq:Xipnt}
\end{equation}
\revs{forms
\begin{equation}
\tSS_{(\varphi_{a,0})}:=\{ \gamma(t,s) \ | \ (t,s) \in \RR^2\},
\label{eq:tS_varphi}
\end{equation}
if exists.
$\hkappa_X(tSS_{(\varphi_{a,0})})$ corresponds a subspace in $S^3 \revs{\Omega^1} \subset S^3 \hX$ such that $\varpi_x \hkappa_X(\gamma(t,s))$ belongs to $[\varphi_\fb^-, \varphi_\fb^+]$.}
Here \revs{$\Omega^1$} is a loop in $\hX$ whose image of $\hvarpi_x:(\ee^{\ii \varphi}, 8\ii K\ee^{3\ii \varphi})\mapsto \ee^{\ii \varphi})$ is the connected arc of the circle displayed in Figure \ref{fg:Fig01} of $[\varphi_\fb^-, \varphi_\fb^+]$. 
\end{proposition}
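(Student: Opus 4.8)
The plan is to read (\ref{eq:Xipnt}) as defining the two‑parameter family $\gamma(t,s)$ obtained by integrating $d\gamma=\cC\cV_1\,ds+\cC\cV_2\,dt$ on $S^3\tX$ from the base point $\gamma(0,0)=(P_{1,0},P_{2,0},P_{3,0})$, and to track it component‑wise in the angle coordinates $(\varphi_1,\varphi_2,\varphi_3)$. First I would apply the push‑forward $\tv_*=\cL$ (Lemma \ref{4lm:dudphi}) together with $\cL\cC\cV_a=\be_a$ for $a=1,2$ (Lemma \ref{lm4.1}): this turns the system into $d(\tv\circ\gamma)=\be_1\,ds+\be_2\,dt$ with $\be_1,\be_2$ constant, so $\tv(\gamma(t,s))=v_0+\trp(t+s/2,-s/2,s)$ lies on the flat real plane $\LL_{v_0}$; equivalently, $\gamma(t,s)$ is the lift to $S^3\tX$ of this flat path under $\tv$, which by the Abel--Jacobi theorem (local invertibility of $\tv$ off the diagonal $\Delta:=\{\varphi_i=\varphi_j\}$, Lemma \ref{4lm:dudphi3}) exists as long as the path avoids $\Delta$. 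This already records CII ($du_{2,\ri}=du_{3,\ri}=0$), while Proposition \ref{4pr:reality_g3} shows the one‑forms $\cC\cV_1\,ds$ and $\cC\cV_2\,dt$ are real on $S^3\hvarpi_x^{-1}[\varphi_\fb^-,\varphi_\fb^+]$, so from the real base configuration the orbit stays real there, which is CI ($|x_a-b_0|=1$).

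The substantive step is to show the orbit never leaves $S^3\hvarpi_x^{-1}[\varphi_\fb^-,\varphi_\fb^+]$ — i.e. that the box $[\varphi_\fb^-,\varphi_\fb^+]^3$ is an invariant region — and that its lift to $\hX$ sweeps out the loop $\Omega^1$. Here I would use the angle form of the curve (Lemma \ref{4lm:g3gene_y2}): by Assumption \ref{Asmp} the endpoints $\varphi_\fb^\pm$ are exactly the $\varphi$‑values at which one of the factors $(1-k_a^2\sin^2\varphi)$ of $\tK^2$ has a simple zero, so $\tK$ vanishes there, and hence the $\varphi_a$‑component of $\cC\cV_1$ (and of $\cC\cV_2$) vanishes whenever $\varphi_a$ reaches an endpoint. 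Expanding $\tK$ at its simple zero turns the $\varphi_a$‑equation near $\varphi_\fb^+$ into a turning‑point equation $d\varphi_a/ds=(\text{bounded, generically nonzero})\cdot\sqrt{\varphi_\fb^+-\varphi_a}$ (and the analogue near $\varphi_\fb^-$); exactly as in the genus‑one pendulum/elastica picture recalled in the Introduction, $\varphi_a$ then reaches the endpoint in finite parameter length and, to remain on the real locus of Proposition \ref{4pr:reality_g3}, it must retreat rather than cross, which forces the sheet sign $\tgamma_a$ to flip. Its lift thus runs up one sheet of $\hvarpi_x^{-1}[\varphi_\fb^-,\varphi_\fb^+]$, through the branch point of $\hX$ over the endpoint, and back down the conjugate sheet — i.e. it traverses precisely the closed curve $\Omega^1$ double‑covering the arc $\{\ee^{\ii\varphi}\ |\ \varphi\in[\varphi_\fb^-,\varphi_\fb^+]\}$ of Figure \ref{fg:Fig01}. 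Hence $\hkappa_X(\gamma(t,s))\in S^3\Omega^1\subset S^3\hX$ with $\varpi_x\hkappa_X(\gamma(t,s))$ in the prescribed arc, which is the claim about $\tSS_{(\varphi_{a,0})}$ in (\ref{eq:tS_varphi}).

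Finally I would dispose of the ``if exists'' proviso: the only way the orbit can fail to extend to all of $\RR^2$ is by running into $\Delta$, where $\det\cL$ vanishes (Lemma \ref{4lm:dudphi}), the Abel--Jacobi map stops being a local isomorphism and the $\cV_a$ are singular; continuation of the orbit through such a coincidence $\varphi_i=\varphi_j$ is the content of Lemma \ref{4lm:intersection} in Section 5, which I would cite. The hard part is the turning‑point analysis at the branch points: one must confirm that a component approaching $\varphi_\fb^\pm$ genuinely reaches the branch point of $\hX$ in finite length and re‑emerges on the conjugate sheet with the correct $\tgamma_a$, since this is simultaneously what keeps $[\varphi_\fb^-,\varphi_\fb^+]^3$ invariant and what makes $\Omega^1$ a loop; extra care is needed when two components hit endpoints at once and when the $\cos(\varphi_b-\varphi_c)$ factor in $\cC\cV_1$ also vanishes, producing interior turning points that do not, however, violate the confinement.
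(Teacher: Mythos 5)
Your proposal is correct and follows essentially the same route as the paper: reality and linear independence of $\cC\cV_1\,ds$, $\cC\cV_2\,dt$ off the diagonal (via $\cL\cC\cV_a=\be_a$ and Proposition \ref{4pr:reality_g3}) give CI and CII, and Lemma \ref{4lm:intersection} handles the coincidences $\varphi_i=\varphi_j$. The only difference is one of bookkeeping: the turning-point analysis at $\varphi_\fb^\pm$ (the $\ft^2=\varphi_\fb\mp\varphi_1$ substitution, the sign flip of $\tgamma_a$, and the resulting confinement to $S^3\Omega^1$) that you fold into this proof is exactly what the paper defers to Subsection \ref{ssec:5.2} and the proof of Theorem \ref{th:solgMKdV_R}, the proposition itself being stated with the ``if exists'' proviso precisely to postpone that step.
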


\begin{proof}
Except the singular locus $\varphi_i = \varphi_j$ $(i\neq j)$, the directions $\cV_1 d s$ and $\cV_2 d t$ are linearly independent and real valued.
Thus for a given $(\varphi_{a,0}) \in S^3 \RR$, $\hvarpi_x\hkappa_X\gamma(t,s)$ belongs to the unit circle or $\varpi_a(t,s)$ ($a=1,2,3$) are real.
Even for the singular locus $\varphi_i = \varphi_j$ $(i\neq j)$, we can define the $\tSS_{(\varphi_{a,0})}$ by Lemma \ref{4lm:intersection}.
\end{proof}

\color{black}
Assume the existence of the subspace in Proposition \ref{4pr:reality_g3a}.
We note $\kappa_\hX(\tSS_{(\varphi_{a,0})})\subset S^3 \Omega^1$..
From the construction, we have $\tv(\gamma(t,s))=\be_1 s +\be_2 t +v_0)$.
Then we consider the graph of $\tv^{-1}$ in $\CC^3\times S^3 \tX$,
\begin{equation}
\cG(\tv^{-1}|\LL_{v_0})=\left\{ \left(\be_1 s +\be_2 t +v_0,\gamma(t,s) \right)\ | \ (t,s) \in \RR^2\right\}\subset \LL_{v_0} \times \tSS_{(\varphi_{a,0})} 
\label{eq:LL_tS}
\end{equation}
by regarding $\LL_{v_0}$ as $\RR^2=\{(t,s)\}$.

\color{black}

\color{black}
In this paper, we consider the following subspaces of $\tSS_{(\varphi_a,0)}$, $\LL_{v_0}$, and $\cG(\tv^{-1}|\LL_{v_0})$ at $t=0$ more precisely.
\begin{equation}
\begin{split}
&\tS_{(\varphi_{a,0}),t}:=\{ \gamma(t,s) \ | \ s \in \RR\},\quad
L_{v_0,t}:=\left\{v:=\begin{pmatrix}s/2\\ -s/2\\ s\end{pmatrix}+v_0 \Bigr|\ s\in \RR\right\}, \\
&\cG(\tv^{-1}|L_{v_0}):=\left\{ \left(\be_1 s +v_0,\gamma(t,s) \right)\ | \ s \in \RR\right\}\subset \tS_{(\varphi_{a,0}),t}\times \LL_{v_0,t}.
\label{eq:tS_varphi,t}
\end{split}
\end{equation}
For a meromorphic function $\psi$ on $S^3 X$, we naturally have its pull-back $\kappa_X$.
Since at every point in $S^3\tX$, the meromorphic function $\psi$ satisfies the MKdV equation over $\CC$ as a differential identity, and $\kappa_\hX \gamma$ for $\gamma \in \tS_{(\varphi_{a,0}),t}$ is real from the definition, we have a real valued one $\psi_{\rr}=2(\varphi_{1}+ \varphi_{2}+\varphi_{3})$.
Then using the graph space $(\gamma, \psi(\kappa_X(\gamma))) \in S^3 X \times \PP$, we have a graph structure from (\ref{eq:Xipnt}),
\begin{equation}
\Psi_t:=\left\{ \left(\be_1 s+v_0, \psi_\rr(\kappa_X(\gamma(t,s))\right)\ | \ s \in \RR\right\}.
\label{eq:Psit}
\end{equation}

\color{black}

More precisely speaking, we have the first theorem of this paper.

\begin{theorem}\label{pr:solgMKdV}
For $t=0$ and  \revs{$\trp(\varphi_{1}(s), \varphi_{2}(s), \varphi_{3}(s)) \in [\varphi_\fb^-, \varphi_\fb^+]^3$ for a solution $\gamma(0,s)$ of} the differential equation (\ref{eq:g3CIII}) for $S^3 \hX$, if exists, we let $\psi_{\rr}=2(\varphi_{1}(s)+ \varphi_{2}(s)+\varphi_{3}(s))$.
Then $\psi_\rr(s)$ is the cross-section
 $\psi_\rr(s)=\psi_\rr(u_2, u_3)|_{u_3=s, u_2=-s/2}$ of $\psi_\rr(u_2, u_3)$ which satisfies the gauged MKdV equation,
\begin{equation}
\revs{(\partial_{u_2}}+\tA(u_2,u_3)\partial_{u_2})\psi_\rr(u_2, u_3)
           +\frac{1}{8}
\left(\partial_{u_3} \psi_\rr(u_2, u_3)\right)^3
+\frac{1}{4}\partial_{u_3}^3 \psi_\rr(u_2, u_3)=0,
\label{4eq:gaugedMKdV2a}
\end{equation}
where the gauge field is $\tA(u_1, u_2)=(\lambda_{6}-3-\frac{3}{4}(\partial_{u_3}\psi_\ri(u_2, u_3))^2)/2$ given by $\partial_{u_3}\psi_\ri(u_2, u_3)=\partial_{u_3} \varphi_{1,\ri}+\partial_{u_3} \varphi_{2,\ri}+\partial_{u_3} \varphi_{3,\ri}$, and
\begin{equation}
\begin{pmatrix} 
\partial_{u_3} \varphi_{1,\ri} \\ 
\partial_{u_3} \varphi_{2,\ri} \\
\partial_{u_3} \varphi_{3,\ri}
\end{pmatrix}
=\begin{pmatrix}
\frac{2K_1\tgamma_1\sin(\varphi_2+\varphi_3)}{\sin(\varphi_2-\varphi_1)\sin(\varphi_3-\varphi_1)}\\
\frac{2K_2\tgamma_2\sin(\varphi_3+\varphi_1)}{\sin(\varphi_3-\varphi_2)\sin(\varphi_1-\varphi_2)}\\
\frac{2K_3\tgamma_3\sin(\varphi_1+\varphi_2)}{\sin(\varphi_1-\varphi_3)\sin(\varphi_2-\varphi_3)}
\end{pmatrix}.
\label{eq:g3CIIIi}
\end{equation}
\end{theorem}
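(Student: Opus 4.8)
The plan is to pull the complex-algebraic differential identity of Theorem~\ref{4th:MKdVloop} back to the real orbit $\gamma(0,s)$ furnished by Proposition~\ref{4pr:reality_g3a}, and then extract the gauge field from the inverse Jacobian matrix of Lemma~\ref{4lm:dudphi3}. First I would start from Theorem~\ref{4th:MKdVloop}: the function $\psi(u)=-\ii\log\prod_{i=1}^{3}(b_0-x_i)$ satisfies the pre-MKdV equation over $\CC$, (\ref{4eq:loopMKdV2}), as a differential identity on $S^3X$ and $\CC^3$; since (\ref{4eq:loopMKdV2}) involves only $\partial_{u_2}$ and $\partial_{u_3}$, we may regard $\psi$ locally as a holomorphic function of $(u_2,u_3)$. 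Writing $u_a=u_{a\,\rr}+\ii u_{a\,\ri}$ and $\psi=\psi_\rr+\ii\psi_\ri$ and using the Cauchy--Riemann relations (the reduction of \cite{MP22}): the linear terms contribute their real parts directly, while the real part of the cubic term $\frac18(\partial_{u_3}\psi)^3$ splits as $\frac18(\partial_{u_3}\psi_\rr)^3$ plus a term proportional to $\partial_{u_3}\psi_\rr\,(\partial_{u_3}\psi_\ri)^2$, which is absorbed into the coefficient of $\partial_{u_3}\psi_\rr$. With $b_0=-1$ this produces exactly the gauged MKdV equation (\ref{4eq:gaugedMKdV2a}) with $\tA=(\lambda_6-3-\frac34(\partial_{u_3}\psi_\ri)^2)/2$, holding as a differential identity at every point of $S^3X$, and in particular at each point of the orbit $\gamma(0,s)$.

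Next I would transfer everything to the angle expression. From $w^2=x-b_0$ and $w=\ee^{\ii\varphi_j}$ at the $j$-th point one has $b_0-x_j=-\ee^{2\ii\varphi_j}$, hence $\prod_j(b_0-x_j)=-\ee^{2\ii(\varphi_1+\varphi_2+\varphi_3)}$ and, on a fixed branch, $\psi=\pi+2(\varphi_1+\varphi_2+\varphi_3)$; dropping the constant, $\psi_\rr=2(\varphi_{1\,\rr}+\varphi_{2\,\rr}+\varphi_{3\,\rr})$ and $\psi_\ri=2(\varphi_{1\,\ri}+\varphi_{2\,\ri}+\varphi_{3\,\ri})$. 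By Proposition~\ref{4pr:reality_g3a}, applying the matrix $\cL$ of Lemma~\ref{4lm:dudphi} to $d\gamma=\cC\cV_1\,ds$ and using Lemma~\ref{lm4.1} gives $du=\be_1\,ds$ along the orbit, i.e.\ $u_3=s$, $u_2=-s/2$ up to the base-point constant $v_0$, with $du_{2\,\ri}=du_{3\,\ri}=0$ (condition CII) and $\varphi_j(s)\in[\varphi_\fb^-,\varphi_\fb^+]\subset\RR$, whence all $|x_j-b_0|=1$ and $\psi_\ri\equiv0$ along the orbit (condition CI). Therefore $\psi_\rr(s):=2(\varphi_1(s)+\varphi_2(s)+\varphi_3(s))$ is precisely the restriction to the line $\{u_2=-u_3/2\}$ of the two-variable function $\psi_\rr(u_2,u_3)$ of the previous paragraph, i.e.\ it is the claimed cross-section of a solution of (\ref{4eq:gaugedMKdV2a}).

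Finally I would compute the gauge field along the orbit. By Lemma~\ref{4lm:dudphi3}, off the diagonal locus $\varphi_i=\varphi_j$ one has $(d\varphi_1,d\varphi_2,d\varphi_3)^{t}=\cK\cM\,du$, and by Remark~\ref{rmk:dus_dts} the $du_3$-column of $\cK\cM$ is $\cV_1+\frac12\cV_2+\ii\cV_3$, with the sheet signs carried through $K_j=\tgamma_j\tK_j$. Taking imaginary parts picks out the $\cV_3$-block, so that $\partial_{u_3}\varphi_{j\,\ri}$ is given by (\ref{eq:g3CIIIi}); summing over $j$ yields $\partial_{u_3}\psi_\ri$, and substituting it into $\tA$ finishes the identification. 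This last step is the ``direct computation'' the paper invokes elsewhere, performed with Lemma~\ref{4lm:dudphi3}.

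The main obstacle is the bookkeeping of which derivative is being taken: $\psi_\ri$ vanishes identically on the two-real-dimensional locus $\tSS$ (there all $x_j$ lie on the unit circle), yet the gauge field requires its $u_3$-derivative, which points transversally out of $\tSS$, indeed out of the real locus altogether. Making sense of this transverse derivative from the pointwise data $(\varphi_1,\varphi_2,\varphi_3)$ is exactly what forces one to use the full inverse Jacobian $\cL^{-1}=\cK\cM$, available only on the regular locus; keeping straight the three roles of $ds$ (``along the orbit'', ``inside the real plane $\tSS$'', ``transverse to it'') is the delicate part, while the algebra itself is routine. The existence of the orbit $\gamma(0,s)$ and its behaviour across $\varphi_i=\varphi_j$ and across the branch points are not part of this theorem --- hence the ``if exists'' clause --- and are dealt with separately in Section~5 via Lemma~\ref{4lm:intersection}.
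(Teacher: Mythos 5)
Your proposal is correct and follows essentially the same route as the paper: the gauged equation (\ref{4eq:gaugedMKdV2a}) is the real part of the complex pre-MKdV identity (\ref{4eq:loopMKdV2}) via the Cauchy--Riemann split already recorded in (\ref{4eq:gaugedMKdV2}), the cross-section identification comes from $du=\be_1\,ds$ (so $u_3=s$, $u_2=-s/2$) as in Proposition \ref{4pr:reality_g3a}, and $\partial_{u_3}\varphi_{j,\ri}$ is read off as the imaginary part of the $du_3$-column of the Jacobi matrix $\cK\cM$. The paper's own proof is just a three-sentence compression of exactly these steps; your version supplies the details (including the useful observation that the transverse derivative $\partial_{u_3}\psi_\ri$ is what forces the use of the full inverse Jacobian) without deviating from the argument.
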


\begin{proof}
\revs{From Proposition \ref{4pr:reality_g3a},} we have the results. Since $du_2 = -d s/2$ and $du_3=d s$, $\psi_\rr(s)=\psi_\rr(u_2, u_3)|_{u_3=s, u_2=-s/2}$.
\revs{
Further, we note $\cK\cM$ is the Jacobi matrix $(\partial \varphi_i/\partial u_j)$, and thus we have $\partial_{u_3} \varphi_{j, \ii}$. 
}
\end{proof}

\section{Global behavior of the hyperelliptic solutions of the gauged MKdV equation over $\RR$}

Based on the above results, we consider the global behavior of the solutions 
$\gamma(0,s)=\displaystyle{\int^s_0 \cV_1 d s}\in S^3 \tX$.
We employ Assumption \ref{Asmp} from here, and for simplicity, we will write it as $\varphi_a(s)=\displaystyle{ \int^s_0 \cV_1 ds}$.

Since we are interested in the case where CIII in Condition \ref{cnd} is satisfied, we have a special look at the initial condition $\partial_{u_3} \psi_{\ri}(s)|_{s=0}=0$.
We note that the \revs{entries} of the matrix $\cC \cK \cM$ consist of $\varphi_1$, $\varphi_2$, and  $\varphi_3$. 
For a certain initial condition $(d\varphi_{1,\ri}, d\varphi_{2,\ri},$ $ d\varphi_{3,\ri})|_{s=0}$ so that $\partial_{u_3} \psi_{\ri}(s)|_{s=0}=0$, we integrate  $\cV_1 d s$..

\subsection{An orbit for the hyperelliptic solution along $s$}
\label{sec:Algorithm}

We let $t=0$.
We will consider the orbit obeying (\ref{eq:g3CIII}) as the differential equation on $S^3 \hX$, $\varphi_a(s)=\displaystyle{\int^s d \varphi_{a,\rr}(s)}$ as $s$ varies from $s=0$ illustrated in Figure \ref{fg:Fig01} (a) and (b).
\begin{theorem}\label{th:solgMKdV_R}
There exists a solution of (\ref{eq:g3CIII}), 
$\varphi_a(s)=\varphi_a(0) +\displaystyle{\int_0^s d \varphi_{a,\rr}(s)}$
for $s\in \RR$, $\varphi_a\in[\varphi_\fb^-,\varphi_\fb^+]$, and 
$((\ee^{\ii\varphi_a},8\ii\tgamma_a \tK(\varphi_a)\ee^{3\ii \varphi_a}))_{a=1,2,3} \in S^3 \hX$.

When we avoid the point $\varphi_1 = \varphi_2= \varphi_3$,
 $\psi_{\rr}(s)=2(\varphi_{1}(s)+ \varphi_{2}(s)+\varphi_{3}(s))$ for $s\in \RR$ is a cross section of the solution of the gauged modified KdV equation (\ref{4eq:gaugedMKdV2a}), i.e., $\psi_\rr(s)=\psi_\rr(u_2, u_3)|_{u_3=s, u_2=-s/2}$ as in Theorem \ref{pr:solgMKdV}.
\end{theorem}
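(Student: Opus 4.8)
The plan is to read the vector identity (\ref{eq:g3CIII}) as an autonomous real ODE for the triple $(\varphi_1,\varphi_2,\varphi_3)$ on the box $[\varphi_\fb^-,\varphi_\fb^+]^3$, carrying along the sign datum $(\tgamma_1,\tgamma_2,\tgamma_3)\in\{\pm1\}^3$ that records the sheet of $\hvarpi_x:\hX\to\PP^1$, and to build the orbit $\gamma(0,s)$ globally by patching local solutions across the two loci where the right-hand side degenerates in these coordinates. Away from the pairwise diagonals $\varphi_i=\varphi_j$ and away from the endpoints $\varphi_a\in\{\varphi_\fb^-,\varphi_\fb^+\}$, the entries of $\cC\cV_1$ from Lemma \ref{lm4.1} are real-analytic functions of $(\varphi_1,\varphi_2,\varphi_3)$, so the Cauchy--Lipschitz theorem yields a unique maximal local solution through the base point $(\varphi_{a,0})$, which by hypothesis is not on the triple diagonal. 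The associated curve $s\mapsto((\ee^{\ii\varphi_a},8\ii\tgamma_a\tK(\varphi_a)\ee^{3\ii\varphi_a}))_{a=1,2,3}$ lifts canonically to $S^3\tX$ because each $\varphi_a$ traces a genuine path, and by Proposition \ref{4pr:reality_g3} this orbit satisfies condition CII, $du_{2,\ri}=du_{3,\ri}=0$, so that $\tv(\gamma(0,s))=\be_1 s+v_0$ along it; this is the $t=0$ slice of the set $\tSS_{(\varphi_{a,0})}$ of Proposition \ref{4pr:reality_g3a}.

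Two extension problems remain, and they are the heart of the matter. At a turning point $\varphi_a\to\varphi_\fb^\pm$: under Assumption \ref{Asmp}, on the open interval $(\varphi_\fb^-,\varphi_\fb^+)$ the product $\tK(\varphi)^2$ is positive (so $\tK$ is real) and the factor of it that vanishes at the endpoint does so to first order, whence $\tK(\varphi_a)$ vanishes like $\sqrt{|\varphi_a-\varphi_\fb^\pm|}$; since $\tK_a$ sits in the numerator of the $a$-th entry of $\cC\cV_1$, in terms of the distance $\delta_a\ge 0$ of $\varphi_a$ to the nearest endpoint the equation reduces locally to $d\delta_a/ds=\pm c_a\sqrt{\delta_a}$ with $c_a>0$, whose solution is the parabolic arc $\delta_a=\tfrac14(c_a s-C)^2$, reaching $\delta_a=0$ in finite $s$ and rebounding into the interior with $\tgamma_a$ reversed. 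This establishes at once the confinement $\varphi_a(s)\in[\varphi_\fb^-,\varphi_\fb^+]$ and a uniform lower bound on the $s$-length between successive turning points, so these cannot accumulate in finite $s$. At a collision $\varphi_i=\varphi_j$: the factor $1/\sin(\varphi_i-\varphi_j)$ in (\ref{eq:g3CIII}) is an artifact of the coordinates $\varphi_a$ on $S^3\hX$ — the point itself is regular on $S^3\hX$ (and on $\hJ_X$) — and Lemma \ref{4lm:intersection} provides the continuation of the orbit through it, the elementary symmetric functions of the $\varphi$'s, in particular $\varphi_1+\varphi_2+\varphi_3$, remaining regular; only the triple collision $\varphi_1=\varphi_2=\varphi_3$ is a genuine obstruction, and it is removed by hypothesis. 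Since the orbit thus stays in the compact box and its only potential singular events are resolved with uniformly positive $s$-gaps, the maximal solution has no finite-time blow-up and extends to all $s\in\RR$, which is the first assertion. (Alternatively, global existence follows by viewing the orbit as the graph of $\hv^{-1}$ over the real line $\{\be_1 s+v_0\}\subset\CC^3$: for the chosen initial data this line misses the codimension-two indeterminacy locus of $\hv^{-1}$, the image of the degree-three divisors containing a hyperelliptic-conjugate pair, while Lemma \ref{4lm:intersection} disposes of the residual coordinate singularity on the diagonal of $S^3\hX$.)

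Granted the orbit, the second assertion is immediate from Theorem \ref{pr:solgMKdV}: the meromorphic function $\psi=-\ii\log(b_0-x_1)(b_0-x_2)(b_0-x_3)$ obeys (\ref{4eq:loopMKdV2}) as a differential identity at every point of $S^3\tX$ by Theorem \ref{4th:MKdVloop}, so its real part $\psi_\rr=2(\varphi_1+\varphi_2+\varphi_3)$, restricted to the orbit, is exactly the cross-section $u_3=s$, $u_2=-s/2$ of a function $\psi_\rr(u_2,u_3)$ solving the gauged MKdV equation (\ref{4eq:gaugedMKdV2a}), the gauge field and the imaginary derivatives $\partial_{u_3}\varphi_{a,\ri}$ in (\ref{eq:g3CIIIi}) being read off from the entries of $\cK\cM$, using $du_2=-ds/2$ and $du_3=ds$ along $L_{v_0,0}$. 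I expect the main obstacle to be precisely the analysis of these two singular loci, and in particular the claim that turning-point rebounds and pairwise collisions do not accumulate in finite $s$: this rests on the compactness of $S^3\hvarpi_x^{-1}[\varphi_\fb^-,\varphi_\fb^+]$ together with the non-degeneracy of $\cC\cV_1\,ds$ off the triple diagonal, and on the continuation statement of Lemma \ref{4lm:intersection}.
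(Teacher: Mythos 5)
Your proposal follows essentially the same route as the paper: integrate (\ref{eq:g3CIII}) as a real ODE on $S^3\hX$, confine the orbit to $[\varphi_\fb^-,\varphi_\fb^+]$ by the square-root rebound at the branch points with the sheet sign $\tgamma_a$ reversed (the content of Lemmas \ref{lm:ss5.2}--\ref{4lm:branch}, via the substitution $\varphi=\pm(\varphi_\fb-\ft^2)$), continue through pairwise collisions by Lemmas \ref{4lm:intersection} and \ref{4lm:Im_intersection}, exclude the triple diagonal, and read off the cross-section statement from Theorem \ref{pr:solgMKdV}. The differences are presentational: your explicit parabolic-arc computation at the endpoints and the non-accumulation remark correspond to the paper's Subsection 5.2 analysis and its use of Table \ref{1tb:Table1} to track the signs of $\delta\varphi_{a,\rr}$.
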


\begin{proof}
We show that a global solution of (\ref{eq:g3CIII}) exists as an orbit in $S^3 \Omega^1$. 
Let us prove this statement for the case of Figure \ref{fg:Fig01} (a) as follows.

\begin{enumerate}

\item Since we are interested in the constant $\partial_{u_3} \psi_{\ri}$, we may set the initial condition $(\varphi_1, \varphi_2, \varphi_3)|_{s=0}$ that $\partial_{u_3} \psi_{\ri}:=2\revs{\partial_{u_3}(\varphi_{1\ri}+ \varphi_{2\ri}+ \varphi_{3\ri})}$ \revs{in (\ref{eq:g3CIIIi})} vanishes.
We must check whether such a state is well-defined.
As in lemma \ref{lmma:s5.5}, it can be assured.

\item We consider a primitive quadrature (\ref{eq:g3CIII}) for a real infinitesimal value $\delta s$.
We find $\delta \varphi_{a,\rr} =\cV_{1,a} \delta s$ for the real part of the entries $\cK \cM$.
Table \ref{1tb:Table1} and the sign of $\cos(\varphi_a + \varphi_b)$ in the numerator in  (\ref{eq:g3CIII}) so that the orbit of $\varphi_a$ moves back and forth between the branch points $[\varphi_\fb^-, \varphi_\fb^+]=\hvarpi_x \Omega^1$ in (\ref{4eq:HEcurve_phi}) as in Figure \ref{fg:Fig01} (a).
Thus $\ee^{\ii \varphi_i}$ exists on the arc of the unit circle as in Figure \ref{fg:Fig01} (a) so that $(\ee^{\ii \varphi_a}, K(\varphi_a))$ draws a loop $\Omega^1$ in $\hX$.

We note that at the branch point, the orbit of $\varphi_a$ turns the direction by changing the sign of $\tgamma_a$ so that $\tgamma_a \sin^2(\varphi_a)\delta \varphi_a/2K_a$ is positive as in Figure \ref{fg:Fig01};
it moves the different leaf of the Riemann surface with respect to the projection $\hvarpi_x:\hX \to \revs{\PP^1}$ after passing the branch points.
Lemmas \ref{4pr:reality_g3b} and \ref{4lm:branch} show that the behavior of (\ref{eq:g3CIII}) and (\ref{eq:g3CIIIi}) are stable at the branch points.

\item Depending on the configurations of $\varphi_1, \varphi_2, \varphi_3$, the sign of $\delta \varphi_{a,\rr}$ and $\partial_{u_g} \varphi_{a,\ri}$ are determined due to (\ref{eq:g3CIII}) and (\ref{eq:g3CIIIi}) respectively.

\item  Lemmas \ref{4lm:intersection} and \ref{4lm:Im_intersection} show that we find an orbit in $S^3 \hX$ as a solution of (\ref{eq:g3CIII}) even at $\varphi_2=\varphi_3$ if $\varphi_1\neq\varphi_3$. Thus (\ref{eq:g3CIIIi}) is also defined there.

\end{enumerate}
It is obvious to show the case of Figure \ref{fg:Fig01} (b) similarly.
\end{proof}

{\small{
\begin{table}[htb]
\caption{The sign of the factor}\label{1tb:Table1}
  \begin{tabular}{|r|c|c|c|}
\hline
 & $\varphi_1>\varphi_2>\varphi_3$& $\varphi_1>\varphi_3>\varphi_2$ 
&$\varphi_2>\varphi_1>\varphi_3$ \\
\hline
$\sin(\varphi_2-\varphi_1)$& $-$ & $-$ & $+$\\
$\sin(\varphi_3-\varphi_1)$& $-$ & $-$ & $-$\\
$\sin(\varphi_3-\varphi_2)$& $-$ & $+$ & $-$\\
$\sin(\varphi_2-\varphi_1)\sin(\varphi_3-\varphi_1)$& $+$ & $+$ & $-$\\
$\sin(\varphi_3-\varphi_2)\sin(\varphi_1-\varphi_2)$& $-$ & $+$ & $+$\\
$\sin(\varphi_1-\varphi_3)\sin(\varphi_2-\varphi_3)$& $+$ & $-$ & $+$\\
\hline
\hline
 & $\varphi_2>\varphi_3>\varphi_1$& $\varphi_3>\varphi_1>\varphi_2$ 
&$\varphi_3>\varphi_2>\varphi_1$ \\
\hline
$\sin(\varphi_2-\varphi_1)$& $+$ & $-$ & $+$\\
$\sin(\varphi_3-\varphi_1)$& $+$ & $+$ & $+$\\
$\sin(\varphi_3-\varphi_2)$& $-$ & $+$ & $+$\\
$\sin(\varphi_2-\varphi_1)\sin(\varphi_3-\varphi_1)$& $+$ & $-$ & $+$\\
$\sin(\varphi_3-\varphi_2)\sin(\varphi_1-\varphi_2)$& $+$ & $+$ & $-$\\
$\sin(\varphi_1-\varphi_3)\sin(\varphi_2-\varphi_3)$& $-$ & $+$ & $+$\\
\hline
  \end{tabular}
\end{table}
}}

Accordingly\revs{,} we check the behavior of the branch point and intersection $\varphi_a=\varphi_b$, and the initial condition as follows.

\subsection{At the branch points}\label{ssec:5.2}

Since the one-forms in $\cC \cV_1 d s$ are holomorphic one-forms, it is trivial that the orbit exists at the branch points.
However\revs{,} it is important to show how the orbit behaves there, and thus we consider it.
We assume $k_1 > k_2 >k_3$ and let $\varphi_\fb:=\varphi_\fb^+$ simply. 
We consider the behavior of the one-forms at the branch point, $\varphi = \pm\varphi_\fb$ here:
\begin{lemma}\label{lm:ss5.2}
Let \revs{$\varphi = \pm(\varphi_\fb-\ft^2)$ and $\tvarphi :=\ft^2$.}
At $\varphi = \pm\varphi_\fb$,
$$
\nuI{1}=\frac{2 \ee^{-2\ii\varphi_\fb}d\ft}{8 K_\fb}, \quad
\nuI{2}=\frac{2\ii\ee^{-\ii\varphi_\fb}\sin(\varphi_\fb)d\ft}{4 K_\fb},
 \quad
\nuI{3}=\frac{-2 \sin^2(-\varphi_\fb)d\ft}{K_\fb}, 
$$
where $K_\fb:=\tgamma\tK_\fb(\varphi)$, 
$$\displaystyle{
\tK_\fb(\varphi):=\frac{\sqrt{
\xi_\fb(t)(1\pm k_1 \sin \varphi)(1-k_2^2 \sin^2 \varphi)
(1-k_3^2 \sin^2 \varphi)}}
{k_1k_2k_3}},
$$
$\revs{\xi_\fb}(t) := \displaystyle{(k_1 \cos\varphi_\fb)+\frac{1}{2!}t^2
-\frac{k_1}{3!}\cos\varphi_c t^4+\cO(t^5)}$ and $\tgamma = \pm 1$.
\end{lemma}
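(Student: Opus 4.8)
The strategy is to desingularise the hyperelliptic radical at the branch point by a square-root change of parameter and then substitute directly. By construction $\varphi=\varphi_\fb$ is a simple zero of the factor $(1-k_1^2\sin^2\varphi)$ occurring in (\ref{4eq:HEcurve_phi}) (recall $\sin\varphi_\fb=1/k_1=\beta_1$ under Assumption \ref{Asmp}), i.e. one of the branch points of $\hX$; so the natural local uniformiser is $\ft$ with $\varphi=\pm(\varphi_\fb-\ft^2)$, $\tvarphi:=\ft^2$. I would proceed in three steps: (i) expand $1\mp k_1\sin\varphi$ in powers of $\ft$ to split off the factor $\ft^2\,\xi_\fb(\ft)$; (ii) deduce $\tK(\varphi)=|\ft|\,\tK_\fb(\varphi)$ with $\tK_\fb$ a unit near $\ft=0$; (iii) substitute $d\varphi=\mp2\ft\,d\ft$ and $K(\varphi)=\ft\,K_\fb$ into the one-forms of (\ref{eq:3.3_1stdiff}).

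For step (i), writing $\sin\varphi=\sin(\varphi_\fb\mp\ft^2)=\sin\varphi_\fb\cos(\ft^2)\mp\cos\varphi_\fb\sin(\ft^2)$ and using $k_1\sin\varphi_\fb=1$ gives $1\mp k_1\sin\varphi=1-\cos(\ft^2)+k_1\cos\varphi_\fb\sin(\ft^2)$; inserting $1-\cos(\ft^2)=\tfrac1{2!}\ft^4-\cdots$ and $\sin(\ft^2)=\ft^2-\tfrac1{3!}\ft^6+\cdots$ yields
$$
1\mp k_1\sin\varphi=\ft^2\Bigl(k_1\cos\varphi_\fb+\tfrac1{2!}\ft^2-\tfrac{k_1}{3!}\cos\varphi_\fb\,\ft^4+\cdots\Bigr)=\ft^2\,\xi_\fb(\ft),
$$
which is exactly the $\xi_\fb$ of the statement (so the $\varphi_c$ there is $\varphi_\fb$), with $\xi_\fb(0)=k_1\cos\varphi_\fb=k_1\alpha_1\neq0$. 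For step (ii), factoring $1-k_1^2\sin^2\varphi=(1-k_1\sin\varphi)(1+k_1\sin\varphi)$, exactly one factor equals $\ft^2\xi_\fb(\ft)$ and the other tends to $2$; hence $\tK(\varphi)=|\ft|\,\tK_\fb(\varphi)$ with $\tK_\fb$ the displayed radical, which is holomorphic and non-zero at $\ft=0$ since under Assumption \ref{Asmp} ($k_1>k_2>k_3>1$) the surviving factors $1-k_2^2/k_1^2$, $1-k_3^2/k_1^2$ and $1\pm k_1\sin\varphi_\fb=2$ are all non-zero.

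For step (iii), since $\tgamma\,|\ft|=\pm\ft$, the sign $\tgamma\,\mathrm{sgn}(\ft)$ is reabsorbed into a fresh sheet label, so $K(\varphi)=\ft\,K_\fb$ with $K_\fb=\tgamma\tK_\fb$, $\tgamma=\pm1$; then in each $\nuI{i}$ of (\ref{eq:3.3_1stdiff}) the $\ft$ coming from $d\varphi=\mp2\ft\,d\ft$ cancels the $\ft$ in $K(\varphi)$, so $\nuI{i}$ is holomorphic and non-vanishing in $\ft$ at the branch point, with leading coefficient the value of the remaining slowly varying factor at $\ft=0$ (that is, $\varphi\to\pm\varphi_\fb$); this reproduces the displayed list, up to the overall sheet sign and the common numerical normalisation hidden in $K_\fb$. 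Consequently the differential equation (\ref{eq:g3CIII}) has a regular, non-degenerate orbit through the branch point, which is the point of the lemma. The only genuine obstacle is the bookkeeping of the several independent signs — the sheet label $\tgamma$, $\mathrm{sgn}(\ft)$, the $\pm$ distinguishing the two branch points $\pm\varphi_\fb$, and the paper's convention of carrying $-\nuI{i}$ — all of which collapse into the one free sign of $\tK_\fb$ inside $K_\fb$; everything else is a routine Taylor expansion.
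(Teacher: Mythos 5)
Your proposal is correct and follows essentially the same route as the paper: the paper's own proof consists precisely of your step (i), expanding $1\mp k_1\sin\varphi$ with $\sin\varphi=\sin\varphi_\fb\cos(\ft^2)\mp\cos\varphi_\fb\sin(\ft^2)$ and $k_1\sin\varphi_\fb=1$ to extract the factor $\ft^2\,\xi_\fb(\ft)$, leaving the substitution $d\varphi=\mp2\ft\,d\ft$ and the cancellation of $\ft$ against $K=\ft K_\fb$ implicit. Your steps (ii) and (iii) merely spell out what the paper leaves to the reader, so no further comment is needed.
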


\begin{proof}
$(1\mp k_1\sin\varphi)=(1\mp k_1(\sin\varphi_\fb -\tvarphi\cos\varphi_\fb +\cO(\tvarphi^2))$. $d \tvarphi = -2 \ft d\ft$.
Further, noting $k_1 = \sin\varphi_\fb$, we have
\begin{equation}
\begin{split}
1-k_1 \sin \varphi 
&= 1 - k_1 (\sin \varphi_\fb \cos \ft^2 - \cos \varphi_\fb \sin \ft^2)\\
&= \ft^2 \left(k_\fb\cos \varphi_\fb (1-\frac{1}{3!}\ft^4)
+\frac{1}{2!} \ft^2+\cO(\ft^6)\right).
\end{split}
\end{equation}
Thus we have $\xi_\fb(\ft)$. \qed
\end{proof}
Then the one forms in $\hJ_X$ at the point $\varphi_1 =\pm\varphi_\fb$ are given as follows.

\begin{lemma} \label{4lm:dudphi3b}
For $\varphi_j \in (\varphi_\fb^-,\varphi_\fb^+)$, $(j=2,3)$,  and \revs{$\ft=\sqrt{\varphi_\fb\mp \varphi_1}$}, let $K_j:=K(\varphi_j)$, $j= 2, 3$ at $\varphi_1 =\pm \varphi_\fb$. 
The following holds:

\noindent
$$\displaystyle{
\begin{pmatrix} d u_1 \\ d u_2\\ du_3\end{pmatrix}
=
\begin{pmatrix}
\frac{2 \ee^{-2\ii\varphi_1}}{8 K_\fb}&
\frac{ \ee^{-2\ii\varphi_2}}{8 K_2}&
\frac{ \ee^{-2\ii\varphi_3}}{8 K_3}\\
\frac{2\ii \ee^{-\ii\varphi_1}\sin(\varphi_1)}{4 K_\fb}&
\frac{\ii \ee^{-\ii\varphi_2}\sin(\varphi_2)}{4 K_2}&
\frac{\ii \ee^{-\ii\varphi_3}\sin(\varphi_3)}{4 K_3}\\
\frac{-2 \sin^2(\varphi_1)}{2 K_\fb}&
\frac{- \sin^2(\varphi_2)}{2 K_2}&
\frac{- \sin^2(\varphi_3)}{2 K_3}\\
\end{pmatrix}
\begin{pmatrix} d \ft \\ d \varphi_2 \\d \varphi_3\end{pmatrix}
}.
$$
Let the matrix be denoted by $\cL_\fb$.
\revs{
Then the determinant of $\cL_\fb$ is given by
$$
\det(\cL)=\displaystyle{
\frac{2\sin(\varphi_1-\varphi_2)\sin(\varphi_2-\varphi_3)\sin(\varphi_3-\varphi_1)}{4^3 K_\fb K_2 K_3}}.
$$
}
\end{lemma}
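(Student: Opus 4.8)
The plan is to reduce Lemma \ref{4lm:dudphi3b} to a direct substitution into Lemma \ref{4lm:dudphi}, using the local expansion of the holomorphic one-forms at a branch point already established in Lemma \ref{lm:ss5.2}. Away from the branch locus we have $du=\cL\,d\varphi$ with the entries of $\cL$ built from $\nuI{i}^{(j)}$ in (\ref{eq:3.3_1stdiff}); since $\varphi_2,\varphi_3$ stay in the open arc $(\varphi_\fb^-,\varphi_\fb^+)$ nothing degenerates there, so columns $j=2,3$ of $\cL_\fb$ are literally the columns of $\cL$. Thus the only thing that has to be checked is the first column, i.e.\ the column carrying $\varphi_1\to\pm\varphi_\fb$.

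For that column I would substitute the local parameter $\ft=\sqrt{\varphi_\fb\mp\varphi_1}$, so that $\varphi_1=\pm(\varphi_\fb-\ft^2)$ and $d\varphi_1=\mp 2\ft\,d\ft$. Lemma \ref{lm:ss5.2} shows that the vanishing factor $\sqrt{1\mp k_1\sin\varphi_1}$ inside $\tK(\varphi_1)$ behaves like $\ft\sqrt{\xi_\fb(\ft)}$ with $\xi_\fb$ a unit near $\ft=0$, hence $K(\varphi_1)=\ft\,K_\fb$ with $K_\fb=\tgamma\tK_\fb(\varphi_1)$ finite and nonzero. The factor $\ft$ in the denominator cancels the $\ft$ produced by $d\varphi_1=\mp2\ft\,d\ft$, and the surviving $\mp2$ is precisely the $2$ multiplying each entry of the first column. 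This reproduces the three one-forms $\nuI{1},\nuI{2},\nuI{3}$ displayed in Lemma \ref{lm:ss5.2}, so the first column of $\cL_\fb$ is ${}^{t}\!\big(2\ee^{-2\ii\varphi_1}/8K_\fb,\ 2\ii\ee^{-\ii\varphi_1}\sin\varphi_1/4K_\fb,\ -2\sin^2\varphi_1/2K_\fb\big)$ evaluated at $\varphi_1=\pm\varphi_\fb$, which is exactly what the statement asserts.

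For the determinant I would expand $\det\cL_\fb$ along that first column. Because the first column equals $2$ times the corresponding column of $\cL$ with $K_1$ replaced by $K_\fb$, and the other two columns are unchanged, every cofactor term inherits the single factor $2$, so $\det\cL_\fb=2\,\det\cL\big|_{K_1\to K_\fb}$. Plugging in the value of $\det\cL$ from Lemma \ref{4lm:dudphi} yields $\det\cL_\fb=2\sin(\varphi_1-\varphi_2)\sin(\varphi_2-\varphi_3)\sin(\varphi_3-\varphi_1)/(4^3K_\fb K_2K_3)$, as stated.

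The one genuinely delicate point — and the reason one invokes Lemma \ref{lm:ss5.2} rather than a one-line substitution — is the cancellation of the branch-point singularity: one must be sure that $\sqrt{1\mp k_1\sin\varphi_1}$ has a simple zero in $\ft$ (not a fractional or higher-order one) and that the prefactor $\xi_\fb(\ft)$ stays bounded away from $0$ on the relevant range, so that $K_\fb$ is a legitimate nonvanishing datum and $\cL_\fb$ is finite and well-defined at $\varphi_1=\pm\varphi_\fb$. Once this is granted, everything reduces to the same linear-algebra computation as in the proof of Lemma \ref{4lm:dudphi}, and the proof is complete.
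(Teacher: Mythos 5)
Your proposal is correct and follows essentially the route the paper intends: the lemma is stated without an explicit proof precisely because it is the direct substitution of the branch-point expansion of Lemma \ref{lm:ss5.2} into the first column of the matrix of Lemma \ref{4lm:dudphi}, with the determinant then following by column-multilinearity exactly as you argue. The only cosmetic caveat is the sign bookkeeping (your $\mp 2$ versus the stated $2$, and the overall minus sign present in $\cL$ but absent in $\cL_\fb$), which the paper itself absorbs into the choice of $\tgamma$ and $K_\fb$.
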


\begin{lemma} \label{4lm:dudphi5}
For $\varphi_j \in (\varphi_\fb^-,\varphi_\fb^+)$, $(j=2,3)$, and $\ft=\sqrt{\mp \varphi_1 -\varphi_\fb}$ at $\varphi_1 =\pm \varphi_\fb$, we have
$$
\displaystyle{
\begin{pmatrix} d \ft_1 \\ d \varphi_2 \\d \varphi_3\end{pmatrix}
=\cK_\fb \cM_\fb
\begin{pmatrix} d u_1 \\ d u_2\\ du_3\end{pmatrix}
},
$$
where
$
\displaystyle{
\cK_\fb
:=
\begin{pmatrix}
\frac{ K_\fb/2}{\sin(\varphi_2-\varphi_1)\sin(\varphi_3-\varphi_1)}&0& 0 \\
0& \frac{K_2}{\sin(\varphi_3-\varphi_2)\sin(\varphi_1-\varphi_2)}&0 \\
0&0&\frac{K_3}{\sin(\varphi_1-\varphi_3)\sin(\varphi_2-\varphi_3)} 
\end{pmatrix}
}
$ and,
{\small{
$$
\displaystyle{
\cM_\fb
:=
\begin{pmatrix}
8  \sin\varphi_2\sin\varphi_3&
-4\ii(2\ii \sin\varphi_2\sin\varphi_3 - \sin(\varphi_2+\varphi_3) )&
-2\ee^{-\ii (\varphi_2+\varphi_3)} \\
8  \sin\varphi_1\sin\varphi_3&
-4\ii(2\ii \sin\varphi_1\sin\varphi_3 - \sin(\varphi_3+\varphi_1) )&
-2 \ee^{-\ii (\varphi_1+\varphi_3)} \\
8 \sin\varphi_1\sin\varphi_2&
-4\ii(2\ii \sin\varphi_1\sin\varphi_2 - \sin(\varphi_1+\varphi_2) )&
-2 \ee^{-\ii (\varphi_1+\varphi_2)} \\
\end{pmatrix}
}.
$$
}}
\end{lemma}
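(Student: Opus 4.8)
The plan is not to invert $\cL_\fb$ from scratch, but to derive this identity from the already-established inversion $\cL^{-1}=\cK\cM$ of Lemma~\ref{4lm:dudphi3} together with the branch-point expansion of Lemma~\ref{lm:ss5.2}. First I would spell out the relation between the matrix $\cL_\fb$ of Lemma~\ref{4lm:dudphi3b} and the generic matrix $\cL$ of Lemma~\ref{4lm:dudphi}. Their second and third columns coincide, being functions of $\varphi_2,\varphi_3$ only; the first column of $\cL_\fb$ equals $2$ times the first column of $\cL$ after the substitution $K_1\mapsto K_\fb$. This factor of $2$ is exactly the content of Lemma~\ref{lm:ss5.2}: writing $\varphi_1=\pm(\varphi_\fb-\ft^2)$ gives $d\varphi_1=\mp 2\ft\, d\ft$, while $1\mp k_1\sin\varphi_1=\ft^2\,\xi_\fb(\ft)$ extracts a single factor $\ft$ out of $\tK(\varphi_1)$; the two factors $\ft$ cancel, leaving a regular $K_\fb$ together with the residual constant $2$ (the complementary factor $1\pm k_1\sin\varphi_1\to 2$ at $\varphi_1=\varphi_\fb$, since $k_1\sin\varphi_\fb=1$, being absorbed into $\tK_\fb$). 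Hence, up to the sign conventions in the two displays, $\cL_\fb=\varepsilon\,\cL\big|_{K_1\to K_\fb}\,\mathrm{diag}(2,1,1)$ with $\varepsilon\in\{\pm 1\}$; this is consistent with the determinant recorded in Lemma~\ref{4lm:dudphi3b}, which is $2$ times $\det(\cL)\big|_{K_1\to K_\fb}$ in absolute value.

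Given this, the inversion is immediate: $\cL_\fb^{-1}=\varepsilon\,\mathrm{diag}(\tfrac12,1,1)\,(\cK\cM)\big|_{K_1\to K_\fb}$ by Lemma~\ref{4lm:dudphi3}. Since $\cM$ contains no $K_j$ and $\cK$ is diagonal, $\mathrm{diag}(\tfrac12,1,1)\,(\cK\cM)\big|_{K_1\to K_\fb}=\big(\mathrm{diag}(\tfrac12,1,1)\,\cK\big|_{K_1\to K_\fb}\big)\,\cM$, and the parenthesised diagonal matrix is obtained from $\cK$ by replacing its $(1,1)$ entry $-K_1/\big(\sin(\varphi_2-\varphi_1)\sin(\varphi_3-\varphi_1)\big)$ by $-(K_\fb/2)/\big(\sin(\varphi_2-\varphi_1)\sin(\varphi_3-\varphi_1)\big)$ and leaving the other two entries untouched. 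Matching the overall sign — this is where $\varepsilon$ gets pinned down, using that $K_\fb$ already carries the sheet sign $\tgamma_1$ — reproduces precisely the matrix $\cK_\fb$ of the statement, while $\cM$ is literally $\cM_\fb$ with $\varphi_1$ specialised to $\pm\varphi_\fb$. Therefore $\cL_\fb^{-1}=\cK_\fb\cM_\fb$, as asserted, and the domain hypotheses ($\varphi_j\in(\varphi_\fb^-,\varphi_\fb^+)$ for $j=2,3$, together with $\varphi_1=\pm\varphi_\fb$ distinct from $\varphi_2,\varphi_3$) are exactly what keeps all the denominators $\sin(\varphi_i-\varphi_j)$ nonzero, so that $\cL_\fb$ is invertible and the formula is valid.

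The only genuinely delicate point — and the step I expect to be the main obstacle — is the sign and branch bookkeeping, not the linear algebra: one has to track the overall minus that appears in the display of Lemma~\ref{4lm:dudphi} but not in that of Lemma~\ref{4lm:dudphi3b}, the sheet sign $\tgamma_1$ absorbed into $K_\fb$, and the interaction of the two signs $\pm$ in $\varphi_1=\pm\varphi_\fb$ with the convention $\varphi_\fb=\varphi_\fb^+$ fixed by Assumption~\ref{Asmp}, all consistently with the expansion of Lemma~\ref{lm:ss5.2}. A self-contained alternative that sidesteps the comparison entirely is to verify $\cL_\fb\,\cK_\fb\,\cM_\fb=I_3$ directly, expanding the product and collapsing it with the product-to-sum identities and $\cos 2\varphi=1-2\sin^2\varphi$ exactly as in the ``straightforward computation'' underlying Lemma~\ref{4lm:dudphi3}, now specialised to $\varphi_1=\pm\varphi_\fb$.
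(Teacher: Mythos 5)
Your proposal is correct, and it takes a genuinely different route from the paper. The paper's entire proof of this lemma is the single sentence that $\cK_\fb\cM_\fb$ is the inverse matrix of $\cL_\fb$ --- i.e.\ it relies on exactly the direct verification you describe only as your fallback alternative. Your primary argument instead factors $\cL_\fb$ as the generic $\cL$ of Lemma~\ref{4lm:dudphi} (with $K_1\mapsto K_\fb$) times $\mathrm{diag}(2,1,1)$, justified by $d\varphi_1=\mp 2\ft\,d\ft$ together with the extraction of a single factor $\ft$ from $\tK(\varphi_1)$ in Lemma~\ref{lm:ss5.2}, and then inverts using the already-proved $\cL^{-1}=\cK\cM$ of Lemma~\ref{4lm:dudphi3}. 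This is more informative than the paper's one-liner: it explains structurally why the $(1,1)$ entry of $\cK_\fb$ carries the extra factor $1/2$ while $\cM_\fb$ is literally $\cM$, and it reuses the generic inversion instead of recomputing a $3\times 3$ inverse. You are also right that the only delicate point is sign bookkeeping: the displays of Lemma~\ref{4lm:dudphi} and Lemma~\ref{4lm:dudphi3b} differ by an overall minus that the paper leaves uncommented, and your $\varepsilon\in\{\pm1\}$ together with the determinant comparison (a factor $2$ in absolute value, with the $(-1)^3$ from negating a $3\times3$ matrix accounted for) is the correct way to pin it down consistently with the sheet sign $\tgamma_1$ absorbed into $K_\fb$. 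Both routes establish the statement; yours additionally localizes the one nontrivial analytic input to the branch-point expansion.
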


\revs{
\begin{proof}
$\cK_\fb \cM_\fb$ is the inverse matrix of $\cL_\fb$.
\end{proof}
}

We consider the neighborhood at $\varphi_1 =\varphi_\fb$.
For $\varphi_1 < \varphi_\fb$ and $d\varphi_{1,\rr}>0$, $\varphi_1$ moves to $\varphi_\fb$.
Since $\ft^2= \varphi_\fb-\varphi_1$ and we are interested in the real orbit $\ft$, we have $2 \ft d\ft_\rr = d \varphi_{1, \rr}$.
Thus, $\ft$ monotonically increases from $\ft<0$ for $s$. For $\ft=0$, $\varphi_1=\varphi_\fb$ and $\ft$ increases. 
For $\ft>0$, $\varphi_1$ decreases.
In other words, $d \varphi_{1, \rr}$ reverses the direction due to the sign of $K_\fb$, since the sign of $\ft$ corresponds to the sign of $K_\fb=\tgamma_1\tK_\fb$.
We note that the quadrature is done on the hyperelliptic curve $\hX$, and the leaf of $\hX$ as a double cover of $\revs{\PP^1}$ causes this situation.

It is easy to find the following lemma.

\begin{lemma}\label{4pr:reality_g3b}
Corresponding to (\ref{eq:g3CIII}), 
let
\begin{equation}
\begin{pmatrix} d \ft_{1,\rr} \\ d \varphi_{2,\rr} \\
d \varphi_{3,\rr}\end{pmatrix}
=\begin{pmatrix}
\frac{K_\fb\tgamma_1\cos(\varphi_2-\varphi_3)}{\sin(\varphi_2-\varphi_1)\sin(\varphi_3-\varphi_1)}\\
\frac{2K_2\tgamma_2\cos(\varphi_3-\varphi_1)}{\sin(\varphi_3-\varphi_2)\sin(\varphi_1-\varphi_2)}\\
\frac{2K_3\tgamma_3\cos(\varphi_1-\varphi_2)}{\sin(\varphi_1-\varphi_3)\sin(\varphi_2-\varphi_3)}
\end{pmatrix}d s
\label{eq:g3CIIIb}
\end{equation}
we have 
$$
\begin{pmatrix} ds/2 \\ -d s/2\\ d s\end{pmatrix}
=\cL_\fb\, \cC
\begin{pmatrix} d \ft_\rr \\ d \varphi_{2,\rr} \\
d \varphi_{3,\rr}\end{pmatrix}.
$$
\end{lemma}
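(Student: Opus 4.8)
The statement is the branch-point counterpart of Proposition \ref{4pr:reality_g3} (equivalently of Lemma \ref{lm4.1}): it asserts that the reality relation between the real plane $\langle\be_1,\be_2\rangle_\RR$ of (\ref{eq:basis_in_C3}) and $T^*_\gamma S^3\hX$ survives when, near $\varphi_1=\pm\varphi_\fb$, the Jacobian matrix $\cL$ of Lemma \ref{4lm:dudphi} is replaced by the branch-point matrix $\cL_\fb$ of Lemma \ref{4lm:dudphi3b} and $d\varphi_{1,\rr}$ by $d\ft_{1,\rr}$ with $\ft=\sqrt{\varphi_\fb\mp\varphi_1}$. The plan is to verify the displayed vector identity by the same kind of direct matrix computation that proves Lemma \ref{lm4.1}, now using $\cK_\fb\cM_\fb=\cL_\fb^{-1}$ (Lemma \ref{4lm:dudphi5}) and $\cC^{-1}=\cC$.

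First I would recast the claim: multiplying the asserted identity $\begin{pmatrix}ds/2\\-ds/2\\ds\end{pmatrix}=\cL_\fb\,\cC\,{}^t(d\ft_\rr,d\varphi_{2,\rr},d\varphi_{3,\rr})$ on the left by $\cC\cL_\fb^{-1}=\cC\cK_\fb\cM_\fb$ turns it into ${}^t(d\ft_\rr,d\varphi_{2,\rr},d\varphi_{3,\rr})=\cC\cK_\fb\cM_\fb\,\be_1\,ds$, with $\be_1={}^t(1/2,-1/2,1)$. Since $\cK_\fb$ is diagonal, the $i$-th component of the right-hand side equals $\tgamma_i(\cK_\fb)_{ii}(\cM_\fb\be_1)_i$, so it suffices to evaluate three scalar products from the explicit $\cK_\fb,\cM_\fb$ of Lemma \ref{4lm:dudphi5}. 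For $i=2,3$ the relevant rows of $\cL_\fb,\cK_\fb,\cM_\fb$ agree — up to the overall sign distinguishing the $\cL$- and $\cL_\fb$-conventions — with those of $\cL,\cK,\cM$, so these components are just the one-forms $d\varphi_{2,\rr},d\varphi_{3,\rr}$ already obtained in Proposition \ref{4pr:reality_g3}, now with $\varphi_1$ set to $\pm\varphi_\fb$. For $i=1$ one computes $(\cM_\fb\be_1)_1$ from the first row of $\cM_\fb$: the $\sin\varphi_2\sin\varphi_3$ contributions cancel, the phase factor $\ee^{-\ii(\varphi_2+\varphi_3)}$ kills the imaginary part, and multiplying the surviving trigonometric term by $(\cK_\fb)_{11}=K_\fb/\bigl(2\sin(\varphi_2-\varphi_1)\sin(\varphi_3-\varphi_1)\bigr)$ and $\tgamma_1$ (using $\tgamma_1K_\fb=\tK_\fb$) produces the first entry $d\ft_{1,\rr}$ of (\ref{eq:g3CIIIb}). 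Under Assumption \ref{Asmp} one has $\varphi_j\in[\varphi_\fb^-,\varphi_\fb^+]$ with $k_1>k_2>k_3>1$, hence $1-k_a^2\sin^2\varphi_j\ge0$ and $\tK_j,\tK_\fb\in\RR$, so all entries of (\ref{eq:g3CIIIb}) are genuine real one-forms, as in the non-branch case.

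The one point that really needs care — and the expected main obstacle — is checking that $\cL_\fb$ is the correct local substitute for $\cL$ at $\varphi_1=\pm\varphi_\fb$: there $\tK(\varphi_1)$ vanishes and $\cL$ itself degenerates, while the substitution $\varphi_1=\pm(\varphi_\fb-\ft^2)$ simultaneously supplies the factor $d\varphi_1=\mp2\ft\,d\ft$ in the first column and the local behaviour $\tK(\varphi_1)\sim\ft\,\tK_\fb$, the two powers of $\ft$ cancelling so that the first column of $\cL_\fb$, and hence the first entry of (\ref{eq:g3CIIIb}), stays finite and nonzero; in particular the factor $1/2$ in $(\cK_\fb)_{11}$ must balance the doubled first column of $\cL_\fb$. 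This bookkeeping is exactly what Lemmas \ref{lm:ss5.2}, \ref{4lm:dudphi3b} and \ref{4lm:dudphi5} encode and may be invoked; granting them, the remainder is the same finite trigonometric computation that underlies Lemma \ref{lm4.1}, with the phase factors in the last column of $\cM_\fb$ conspiring to leave purely real expressions.
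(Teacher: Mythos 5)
Your proposal is correct and is essentially the paper's own argument: the paper offers no written proof of this lemma (it is prefaced only by ``It is easy to find the following lemma''), and the intended justification is exactly the direct matrix computation you describe, i.e.\ reducing the identity via $\cL_\fb^{-1}=\cK_\fb\cM_\fb$ and $\cC^{-1}=\cC$ to the componentwise evaluation already carried out for Lemma \ref{lm4.1} and Proposition \ref{4pr:reality_g3}, with the branch-point bookkeeping $\ft^2=\varphi_\fb\mp\varphi_1$, $d\varphi_1=\mp 2\ft\,d\ft$, $\tK(\varphi_1)\sim\ft\,\tK_\fb$ supplied by Lemmas \ref{lm:ss5.2}--\ref{4lm:dudphi5}. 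Your identification of the $\ft$-cancellation and the balancing factor $1/2$ in $(\cK_\fb)_{11}$ as the only genuinely new point relative to the regular case is exactly right.
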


\begin{lemma}\label{4lm:branch}
The quadrature of (\ref{eq:g3CIII}) is well defined at the branch point $\varphi_\fb^+$ and $\varphi_\fb^-$.
\end{lemma}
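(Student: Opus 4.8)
The plan is to reduce the assertion to the classical existence-and-uniqueness theorem for ordinary differential equations with real-analytic right-hand side, after passing to the local coordinate that resolves the square-root branching of $\tK$ at the branch point. Using Assumption \ref{Asmp} I would treat the branch point $\varphi_\fb=\varphi_\fb^+$ and suppose it is $\varphi_1$ that reaches it; the cases where $\varphi_2$ or $\varphi_3$ hits a branch point, and the case of $\varphi_\fb^-$, follow by permuting the indices and by the sign change already built into Lemma \ref{lm:ss5.2}. Note that a branch-point collision means $\varphi_1=\varphi_\fb$, not $\varphi_1=\varphi_j$, so the denominators $\sin(\varphi_1-\varphi_j)$ appearing in (\ref{eq:g3CIII}) stay bounded away from $0$ under the running hypothesis $\varphi_i\neq\varphi_j$.

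First I would introduce the coordinate $\ft_1$ of Lemma \ref{lm:ss5.2} at the branch point, so that $\varphi_1$ is a real-analytic function of $\ft_1^2$ and $d\varphi_1=-2\ft_1\,d\ft_1$. Lemma \ref{lm:ss5.2} shows that in this coordinate the pullbacks of $\nuI{1},\nuI{2},\nuI{3}$ to $\hX$ have coefficients that are real-analytic in $\ft_1$ near $\ft_1=0$: the apparent pole of $1/\tK(\varphi_1)$ is cancelled exactly, because $\tK(\varphi_1)=\ft_1\,\tK_\fb(\varphi_1)$ with $\tK_\fb(\varphi_\fb)$ finite and nonzero --- indeed $\sin\varphi_\fb=1/k_1\in(0,1)$ forces $\cos\varphi_\fb\neq0$, and $k_2^2/k_1^2,\ k_3^2/k_1^2<1$ keep the remaining factors under the root positive. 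Consequently the transition matrix $\cL_\fb$ of Lemma \ref{4lm:dudphi3b} has real-analytic entries near $\ft_1=0$, and its determinant, which equals $2\sin(\varphi_1-\varphi_2)\sin(\varphi_2-\varphi_3)\sin(\varphi_3-\varphi_1)/(4^3 K_\fb K_2 K_3)$, is finite and nonzero there; hence the inverse $\cK_\fb\cM_\fb$ of Lemma \ref{4lm:dudphi5} is real-analytic near $\ft_1=0$ as well.

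Next I would rewrite the quadrature (\ref{eq:g3CIII}) in the coordinates $(\ft_1,\varphi_2,\varphi_3)$. By Lemma \ref{4pr:reality_g3b} it becomes precisely the real system (\ref{eq:g3CIIIb}), whose right-hand side --- a fixed combination of the entries of $\cC\,\cK_\fb\cM_\fb$ --- is real-analytic near $\ft_1=0$ by the previous paragraph (the coefficients $K_\fb\tgamma_1$ and $K_j\tgamma_j$ there reduce to $\tK_\fb$ and $\tK_j$, which are finite and nonzero at the configurations considered, over products of sines bounded away from $0$). The existence theorem for analytic ordinary differential equations then produces a unique real-analytic solution $(\ft_1(s),\varphi_2(s),\varphi_3(s))$ through any point with $\ft_1=0$; this is exactly the assertion that the orbit of (\ref{eq:g3CIII}) is well defined across $\varphi_\fb$. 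Transforming back, $\varphi_1(s)=\varphi_\fb-\ft_1(s)^2$ is real-analytic in $s$, and since $2\ft_1\,d\ft_{1,\rr}=d\varphi_{1,\rr}$ with $d\ft_{1,\rr}$ of a fixed sign near $\ft_1=0$, the orbit reaches $\varphi_\fb$, turns around, and continues with $\ft_1$ --- hence $\tgamma_1$, hence the sheet of $\hX$ over $\PP^1$ --- changing sign, which is the turning behavior described just before the lemma. The formula (\ref{eq:g3CIIIi}) for $\partial_{u_3}\varphi_{a,\ri}$ inherits the same real-analyticity from $\cK_\fb\cM_\fb$. The case in which two of the $\varphi_a$ meet \emph{distinct} branch points at the same value of $s$ is handled by introducing both local coordinates simultaneously.

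The step I expect to be the main obstacle is the verification, via Lemma \ref{lm:ss5.2}, that the right-hand side of (\ref{eq:g3CIIIb}) is genuinely real-analytic --- not merely bounded --- at $\ft_1=0$: this is where the substitution $\varphi_1=\varphi_\fb-\ft_1^2$ must undo the square-root branching of $\tK$ exactly, and where the sheet sign $\tgamma_1$ must be tracked so that the continued solution lands on the correct leaf of $\hX$. Once the vector field is recognized as analytic, the ODE existence theorem closes the argument; permuting indices, the $\varphi_\fb^-$ case, and simultaneous collisions on distinct branch points are routine.
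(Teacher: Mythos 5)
Your proposal is correct and follows essentially the same route as the paper: Lemma \ref{4lm:branch} is stated there without a separate proof because it is meant to be the summary of Subsection 5.2, namely the local coordinate $\ft$ of Lemma \ref{lm:ss5.2} resolving the square-root branching of $\tK$, the regularity and invertibility of $\cL_\fb$ and $\cK_\fb\cM_\fb$ in Lemmas \ref{4lm:dudphi3b} and \ref{4lm:dudphi5}, the rewritten quadrature (\ref{eq:g3CIIIb}) of Lemma \ref{4pr:reality_g3b}, and the turning-around/sheet-change discussion of $\tgamma_1$ --- exactly the ingredients you assemble. Your only addition is to make explicit the appeal to the existence theorem for analytic ODEs, which the paper leaves implicit in the remark that the one-forms in $\cC\cV_1\,ds$ are holomorphic so the orbit trivially exists.
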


\subsection{Intersection: real part}\label{ssec:5.3}

We consider the behavior at the intersection of two orbits $\varphi_a(s)$ and $\varphi_b(s)$, $(a\neq b)$ but three points $\varphi_a$ $(a=1,2,3)$ do not coincide simultaneously.
We note that the divisor of the meromorphic function $\tx-x$ as a function with respect to $\tx$ is $(x,y)+(x,-y)$.
The Abel theorem shows that the Abelian integral of $(x,y)+(x,-y)$ must lie in periodic lattice points \cite{FarkasKra}.  
The intersection means that we consider the integral $\displaystyle{\int_{(x,y)}^{(x,-y)} \nuI{}}$, which must be the value associated with the period of the lattice in the Jacobi variety due to the Abel theorem.
It may be considered as {\lq}an algebraic collapse\rq.
Hence the intersection in $S^3 X$ is crucial.
Even though it generates the collapse as $\partial_{s} \psi_\rr$, the behavior of $\psi_\rr$ is not so bad, and the solution of (\ref{eq:g3CIII}) is defined as an orbit in $[\varphi_\fb^-, \varphi_\fb^+]^2$ as in the following results.

\begin{lemma}\label{4lm:intersection}
The quadrature of (\ref{eq:g3CIII}) as the differential equation for $\hX$ is also defined at the crossing point $\varphi_a = \varphi_b$ for $a\neq b$, i.e.,
we define the orbit in $s$-$(\varphi_1, \varphi_2, \varphi_3)$-space as $\RR \times S^3 [\varphi_\fb^-, \varphi_\fb^+]$ unless $\varphi_1=\varphi_2=\varphi_3$.
\end{lemma}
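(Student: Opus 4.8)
The plan is to recognize that the apparent poles of the one-form $\cC\cV_1\,ds$ in (\ref{eq:g3CIII}) along the diagonals $\varphi_a=\varphi_b$ are artifacts of the coordinates $(\varphi_1,\varphi_2,\varphi_3)$, which cease to be a chart on $S^3\hX$ at the diagonal, and to show that in an honest local chart of the symmetric product the right-hand side of (\ref{eq:g3CIII}) extends to a regular real-analytic vector field, so the orbit through a crossing point is pinned down by the usual existence-and-uniqueness theorem for ODEs. By the invariance of $\cV_1$ and $\cC$ under permutations of the indices I would reduce to the case $\varphi_2=\varphi_3=:\varphi_\ast$ with $\varphi_1\neq\varphi_\ast$ and $\varphi_\ast\notin\{\varphi_\fb^+,\varphi_\fb^-\}$, the branch case being covered by Lemma \ref{4lm:branch}. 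Two subcases arise according to how the crossing is realized on $\hX$: (i) the two colliding points coincide as points of $\hX$ (same sheet, $\tgamma_2=\tgamma_3$), a genuine diagonal point of $S^3\hX$; (ii) they are exchanged by the hyperelliptic involution of $\hX$, i.e.\ $x_2=x_3$ and $y_2=-y_3$, which is the \emph{algebraic collapse} mentioned before the lemma.

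For case (i) I would pass to the elementary-symmetric coordinates of the colliding cluster, writing $\sigma:=\varphi_2+\varphi_3$ and $\delta:=(\varphi_2-\varphi_3)^2$; since $\varphi$ is a local uniformizer on $\hX$ away from branch points, $(\varphi_1,\sigma,\delta)$ is a holomorphic chart near the diagonal point. Rewriting (\ref{eq:g3CIII}) in this chart, the decisive point is that the simple poles $1/\sin(\varphi_2-\varphi_3)$ carried with opposite residues by the second and third entries of $\cC\cV_1$ cancel in $d\sigma=\big((\cC\cV_1)_2+(\cC\cV_1)_3\big)\,ds$ — at leading order because $\tK(\varphi_2)\to\tK(\varphi_3)$ and $\cos(\varphi_2-\varphi_1)=\cos(\varphi_1-\varphi_2)$ — while the surviving simple pole of $(\cC\cV_1)_2-(\cC\cV_1)_3$ is annihilated by the prefactor $2(\varphi_2-\varphi_3)$ in $d\delta=2(\varphi_2-\varphi_3)\big((\cC\cV_1)_2-(\cC\cV_1)_3\big)\,ds$. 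Thus the right-hand side of (\ref{eq:g3CIII}) is real-analytic across $\{\delta=0\}$ in the $(\varphi_1,\sigma,\delta)$ chart, and the orbit is its unique integral curve through the crossing; back in $\varphi$-coordinates this says $\varphi_2,\varphi_3$ pick up a square-root dependence on $s$ and simply interchange through the crossing, exactly as $\ft=\sqrt{\varphi_\fb-\varphi_1}$ behaves at a branch point in \S\ref{ssec:5.2}. The same conclusion follows invariantly: since $P_1\neq P_\ast$ and $P_1$ is not the $\hX$-conjugate of $P_\ast$, the divisor $2P_\ast+P_1$ is non-special ($h^0=1$ on the genus-five curve $\hX$), so the Abel map $\tv\colon S^3\tX\to\CC^3$ is a local biholomorphism there and the pull-back of the constant direction $\be_1\,ds$ tangent to $L_0$ is a nowhere-vanishing holomorphic vector field near the crossing, again giving a well-defined orbit.

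For case (ii), $P_2+P_3$ lies in the $g^1_2$ of $\hX$, so $D=P_1+P_2+P_3$ is special ($h^0\geq2$) and $\tv$ collapses the pencil through $D$; now the extra sign $\tgamma_2=-\tgamma_3$ makes the residues in $d\sigma$ add rather than cancel, so $\varphi_2+\varphi_3$ would acquire a logarithmic singularity in $s$. But $\varphi_2+\varphi_3$ must stay in the bounded interval $[2\varphi_\fb^-,2\varphi_\fb^+]$, so an orbit confined to $[\varphi_\fb^-,\varphi_\fb^+]^3$ cannot approach a type-(ii) crossing at all — before reaching it, one of $\varphi_2,\varphi_3$ meets a branch point and reverses its sheet, as enforced by the sign bookkeeping of Table \ref{1tb:Table1} in the proof of Theorem \ref{th:solgMKdV_R}. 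Hence only crossings of type (i) occur along the orbit, and (\ref{eq:g3CIII}), equivalently (\ref{eq:g3CIIIi}) via the same chart (cf.\ Lemma \ref{4lm:Im_intersection}), is well defined there; this is what the clause ``unless $\varphi_1=\varphi_2=\varphi_3$'' records, the triple-collision locus requiring the depth-three symmetric chart and lying outside the scope here.

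I expect the main obstacle to be the verification that the cancellation in $d\sigma$ for case (i) is \emph{exact}, i.e.\ that after the leading simple poles cancel no residual logarithmic term survives; the robust way to see this is to deduce regularity abstractly from the holomorphy of $\tv^{-1}$ at the non-special divisor $2P_\ast+P_1$, and only then read off the explicit series from it, rather than trusting a truncated expansion. A secondary point to nail down is that the orbit of Theorem \ref{th:solgMKdV_R} genuinely avoids configuration (ii) except where already excluded, which uses the boundedness of the $\varphi_a$ together with the branch-point sheet rule.
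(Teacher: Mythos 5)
Your proposal splits the crossing into two genuinely different situations on $\hX$ — the diagonal case $\tgamma_a=\tgamma_b$ and the conjugate case $\tgamma_a=-\tgamma_b$ — and this is a reasonable and even clarifying way to organize the problem; your treatment of the diagonal case via the symmetric chart $(\sigma,\delta)$ and the non‑specialty of $2P_\ast+P_1$ is sound and cleaner than anything in the paper, which does not discuss that sub‑case explicitly. The difficulty is that the entire content of the paper's proof lies in the case you discard. The paper's proof opens by positing two colliding points with $K_1>0$ and $K_2<0$, i.e.\ hyperelliptic conjugates on $\hX$ — exactly your case (ii) — and the discussion immediately preceding the lemma (the ``algebraic collapse'', the integral from $(x,y)$ to $(x,-y)$ being a period) makes clear that these conjugate crossings are precisely the ones the lemma is meant to legitimize. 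You dispose of them by asserting that the orbit can never reach such a configuration, but that assertion is nowhere proved: the claim that a branch point is always met first is not derived from the dynamics, and Table~\ref{1tb:Table1} only records signs of denominators, not a barrier separating the orbit from the conjugate locus.

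Moreover, the inference on which your avoidance argument rests — that the non‑cancelling simple pole in $d(\varphi_a+\varphi_b)/ds$ forces a logarithmic singularity of $\varphi_a+\varphi_b$ in $s$ — presupposes that the separation $\varphi_a-\varphi_b$ vanishes linearly in $s-s_0$. That rate is exactly what the local analysis has to determine, and it cannot be assumed: when \emph{both} velocity components carry the pole $1/(\varphi_a-\varphi_b)$, one gets $ds\propto(\varphi_a-\varphi_b)\,d(\varphi_a-\varphi_b)$, hence $s-s_0$ quadratic in the separation as in (\ref{4eq:deta})--(\ref{5eq:17}), the pole integrates to something finite, and no logarithm appears — the orbit closes the gap like $\sqrt{|s-s_0|}$ and continues after the two sheets exchange roles. (Indeed, at a conjugate pair the two relevant columns of $\cL$ are negatives of one another, so the two Cramer numerators for $d\varphi_a$ and $d\varphi_b$ coincide at the collision and both components share the same leading pole; whether the surviving pole sits in the sum or in the difference is a delicate sign question that your truncated-residue argument does not settle.) So to prove the lemma along your lines you must replace the avoidance claim by an actual local expansion of (\ref{eq:g3CIII}) at a conjugate pair and show that the orbit extends as a curve in $\RR\times S^{2}[\varphi_\fb^-,\varphi_\fb^+]$ despite the divergence of the individual velocities — which is what the paper's proof does.
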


\begin{proof}
Let us assume that two orbits $(\varphi_1, K_1>0)$ with the positive direction $d\varphi_{1,\rr}$ and $(\varphi_2, K_2<0)$ with the negative direction $d\varphi_{2,\rr}$ intersect at $\varphi_0$ and $s=s_0$ as $\hvarpi_x : \hX \to \revs{\PP^1}$, i.e.,
$d \varphi_{1,\rr}>0$, $d \varphi_{2,\rr} < 0$, and $\varphi_2> \varphi_1$ for $s< s_0$.
We consider the intersection point $\varphi_0$ in $(s,\varphi)$-plane.
Then we introduce the real local parameters $\eta_1$ and $\eta_2$ such that $\varphi_1 = \varphi_0 + \eta_1$, $\eta_1 \in (-\varepsilon, \varepsilon)$ and $\varphi_2 = \varphi_0 - \eta_2$, $\eta_2 \in (-\varepsilon, \varepsilon)$ for $1 \gg \varepsilon > 0$, i.e., $d\eta_1>0$ and $d \eta_2>0$.
Let $-\tgamma_1=\tgamma_2=1$, and thus we have
$K_1 = K_0 + \partial_{\varphi}K_0 \eta_1 + o(\eta_1)$ and
$K_2 = -K_0 + \partial_{\varphi}K_0 \eta_2 + o(\eta_2)$, where
$\displaystyle{K':=\frac{\partial K(\varphi)}
{\partial \varphi}}$ is equal to
$$
-\frac{\sin(2\varphi)(3(k_1k_2k_3)^2\sin(\varphi)^4 - 2(k_1^2k_2^2+k_1^2k_3^2+k_2^2k_3^2)\sin(\varphi)^2+ (k_1^2 + k_2^2 + k_3^2))}{2K(\varphi)}.
$$
Hence $\displaystyle{\begin{pmatrix}
d \varphi_{1,\rr}\\
d \varphi_{2, \rr}
\end{pmatrix}}$ is given by 
\begin{gather*}
\begin{split}
\begin{pmatrix}
d \eta_1\\
-d \eta_2
\end{pmatrix}
&=\begin{pmatrix}
\frac{-K_1\cos(\varphi_2-\varphi_3)d s}{\sin(\varphi_2-\varphi_1)\sin(\varphi_3-\varphi_1)}\\
\frac{-K_2\cos(\varphi_1-\varphi_3)d s}{\sin(\varphi_1-\varphi_2)\sin(\varphi_3-\varphi_2)}\\
\end{pmatrix}\\
&=\left(\begin{matrix}
\frac{-(K_0+K'_0\eta_1)\cos(\varphi_0-\eta_2-\varphi_3)d s}{-\sin(\eta_2+\eta_1)\sin(\varphi_3-\varphi_0-\eta_1)} +d_{>0}(\eta_1, \eta_2)\\
\frac{(K_0-K_0'\eta_2)\cos(\varphi_0+\eta_1-\varphi_3)d s}{\sin(\eta_1+\eta_2)\sin(\varphi_3-\varphi_0+\eta_2)}+d_{>0}(\eta_1, \eta_2)\\
\end{matrix}\right)\\
&=\left(\begin{matrix}
\frac{(K_0\cos(\varphi_0-\varphi_3)+K'_0\cos(\varphi_0-\varphi_3)\eta_1+K_0\sin(\varphi_0-\varphi_3)\eta_2) d s}
{(\eta_2+\eta_1)\sin(\varphi_3-\varphi_0)}(1+\frac{\cos(\varphi_3-\varphi_0)\eta_1}{\sin(\varphi_3-\varphi_0)})\\
-\frac{(K_0\cos(\varphi_0-\varphi_3)-K_0'\cos(\varphi_0-\varphi_3)\eta_2-K_0\sin(\varphi_0-\varphi_3)\eta_1)d s}{(\eta_1+\eta_2)\sin(\varphi_3-\varphi_0)}
(1-\frac{\cos(\varphi_3-\varphi_0)\eta_2}{\sin(\varphi_3-\varphi_0)})\\
\end{matrix}\right. \\
&\quad\qquad \left.\begin{matrix}
 +d_{>0}(\eta_1, \eta_2)\\
+d_{>0}(\eta_1, \eta_2)\\
\end{matrix}\right)\\
\end{split}
\end{gather*}
\begin{gather*}
\begin{split}
&=\left(\begin{matrix}
\frac{K_0\cos(\varphi_0-\varphi_3)d s}{(\eta_2+\eta_1)\sin(\varphi_3-\varphi_0)}
\left(1+
\left(\frac{K'_0\cos(\varphi_0-\varphi_3)}{K_0\cos(\varphi_0-\varphi_3)}
+\frac{\cos(\varphi_3-\varphi_0)}{\sin(\varphi_3-\varphi_0)}\right)\eta_1
+\frac{\sin(\varphi_0-\varphi_3)}{\cos(\varphi_0-\varphi_3)}\eta_2\right) \\
-\frac{K_0\cos(\varphi_0-\varphi_3)d s}{(\eta_1+\eta_2)\sin(\varphi_3-\varphi_0)}
\left(1-
\left(\frac{K_0'\cos(\varphi_0-\varphi_3)}{K_0\cos(\varphi_0-\varphi_3)}
+\frac{\cos(\varphi_3-\varphi_0)}{\sin(\varphi_3-\varphi_0)}\right)\eta_2
-\frac{\sin(\varphi_0-\varphi_3)}{\cos(\varphi_0-\varphi_3)}
\eta_1
\right)\\
\end{matrix}\right. \\
&\quad\qquad \left.\begin{matrix}
 +d_{>0}(\eta_1, \eta_2)\\
+d_{>0}(\eta_1, \eta_2)\\
\end{matrix}\right).\\
\end{split}
\end{gather*}
Here $d_{>\ell}(t_1, t_2)$ denotes an element in the formal power series $\CC[[t_1, t_2]]$ whose smallest degree is $\ell+1$, and similarly $d_{>\ell}(t)$ for $\CC[[t]]$.
They can be expressed as
\begin{equation}
\begin{split}
&(\eta_1+\eta_2)(1-\fb_1 \eta_1-\fb_2 \eta_2 +d_{>0}(\eta_1, \eta_2))
d\eta_1= -\fa d s,\\
&(\eta_1+\eta_2)(1+\fb_1 \eta_2+\fb_2 \eta_1 +d_{>0}(\eta_1, \eta_2))
d\eta_2= -\fa d s,
\end{split}
\label{4eq:deta}
\end{equation}
where
$$
\fa = \frac{K_0 \cos(\varphi_0-\varphi_3)}{\sin(\varphi_0-\varphi_3)}(>0),
\quad
\fb_1=\frac{K'_0}{K_0}+\frac{\cos(\varphi_3-\varphi_0)}{\sin(\varphi_3-\varphi_0)},
\quad
\fb_2=\tan(\varphi_0-\varphi_3)
$$
$$
\frac{d \eta_2}{d \eta_1}= (1-2 \fb_1 \eta_1 - 2 \fb_2\eta_2) 
+d_{>0}(\eta_1, \eta_2)).
$$
We note $\eta_2(0)=0$.
We substitute the expansion $\eta_2= \eta_1+\fc_1 \eta_1^2 +d_{>2}(\eta_1)$ into the equation,
$$
1 + 2\fc_1 \eta_1 = 1 -2 \fb_1 \eta_1 -2 \fb_2 \eta_1 +d_{>1}(\eta_1).
$$
We have
$\eta_2 = \eta_1 -(\fb_1 + \fb_2) \eta_1^2+ d_{>2}(\eta_1)$.
Thus the behavior of the $s$ at the crossing point $s_0$,
\begin{gather}
s-s_0= -\frac{1}{\fa_1}\eta_1^2+ \frac{(\fb_1+\fb_2)}{\fa_1} \eta_1^3 + d_{>3}(\eta_1).
\label{5eq:17}
\end{gather}

We note that $\eta_2$ is defined as the function of $\eta_1$ for $(-\varepsilon, \varepsilon)$.
Further since we consider $\varphi_1,\varphi_2$ as a point in the symmetric product $S^2 [\varphi_\fb^-, \varphi_\fb^+] \subset S^3 [\varphi_\fb^-, \varphi_\fb^+]$ rather than $ [\varphi_\fb^-, \varphi_\fb^+]^2$, we can exchange $\eta_1$ and $\eta_2$ freely.
Thus, since for $s < s_0$, we consider $\varphi_1<\varphi_2$, for $s>s_0$, we consider $\varphi_1 > \varphi_2$ by swapping these roles or swapping each direction.
Although $\partial_{u_3} \varphi_{i,\rr}$ seems to diverge at the point $s=s_0$ due to $\displaystyle{\frac{d \eta_i}{d s}=\frac{\fa_1}{2\eta_1}(1+d_{>0}(\eta_1))}$, we can find the orbit in $s$-$(\varphi_1, \varphi_2)$ as $\RR \times S^2 [\varphi_\fb^-, \varphi_\fb^+]$.
Even for the case, we have
$$
d\varphi_{1,\rr} +d\varphi_{2,\rr} = d (\eta_1-\eta_2) = [2(\fb_1 + \fb_2) \eta_1+d_{>1}(\eta_1, \eta_2)] d \eta_1.
$$
$d (\varphi_{1,\rr} +\varphi_{2,\rr}) /d s $ diverges at the point, we can define the orbit as a curve in $s-(\varphi_{1,\rr} +\varphi_{2,\rr})$ plane.
\end{proof}

\subsection{Intersection: imaginary part}\label{ssec:5.4}

Similarly, we consider the case of the imaginary part $\partial_{u_3} \varphi_{a,\ri}$ using the same setting as in the previous subsection.

\begin{lemma}\label{4lm:Im_intersection}
For the quadrature of (\ref{eq:g3CIII}), $\psi_{\ri}$ is also defined at the crossing point $\varphi_a = \varphi_b$ for $a\neq b$.
\end{lemma}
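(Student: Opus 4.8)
The plan is to run the same local analysis as in the proof of Lemma~\ref{4lm:intersection}, observing that the formula (\ref{eq:g3CIIIi}) for $\partial_{u_3}\varphi_{a,\ri}$ carries exactly the same denominators $\sin(\varphi_b-\varphi_a)\sin(\varphi_c-\varphi_a)$ as the real-part quadrature (\ref{eq:g3CIII}), only with the numerators changed from $\cos(\varphi_b-\varphi_c)$ to $\sin(\varphi_b+\varphi_c)$. Hence the single point where (\ref{eq:g3CIIIi}) can fail to be defined is a crossing $\varphi_a=\varphi_b$, and everything reduces to showing that the would-be simple pole cancels in the symmetric sum $\partial_{u_3}\psi_\ri=2(\partial_{u_3}\varphi_{1,\ri}+\partial_{u_3}\varphi_{2,\ri}+\partial_{u_3}\varphi_{3,\ri})$ that actually enters the gauge field.

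First I would fix the same local picture as in Lemma~\ref{4lm:intersection}: say $\varphi_1(s),\varphi_2(s)\to\varphi_0$ while $\varphi_3(s)\to\varphi_3(s_0)\neq\varphi_0$ as $s\to s_0$, set $\varphi_1=\varphi_0+\eta_1$, $\varphi_2=\varphi_0-\eta_2$ with $-\tgamma_1=\tgamma_2=1$, $K_1=K_0+K_0'\eta_1+o(\eta_1)$, $K_2=-K_0+K_0'\eta_2+o(\eta_2)$, $K_0>0$, and reuse from there that $\eta_2=\eta_1-(\fb_1+\fb_2)\eta_1^2+d_{>2}(\eta_1)$ and $s-s_0=-\eta_1^2/\fa+d_{>3}(\eta_1)$, so that the continued orbit is a curve depending smoothly on the uniformizing parameter $\eta_1$ (and the exchange $\eta_1\leftrightarrow\eta_2$ across $s_0$ is harmless since $(\varphi_1,\varphi_2)$ is a point of $S^2[\varphi_\fb^-,\varphi_\fb^+]$). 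In (\ref{eq:g3CIIIi}) only the first two components are singular at $s_0$, each carrying a factor $1/\sin(\varphi_1-\varphi_2)=\pm 1/\sin(\eta_1+\eta_2)$; the third has denominator $\sin(\varphi_1-\varphi_3)\sin(\varphi_2-\varphi_3)\to\sin^2(\varphi_0-\varphi_3(s_0))\neq0$ and is already continuous there.

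The key step is to add the first two components over the common denominator and Taylor-expand in $\eta_1,\eta_2$. At $\eta_1=\eta_2=0$ the numerators $\sin(\varphi_2+\varphi_3)$ and $\sin(\varphi_1+\varphi_3)$ agree, the nonsingular $\varphi_3$-factors agree, and the two $K$-coefficients become equal (both $\to-K_0$) by the opposite-sheet relation $\tgamma_1=-\tgamma_2$ inherited from Lemma~\ref{4lm:intersection}; since $\sin(\varphi_2-\varphi_1)$ and $\sin(\varphi_1-\varphi_2)$ enter with opposite signs, the $1/(\eta_1+\eta_2)$ pole cancels. A short expansion parallel to the one producing (\ref{4eq:deta}), with $\sin(\varphi_b+\varphi_c)$ in place of $\cos(\varphi_b-\varphi_c)$, then shows the surviving bracket is $O(\eta_1+\eta_2)$ with matching first-order coefficients in $\eta_1$ and $\eta_2$, so dividing by $\sin(\eta_1+\eta_2)=(\eta_1+\eta_2)+d_{>2}(\eta_1+\eta_2)$ leaves a finite continuous limit as $\eta_1\to0$. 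Consequently $\partial_{u_3}\varphi_{1,\ri}+\partial_{u_3}\varphi_{2,\ri}$, hence $\partial_{u_3}\psi_\ri$ and the gauge field $\tA$ of Theorem~\ref{pr:solgMKdV}, extend continuously across the crossing, i.e.\ $\psi_\ri$ is well defined there.

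The main obstacle is precisely this pole cancellation: one must keep careful track of which leaf of $\hX\to\PP^1$ each colliding branch of the orbit lies on (the signs $\tgamma_a$), because it is exactly the opposite-sheet relation $\tgamma_1=-\tgamma_2$, i.e.\ $K_1\to K_0$ and $K_2\to-K_0$, that forces the leading singular coefficients of the two terms to cancel; without it the sum would genuinely diverge. A secondary point, as in the real-part case, is that $\partial_{u_3}\psi_\ri$ need only be shown bounded and continuous as a function of $s$ (equivalently of $\eta_1$), not smooth — indeed $ds\sim-(2/\fa)\,\eta_1\,d\eta_1$ degenerates at $s_0$ — but this suffices for $\psi_\ri$ and the gauge potential to be well defined at the intersection.
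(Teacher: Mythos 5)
Your proposal is correct and follows essentially the same route as the paper's proof: the same local parameters $\eta_1,\eta_2$ and opposite-sheet signs inherited from Lemma \ref{4lm:intersection}, the same observation that only the two colliding components of (\ref{eq:g3CIIIi}) carry the $1/\sin(\eta_1+\eta_2)$ pole, and the same cancellation of that pole in the sum $\partial_{u_3}\varphi_{1,\ri}+\partial_{u_3}\varphi_{2,\ri}$ after substituting $\eta_2(\eta_1)$. The only difference is that the paper carries the expansion one step further to obtain the explicit value $\eta_1+d_{>1}(\eta_1)$, which vanishes at the crossing, whereas you stop at finiteness and continuity; this is immaterial for the conclusion.
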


\begin{proof}
\begin{gather*}
\begin{split}
\begin{pmatrix}
\partial_{u_3} \varphi_{1,\ri}\\
\partial_{u_3} \varphi_{2,\ri}
\end{pmatrix}
&=\begin{pmatrix}
\frac{-K_1\sin(\varphi_2+\varphi_3)}{\sin(\varphi_2-\varphi_1)\sin(\varphi_3-\varphi_1)}\\
\frac{K_2\sin(\varphi_1+\varphi_3)}{\sin(\varphi_1-\varphi_2)\sin(\varphi_3-\varphi_2)}\\
\end{pmatrix}\\
&=\begin{pmatrix}
\frac{-(K_0+K'_0\eta_1)\sin(\varphi_0-\eta_2+\varphi_3)}{-\sin(\eta_2+\eta_1)\sin(\varphi_3-\varphi_0-\eta_1)} +d_{>0}(\eta_1, \eta_2)\\
\frac{(-K_0+K_0'\eta_2)\sin(\varphi_0+\eta_1+\varphi_3)}{\sin(\eta_1+\eta_2)\sin(\varphi_3-\varphi_0+\eta_2)}+d_{>0}(\eta_1, \eta_2)\\
\end{pmatrix}.
\end{split}
\end{gather*}

By substituting $\eta_2(\eta_1)$ and $s(\eta_1)$ in the proof of Lemma \ref{4lm:intersection}into the relation, we have
\begin{equation}
\partial_{u_3}\varphi_{1,\ri} +\partial_{u_3}\varphi_{2,\ri} = 0+\eta_1+d_{>1}(\eta_1).
\label{eq:dpsi_i}
\end{equation}
It vanishes at the cross point $\eta_1=0$.
\end{proof}

\subsection{Initial condition}\label{ssec:5.5}

We obtain the configuration of $(\varphi_1, \varphi_2, \varphi_3)$ such that $\partial_{u_3} \varphi_\ri(\varphi_1, \varphi_2, \varphi_3)$ vanishes as an initial condition as follows.

\begin{lemma}\label{lmma:s5.5}
Assume that $\varphi_1=\varphi_{\fb}$ such that $K_1=K(\varphi_{\fb})=0$,
$\partial \psi_{\ri}/\partial s$ is equal to
$$
-\frac{K_2\sin(\varphi_3+\varphi_\fb)}{\sin(\varphi_3-\varphi_2)\sin(\varphi_\fb-\varphi_2)}
+\frac{K_3\sin(\varphi_\fb+\varphi_2)}{\sin(\varphi_\fb-\varphi_3)\sin(\varphi_2-\varphi_3)}=0,
$$
whose solution is $(\phi_2, K_2) =(\phi_3,K_3)$ due to (\ref{eq:dpsi_i}).
Then $\partial_{u_3} \psi_{\ri}$ vanishes at $s=0$.
\end{lemma}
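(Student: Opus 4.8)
The plan is to evaluate the quantity $\partial_{u_3}\psi_\ri = 2\partial_{u_3}(\varphi_{1,\ri}+\varphi_{2,\ri}+\varphi_{3,\ri})$, whose three summands are given explicitly by (\ref{eq:g3CIIIi}), at the prescribed initial configuration, and to exhibit the two cancellations that make it vanish. First I would note that the first row of (\ref{eq:g3CIIIi}) carries the numerator factor $K_1 = \tgamma_1\tK(\varphi_1)$, and that $\tK(\varphi_\fb) = 0$: by (\ref{4eq:HEcurve_phi}) and Lemma \ref{lm:ss5.2}, $\tK$ contains the factor $\sqrt{1-k_1^2\sin^2\varphi}$ while $\sin\varphi_\fb = 1/k_1$. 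Since we keep $\varphi_2,\varphi_3 \in (\varphi_\fb^-,\varphi_\fb^+)$ with $\varphi_2,\varphi_3 \neq \varphi_\fb$, the two denominators $\sin(\varphi_2-\varphi_\fb)$ and $\sin(\varphi_3-\varphi_\fb)$ stay nonzero, so $\partial_{u_3}\varphi_{1,\ri}$ vanishes at $\varphi_1 = \varphi_\fb$. I would cross-check this against the branch-point charts of Lemmas \ref{4lm:dudphi3b} and \ref{4lm:dudphi5}, where the finite quantity is $\partial_{u_3}\ft_{1,\ri}$ and $\varphi_1 - \varphi_\fb = \mp \ft^2$ forces $\partial_{u_3}\varphi_{1,\ri} = \mp 2\ft\,\partial_{u_3}\ft_{1,\ri} \to 0$ as $\ft\to 0$. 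Substituting $\varphi_1 = \varphi_\fb$ into the remaining two rows of (\ref{eq:g3CIIIi}) then leaves, up to the nonzero factor $2$, exactly the two-term expression displayed in the statement.

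Next I would show that this two-term expression vanishes when $(\varphi_2,K_2) = (\varphi_3,K_3)$, i.e. when $\varphi_2 = \varphi_3$ on a common sheet of $\hvarpi_x: \hX \to \PP^1$. Each of the two terms separately blows up like $1/\sin(\varphi_2-\varphi_3)$ as $\varphi_2\to\varphi_3$, but their sum is exactly the object treated in the proof of Lemma \ref{4lm:Im_intersection}: applying the relabeling $(1,2,3)\mapsto(2,3,1)$ there (so that $\{\varphi_2,\varphi_3\}$ is the crossing pair and $\varphi_1 = \varphi_\fb$ is the spectator), formula (\ref{eq:dpsi_i}) shows that $\partial_{u_3}\varphi_{2,\ri}+\partial_{u_3}\varphi_{3,\ri}$ extends continuously through the crossing and takes the value $0$ there. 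The crossing analysis goes through verbatim with $\varphi_1 = \varphi_\fb$, since it only involves $\varphi_1$ through the factors $\sin(\varphi_2-\varphi_\fb)$, $\sin(\varphi_3-\varphi_\fb)$, which are nonzero. Because $\varphi_1 = \varphi_\fb \neq \varphi_2 = \varphi_3$, we stay away from the excluded triple coincidence $\varphi_1 = \varphi_2 = \varphi_3$, so Lemmas \ref{4lm:intersection} and \ref{4lm:Im_intersection} guarantee that the orbit of (\ref{eq:g3CIII}) and the function $\psi_\ri$ are well defined at this configuration. Hence $(\varphi_1,\varphi_2,\varphi_3) = (\varphi_\fb,\varphi_*,\varphi_*)$ with $K_2 = K_3 = K(\varphi_*)$ is an admissible initial datum at $s = 0$ for which $\partial_{u_3}\psi_\ri = 0$.

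The step I expect to be the main obstacle is the interplay of the two simultaneous singular phenomena at this initial configuration: $\varphi_1$ sitting at a branch point of $\hX$ and $\varphi_2,\varphi_3$ sitting at a crossing $\varphi_2 = \varphi_3$. One must make sure that the vanishing of the first row of (\ref{eq:g3CIIIi}) and the cancellation in the remaining two rows are read off in a way independent of the order in which the limits $\varphi_1\to\varphi_\fb$ and $\varphi_2\to\varphi_3$ are taken. I would handle the first through the desingularizing chart of Lemmas \ref{4lm:dudphi3b} and \ref{4lm:dudphi5} and the second through the local expansion behind (\ref{eq:dpsi_i}), each of which already performs the relevant normalization, and only then combine the two local pictures. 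Once that is settled, the conclusion $\partial_{u_3}\psi_\ri|_{s=0} = 0$ is immediate, and the identification of $(\varphi_2,K_2) = (\varphi_3,K_3)$ as the solution merely records where, on a fixed pair of sheets, the surviving two-term sum has its zero.
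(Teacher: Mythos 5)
The paper gives no separate proof of this lemma: the argument is entirely embedded in the statement itself (set $\varphi_1=\varphi_\fb$ so that $K_1=0$ kills the first row of (\ref{eq:g3CIIIi}); then the surviving two-term sum is claimed to vanish at $\varphi_2=\varphi_3$ ``due to (\ref{eq:dpsi_i})''). Your proposal fleshes out exactly these two steps, with the same supporting lemmas (\ref{lm:ss5.2}, \ref{4lm:dudphi3b}, \ref{4lm:dudphi5} for the branch point, \ref{4lm:Im_intersection} for the crossing), so in approach you are doing precisely what the paper intends.

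One point deserves more care than you give it. Equation (\ref{eq:dpsi_i}) is derived in the proof of Lemma \ref{4lm:Im_intersection} under the hypothesis $-\tgamma_1=\tgamma_2=1$, i.e.\ for two orbits meeting at the same $\varphi_0$ but on \emph{opposite} sheets of $\hvarpi_x:\hX\to\PP^1$, so that the two $K$'s there are $K_0$ and $-K_0$; it is exactly this opposite relative sign that makes the two $1/\sin(\eta_1+\eta_2)$ poles cancel rather than add. You instead read the condition $(\varphi_2,K_2)=(\varphi_3,K_3)$ as a \emph{common-sheet} coincidence and then transfer (\ref{eq:dpsi_i}) to it ``verbatim.'' Whether the leading singular parts of the two surviving rows of (\ref{eq:g3CIIIi}) cancel or reinforce at $\varphi_2=\varphi_3$ depends precisely on this relative sheet sign (equivalently, on whether the numerators carry $\tK_j$ or $\tgamma_j\tK_j$), so the transfer is not automatic and needs the sign bookkeeping made explicit; note also that even when the poles cancel, the limit of the sum is the first Taylor coefficient of the antisymmetric numerator, and one must still check that this finite value is $0$ rather than merely finite. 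To be fair, the paper's own statement has the same ambiguity (it writes the equality $K_2=K_3$ while citing a computation performed for $K_2=-K_1$), so this is a gap you inherit rather than introduce — but a complete proof should resolve it rather than appeal to Lemma \ref{4lm:Im_intersection} under a relabeling whose hypotheses differ from the configuration at hand.
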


\section{Discussion}

We obtain real cross sections of solutions of the gauged MKdV equation (\ref{4eq:gaugedMKdV2a}).
\revs{
By noting $\cD^{-1} =(\partial t_i/\partial_j)$ in  (\ref{eq:dus_dts}), we have the correspondence of the differential operators,
\begin{equation}
\begin{pmatrix} \partial_{u_1} \\ \partial_{u_2} \\ \partial_{u_3}\end{pmatrix}
=\begin{pmatrix} 1 & 0 & 0\\ -1 & -2\ii & 0\\ -1 & -\ii & 1
\end{pmatrix}
\begin{pmatrix} \partial_{t} \\ \partial_{t_2} \\ \partial_{s}\end{pmatrix}.
\label{eq:pus_pts}
\end{equation}
Thus the cross sections does not correspond to the solutions of (\ref{4eq:gaugedMKdV2a}) directly but it must pick up the properties of the real solutions.
Hence we go on to discuss the properties of the cross section as follows.
}

If it satisfies the condition CIII in Condition \ref{cnd}, we have a cross sections of the hyperelliptic solution of the real MKdV equation (\ref{4eq:rMKdV2}).

It means the following theorem:
\begin{theorem}\label{4th:reality_g3}
$\psi_{\rr}:=2(\varphi_1+ \varphi_2+\varphi_3)$ of the quadrature $d \varphi_{i, \rr}$ $(i=1,2,3)$ of (\ref{eq:g3CIII}) is a cross section of a local solution of the MKdV equation (\ref{4eq:rMKdV2}) for the region in which $\partial_{u_3} \psi_\ri$ is constant.
\end{theorem}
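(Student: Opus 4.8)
The plan is to obtain Theorem~\ref{4th:reality_g3} as a corollary of Theorem~\ref{pr:solgMKdV} (equivalently Theorem~\ref{th:solgMKdV_R}) together with Condition~\ref{cnd}~CIII. First I would recall that, by Theorem~\ref{pr:solgMKdV}, the function $\psi_\rr = 2(\varphi_1(s)+\varphi_2(s)+\varphi_3(s))$ built from the quadrature (\ref{eq:g3CIII}) is exactly the cross-section $\psi_\rr(s) = \psi_\rr(u_2,u_3)|_{u_3 = s,\ u_2 = -s/2}$ of a two-variable function $\psi_\rr(u_2,u_3)$ which satisfies the gauged pre-MKdV equation (\ref{4eq:gaugedMKdV2a}) as a differential identity, with gauge field
$$\tA = \frac{1}{2}\left(\lambda_6 - 3 - \frac{3}{4}\bigl(\partial_{u_3}\psi_\ri\bigr)^2\right),$$
and that $\partial_{u_3}\psi_\ri = \partial_{u_3}\varphi_{1,\ri}+\partial_{u_3}\varphi_{2,\ri}+\partial_{u_3}\varphi_{3,\ri}$ is given explicitly by (\ref{eq:g3CIIIi}) in terms of $(\varphi_1,\varphi_2,\varphi_3)$ along the orbit. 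The global existence of this orbit, and the fact that $\psi_\rr$ and $\psi_\ri$ stay well-defined across the branch points and the diagonal crossings, are already secured by Theorem~\ref{th:solgMKdV_R} and Lemmas~\ref{4lm:branch}, \ref{4lm:intersection}, \ref{4lm:Im_intersection}.

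Next I would restrict attention to a region of the orbit on which $\partial_{u_3}\psi_\ri$ is constant, i.e.\ on which the right-hand side of (\ref{eq:g3CIIIi}) takes a fixed real value; there $\tA$ is a real constant, call it $\alpha$. Substituting $\tA \equiv \alpha$ into (\ref{4eq:gaugedMKdV2a}) gives
$$(\partial_{u_2} + \alpha\,\partial_{u_3})\psi_\rr + \frac{1}{8}\bigl(\partial_{u_3}\psi_\rr\bigr)^3 + \frac{1}{4}\partial_{u_3}^3\psi_\rr = 0,$$
which, under the identification $t\leftrightarrow u_2$, $s\leftrightarrow u_3$, $\alpha\leftrightarrow\tA$, is precisely the real focusing pre-MKdV equation (\ref{4eq:rMKdV2}); this is the content of Condition~\ref{cnd}~CIII. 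Hence in that region $\psi_\rr(u_2,u_3)$ is a genuine local solution of (\ref{4eq:rMKdV2}), and $\psi_\rr(s)$ is its cross-section along the real line $u_2 = -s/2$, $u_3 = s$; moreover $\partial_{u_3}\psi_\rr/2$ then solves the focusing MKdV equation (\ref{eq:MKdV1}), as noted after Condition~\ref{cnd}. The existence of at least one such configuration is underwritten by Lemma~\ref{lmma:s5.5}, which produces an initial point with $\partial_{u_3}\psi_\ri = 0$.

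The only delicate point — and the one I would state most carefully rather than compute — is the meaning of ``cross section''. As the coordinate relations (\ref{eq:dus_dts}) and (\ref{eq:pus_pts}) show, $\partial_{u_2}$ and $\partial_{u_3}$ are not tangent to the real plane $\LL_{v_0}$ but mix in the suppressed $t_2$-direction; consequently the reduction (\ref{4eq:gaugedMKdV2a})$\Rightarrow$(\ref{4eq:rMKdV2}) must be performed at the level of the two-variable meromorphic function $\psi_\rr(u_2,u_3)$, where (\ref{4eq:gaugedMKdV2a}) holds as an identity on $S^3 X$, and only afterwards evaluated along the orbit. In particular the hypothesis ``$\partial_{u_3}\psi_\ri$ is constant'' should be read as constant as a function of $(u_2,u_3)$ on the relevant region, so that $\tA$ is a bona fide constant there, not merely constant along a single curve. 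Granting this, no genuine computation remains beyond observing that $\tA$ is real whenever $\partial_{u_3}\psi_\ri$ is real, which is immediate from its formula; the substance of the statement is entirely carried by the global existence result established in Section~\ref{sec:Algorithm}.
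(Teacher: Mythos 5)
Your proposal is correct and follows essentially the same route as the paper, which presents Theorem \ref{4th:reality_g3} as an immediate consequence of Theorem \ref{pr:solgMKdV} together with Condition \ref{cnd} CIII: constancy of $\partial_{u_3}\psi_\ri$ makes the gauge field $\tA$ a real constant, reducing (\ref{4eq:gaugedMKdV2a}) to (\ref{4eq:rMKdV2}). Your closing caveat --- that ``constant'' must be read as constant in $(u_2,u_3)$ on the region, not merely along the single orbit, so that the reduction happens at the level of the two-variable identity before restriction --- is a point the paper leaves implicit, and stating it explicitly is a genuine improvement in precision rather than a deviation in method.
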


\bigskip

Although the above theorem can hold if $\partial_{u_3} \psi_\ri=$constant number $c\in \RR$, we focus on the case where $c=0$.

Here we note that (\ref{eq:g3CIII}) holds for every point in $\varphi$'s, but each sign $\tgamma$ in (\ref{eq:3.3_1stdiff}) is determined by the configurations of $(\varphi_1, \varphi_2, \varphi_3)\in S^3 [\varphi_\fb^-, \varphi_\fb^+]$, so the orbit in $S^3 \hX$ proceeds \revs{or we obtain $\gamma \in \S^3 \tX$.}
We weaken the vanishing $\partial_{u_3} \psi_\ri$ condition and replace it with the condition that the maximum of $\partial_{u_3} \psi_\ri$ is much smaller than the maximum of $\partial_s \psi_\rr$, which we check numerically.
\revs{
To estimate the magnitude of $\partial_{u_3} \psi_{\ri}$, we introduce $\displaystyle{
\psi_\ri^\circ:=\int^s \partial_{u_3} \psi_{\ri} ds}$.
}
\begin{figure}
\begin{center}
\includegraphics[width=0.44\hsize, bb= 0 0 448 273]{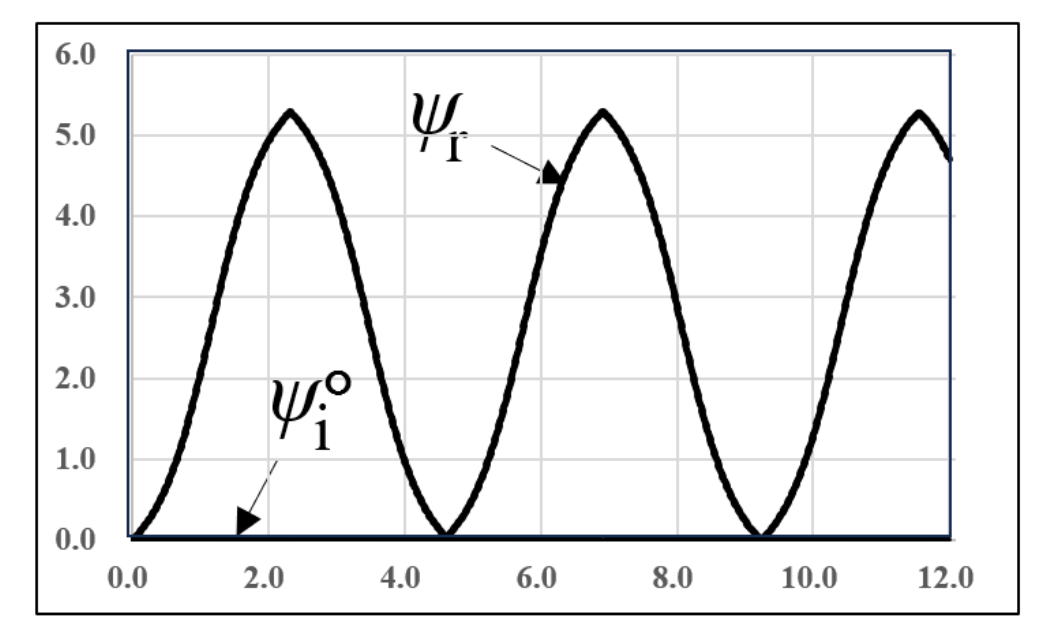}
\hskip 0.1\hsize
\includegraphics[width=0.44\hsize, bb=  0 0 446 273]{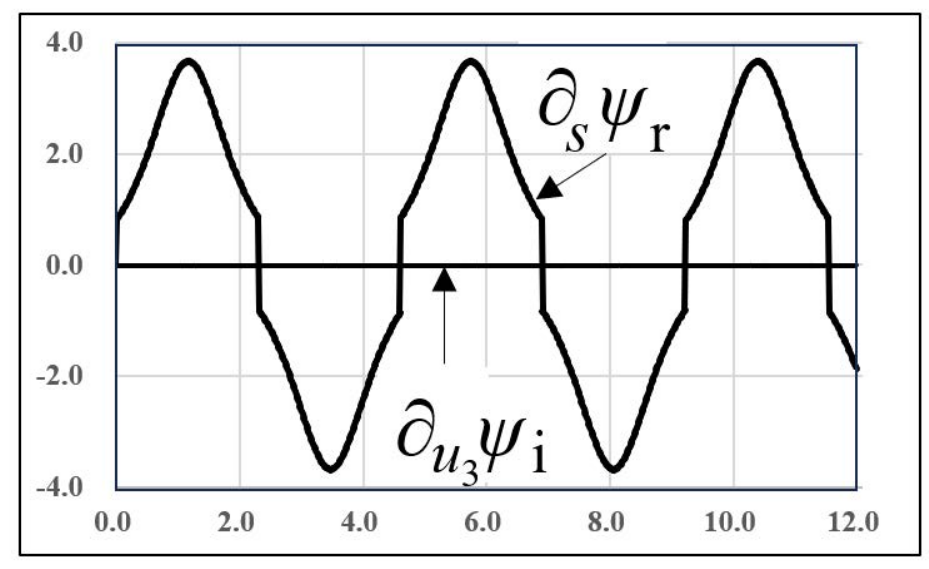}

(a) \hskip 0.35\hsize (b)
\end{center}

\caption{
The solutions of \revs{(\ref{eq:Xipnt}) for $t=0$}:
$(k_1, k_2, k_3) = (1.0400, 1.0392, 1.010)$ and the initial condition is 
$(\varphi_1, \varphi_2, \varphi_3) = (\varphi_\fb, -0.90, -0.90)$
(a): $\psi_\rr/2$ and $\varphi_{a}$, $(a = 1, 2, 3)$.
 and
(b): $\partial_s \psi_\rr$ and $\partial_{u_3} \psi_\ri$ \revs{(\ref{eq:g3CIIIi})}.
}\label{fg:shape01}
\end{figure}

\revs{
We numerically integrate the differential equation (\ref{eq:g3CIII}) following the algorithm introduced in Subsection \ref{sec:Algorithm}.
In other words, we have used the Euler's numerical quadrature method \cite{RLeV} for $\{(s, \hkappa_X\gamma(0,s))
\ |\ s \ge 0\}\subset \RR \times S^3 \tX$ and 
$\{(s, \psi_\rr((\gamma(0,s)))
\ |\ s \ge 0\}\subset \RR \times \RR$,
In other words, we will draw some graphs $(s, \psi_\rr(s))$, $(s, \psi^\circ_\ri(s))$ and others based on the graph structure $\Psi_t$ in (\ref{eq:Psit}).
}

The first result is displayed in Figure \ref{fg:shape01}.
For the hyperelliptic curve given by $(k_1, k_2, k_3) = (1.0400, 1.0392, 1.010)$, we put the initial condition $(\varphi_1, \varphi_2, \varphi_3) = (\varphi_\fb, -0.90, -0.90)$.
Figure \ref{fg:shape01} (a) displays these $\psi_\rr$ and $\psi_\ri^\circ$ in Figure \ref{fg:shape01} (b), the maximum of the absolute value $|\partial_{u_3} \psi_\ri|$ is $1.00952\times 10^{-2}$ that is 
much smaller than that of $|\partial_s \psi_\rr|$, $3.66894$.
In other words, the orbit is regarded as a cross section of the MKdV equation (\ref{4eq:rMKdV2}) rather than the gauged MKdV equation (\ref{4eq:gaugedMKdV2}).

The second result is illustrated in Figure \ref{fg:shape02}.
We used the hyperelliptic curve given by $(k_1, k_2, k_3) = (1.0260, 1.0259, 1.0008)$.
The initial condition is set by $(\varphi_1, \varphi_2, \varphi_3) = (\varphi_\fb, -0.90, -0.90)$.
Figure \ref{fg:shape02} (a) shows these $\varphi_{a}$ and $\psi_\rr/2$.
As in Figure \ref{fg:shape02} (b) and (c) show $\psi_\rr$, $\psi^\circ_\ri$. $\partial_s \psi_\rr$ and $\partial_{u_3} \psi_\ri$.
The value of $\partial_{u_3} \psi_\ri$ cannot be negligible.
However, the orbit in a certain interval that the variation of $\partial_{u_3} \psi_\ri$ is relatively small could be regarded as one of the MKdV equations (\ref{4eq:rMKdV2}).

\begin{figure}
\begin{center}

\includegraphics[width=0.50\hsize, bb=0 0 494 368]{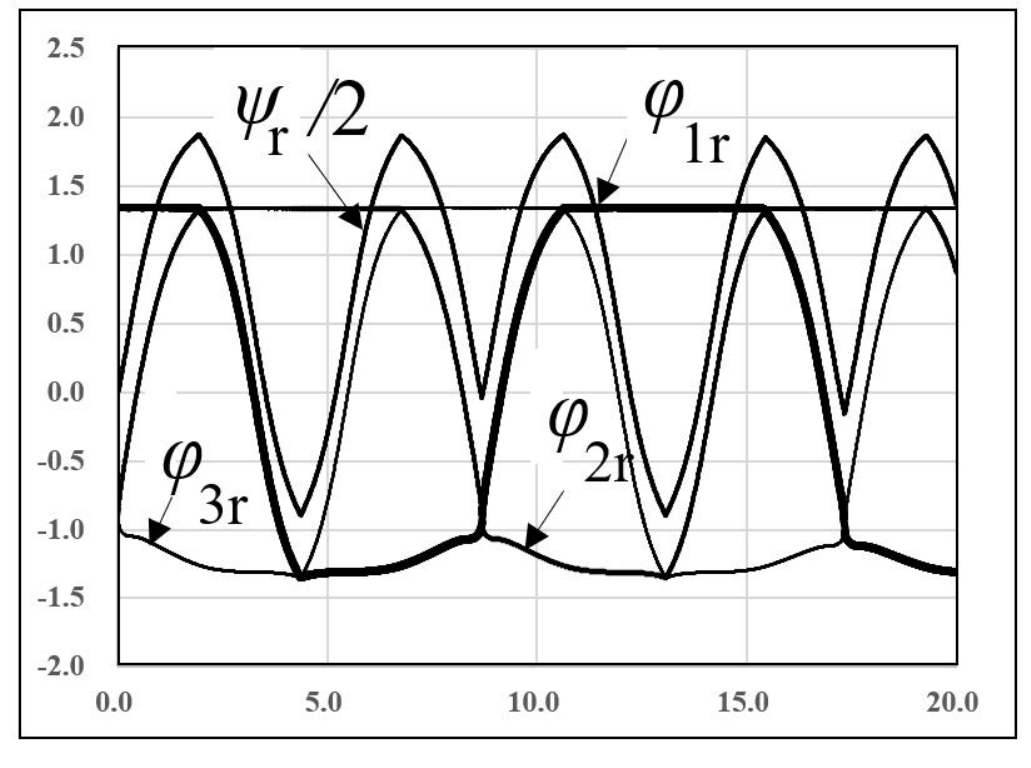}

(a)

\includegraphics[width=0.44\hsize, bb=0 0 465 316]{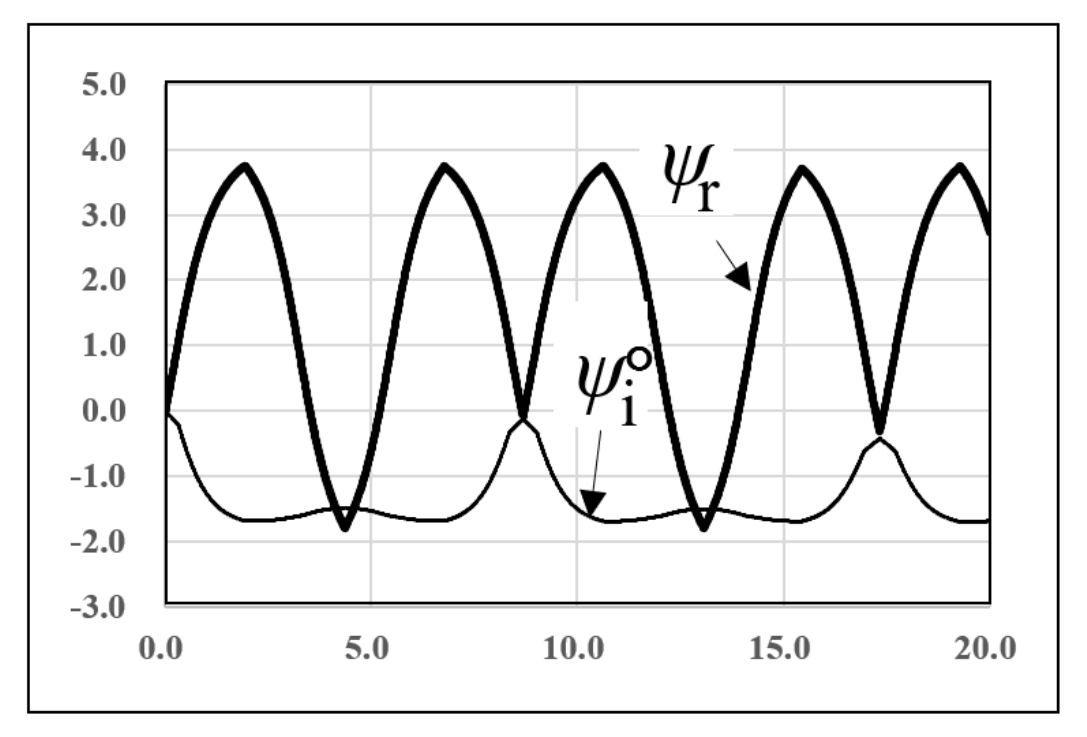}
\hskip 0.1\hsize
\includegraphics[width=0.44\hsize, bb=0 0 506 363]{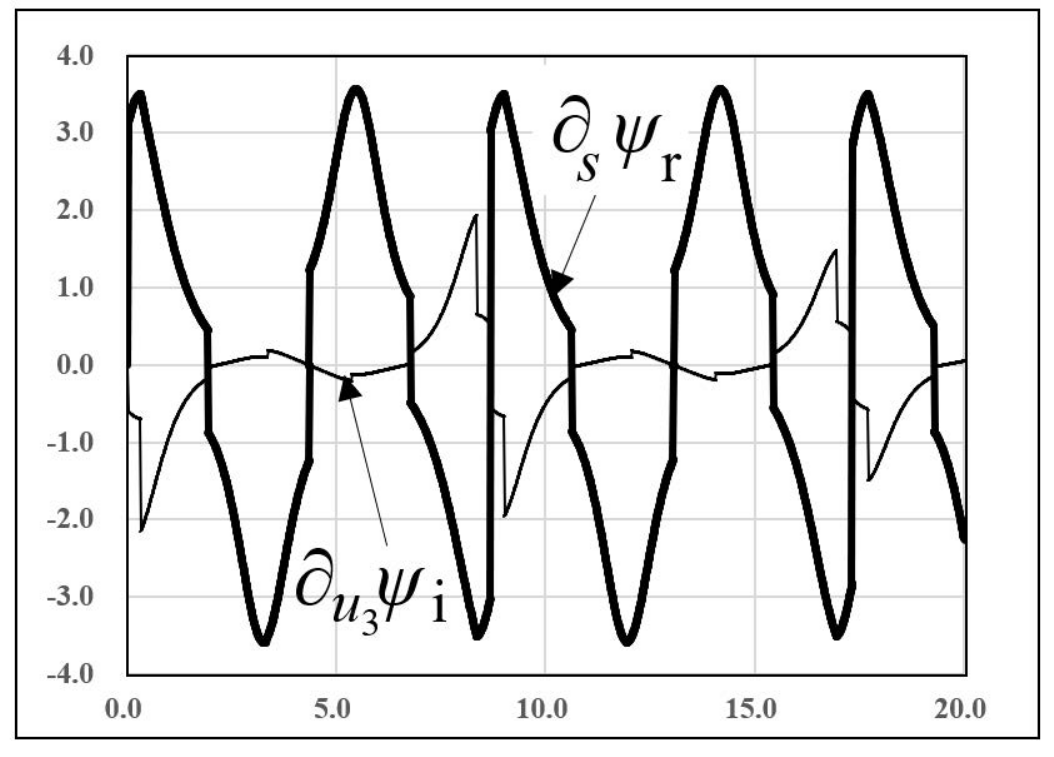}

(b) \hskip 0.35\hsize (c)

\end{center}

\caption{
\revs{The solutions of (\ref{eq:Xipnt}) for $t=0$}::
$(k_1, k_2, k_3) = (1.0260, 1.0259, 1.0008)$ and the initial condition is 
$(\varphi_1, \varphi_2, \varphi_3) = (\varphi_\fb, -0.90, -0.90)$.
(a): $\psi_\rr/2$ and $\varphi_{a}$, $(a = 1, 2, 3)$.
(b): $\partial_s \psi_\rr$ and $\partial_{u_3} \psi_\ri$.
(c) (d): $\partial_s \varphi_{i \rr}$ and $\partial_{u_3} \varphi_{i, \ri}$, $(i = 1, 2, 3)$.
}\label{fg:shape02}
\end{figure}

\section{Conclusion}

In this paper, we showed a novel algebro-geometric method to obtain the cross-section $\psi_\rr(s)$ of the solutions of the focusing gauged MKdV equation (\ref{4eq:gaugedMKdV2}) as in Theorem \ref{pr:solgMKdV} associated with the focusing MKdV equation over $\CC$.
Although on the orbits of $\psi_\rr(s)$, we encounter the branch points of $\varphi_a$ and the intersections between $\varphi_a$ and $\varphi_b$, which might look ill-posed, Theorem \ref{th:solgMKdV_R} guarantees that certain solutions of (\ref{eq:Xipnt}) are globally defined for $s \in \RR$.

Then we have used the data of the hyperelliptic curves $X$ directly  instead of  the Jacobian $J_X$. 
We note that our algebraic study of the algebraic curves on two decades \cite{BEL97b, BuMi2, Mat01a, MP22, M24} allows the such treatment.

Further, Theorem \ref{4th:reality_g3} shows that they are also regarded as \revs{cross section of} the solutions of the MKdV equation (\ref{4eq:rMKdV2}) if the $\partial_{u_3} \psi_{\ri}$ is sufficiently small.
We consider that the question of what the conditions should be for it to be small is the next issue.

\revs{
The ultimate purpose of this study is to find the focusing MKdV equation of higher genus explicitly rather than its cross section as mentioned in Introduction.
It is far from the final result, but this study shows the first step towards the goal.}
In other words, finding the real analytical solution of the focusing MKdV equation is difficult but so interesting that we should continue to study this problem.
As in \cite{Mat97, MP22, M24a}, when we obtain it, it clarifies the shapes of supercoiled DNA.
\revs{
Certainly the recent results based on this study show that the shapes of the generalized elastica reproduce the typical properties of the supercoiled DNAs observed in the laboratory \cite{M24a}.
}

\bigskip
\bigskip

\noindent
{\bf{Acknowledgment}:}
This project was started with Emma Previato 2004 in Montreal and had been collaborated until she passed away June 29, 2022.
Though the author started to step to genus three curves without her, he appreciate her contributions and suggestions which she gave him to this project during her lifetime.
Further, it is acknowledged that John McKay who passed way April 2022 invited the author and her to his private seminar in Montreal 2004 since he considered that this project \cite{Mat97} must have been related to his Monster group problem \cite{McKay, MP16}.
Thus this study is devoted to Emma Previato and John McKay.
The author thanks to Junkichi Satsuma, Takashi Tsuboi, and Tetsuji Tokihiro for inviting him to the Musashino Center of Mathematical Engineering Seminar and for valuable discussions and to Yuta Ogata, Yutaro Kabata and Kaname Matsue for helpful discussions and suggestions.
\revs{
The author is grateful to the anonymous reviewers for their helpful and valuable suggestions; he thanks for the suggestion that (\ref{4eq:rMKdV2_Ishi}) should not called the MKdV equation.
}
He also acknowledges support from the Grant-in-Aid for Scientific Research (C) of Japan Society for the Promotion of Science, Grant No.21K03289.

\end{document}